\def\bs{\boldsymbol}
\def\I{\mathbbm{1}}
\newtheorem{theorem}{Theorem}
\newtheorem{definition}{Definition}
\newtheorem{remark}{Remark}
\newtheorem{assumption}{Assumption}
\newtheorem*{assumption*}{Assumption}
\newtheorem{proposition}{Proposition}
\newtheorem{condition}{Condition}
\newtheorem{lemma}{Lemma}
\providecommand{\customgenericname}{}
\newcommand{\newcustomtheorem}[2]{%
  \newenvironment{#1}[1]
  {%
   \renewcommand\customgenericname{#2}%
   \renewcommand\theinnercustomgeneric{##1}%
   \innercustomgeneric
  }
  {\endinnercustomgeneric}
}
\renewcommand{\algocf@captiontext}[2]{#1\algocf@typo. \AlCapFnt{}#2} 
\renewcommand{\AlTitleFnt}[1]{#1\unskip}
\def\@algocf@capt@plain{top}
\renewcommand{\algocf@makecaption}[2]{%
  \addtolength{\hsize}{\algomargin}%
  \sbox\@tempboxa{\algocf@captiontext{#1}{#2}}%
  \ifdim\wd\@tempboxa >\hsize
  \hskip .5\algomargin%
  \parbox[t]{\hsize}{\algocf@captiontext{#1}{#2}}
  \else%
  \global\@minipagefalse%
  \hbox to\hsize{\box\@tempboxa}
  \fi%
  \addtolength{\hsize}{-\algomargin}%
}
\begin{document}

\def\spacingset#1{\renewcommand{\baselinestretch}%
{#1}\small\normalsize} \spacingset{1}

\sectionfont{\bfseries\large\sffamily}%

\subsectionfont{\bfseries\sffamily\normalsize}%

\title{{ Design-based nested instrumental variable analysis}
}

\author[1]{Zhe Chen}
\author[2]{Xinran Li}
\author[1]{Michael O. Harhay}
\author[3]{Bo Zhang\thanks{Assistant Professor of Biostatistics, Vaccine and Infectious Disease Division, Fred Hutchinson Cancer Center. Email: {\tt bzhang3@fredhutch.org}. }}

\affil[1]{Department of Biostatistics, Epidemiology and Informatics, University of Pennsylvania, Pennsylvania, USA}
\affil[2]{Department of Statistics, University of Chicago, Illinois, USA}
\affil[3]{Vaccine and Infectious Disease Division, Fred Hutchinson Cancer Center, Washington, USA}

\date{}
\maketitle

\begin{abstract}
Two binary instrumental variables (IVs) are \emph{nested} if individuals who comply under one binary IV also comply under the other. This situation often arises when the two IVs represent different intensities of encouragement or discouragement to take the treatment---one stronger than the other. In a nested IV structure, treatment effects can be identified for two latent subgroups: always-compliers and switchers. Always-compliers are individuals who comply even under the weaker IV, while switchers are those who do not comply under the weaker IV but do under the stronger IV. We introduce a novel pair-of-pairs nested IV (PoP-NIV) design, where each matched stratum consists of four units organized in two pairs. We develop design-based inference for the always-complier sample average treatment effect (ACO-SATE) and switcher sample average treatment effect (SW-SATE). In a nested IV analysis, IV assignment is randomized within each IV pair; however, whether a study unit receives the weaker or stronger IV may not be randomized. To address this complication, we then propose a novel \emph{partly biased randomization scheme} and study design-based inference under this new scheme. Using extensive simulation studies, we demonstrate the validity of the proposed method even in challenging scenarios with small sample sizes and a low proportion of switchers. Applying the nested IV framework, we estimated that 52.2\% (95\% CI: 50.4\%–53.9\%) of participants enrolled at the Henry Ford Health System in the Prostate, Lung, Colorectal, and Ovarian (PLCO) Cancer Screening Trial were always-compliers, while 26.7\% (95\% CI: 24.5\%–28.9\%) were switchers. Among always-compliers, flexible sigmoidoscopy was associated with a trend toward a decreased colorectal cancer rate ($-0.62\%$; 95\% CI: $-1.90\%$ to $0.65\%$). No effect was detected among switchers ($1.46\%$; 95\% CI: $-1.97\%$ to $4.91\%$). The analysis offers a richer interpretation of why no increase in the intention-to-treat effect was observed after $1997$, even though the compliance rate rose.


\end{abstract}
\vspace{0.3 cm}
\noindent
\textsf{{\bf Keywords}: Biased randomization; Design-based inference; Matching; Principal stratification}

\newpage
\spacingset{1.6}

\setlength\abovedisplayskip{2pt}
\setlength\belowdisplayskip{1pt}%

\section{Introduction}
\label{sec: intro}

\subsection{Design-based instrumental variable analysis}\label{subsec: design based IV analysis}

In many practical scenarios, the treatment and outcome of interest may be confounded by unmeasured variables. For example, in studies assessing the impact of educational attainment on future earnings, unobserved factors such as ambition and perseverance likely influence both educational attainment and earnings. These latent traits are difficult, if not impossible, to measure, posing significant challenges for valid causal inference. An instrumental variable (IV) provides a strategy to address this issue by leveraging a source of variation in the treatment that is independent of unmeasured confounders and affects the outcome only through its effect on the treatment. While direct randomization of the treatment may be infeasible in many studies, random variation in encouragement or eligibility for treatment can sometimes be exploited towards conducting valid causal inference. A classic example is provided by \citet{angrist1991does}, who used individuals’ quarter of birth as an IV to estimate the causal effect of schooling on earnings. Quarter of birth was shown to influence educational attainment---albeit modestly---and is assumed to affect earnings only through its impact on education. Crucially, unlike years of schooling, quarter of birth is essentially random.

A valid binary instrumental variable---to be formally defined later---can non-parametrically identify the treatment effect within a latent subgroup known as compliers \citep{angrist1996identification}. Furthermore, under additional assumptions that limit treatment effect heterogeneity, a valid IV may enable partial or even point identification of the treatment effect in the overall population. For a comprehensive discussion of such identification assumptions, see \citet{swanson2018partial}. Practical strategies for identifying valid IVs are reviewed in \citet{angrist2001instrumental}.

In a thought-provoking piece, \citet{imbens2005robust} introduced a randomization inference-based approach to instrumental variable analysis. Unlike traditional methods grounded in structural equation modeling and moment conditions, \citeauthor{imbens2005robust}'s approach \citeyearpar{imbens2005robust} enables exact hypothesis testing for a class of sharp null hypotheses. Notably, it preserves valid inference even in challenging settings where the IV is only weakly correlated with the treatment---circumstances under which conventional methods such as two-stage least squares (2SLS) are known to produce biased results \citep{bound1995problems, imbens2005robust}.

This randomization inference framework for IV analysis has since been extended in multiple directions. These include factorial designs with noncompliance \citep{blackwell2023noncompliance}; binary IVs that achieve validity only after adjustment for baseline covariates via matching \citep{small2008war}; continuous IVs and weak null hypotheses such as the sample average treatment effect (SATE) among compliers \citep{baiocchi2010building}; cluster-randomized encouragement designs with individual-level noncompliance \citep{zhang2022bridging}; biased randomization schemes \citep{small2008war, fogarty2021biased}; and approaches to bounding the SATE of the entire study cohort \citep{chen2024manipulating}.

\subsection{Nested IV: people respond to stronger incentives}
\label{subsec: intro nested IV}
An important characteristic of an instrumental variable is its compliance rate, defined as the difference in the proportion of participants who receive treatment when encouraged versus when discouraged. Compliance rates can vary across studies, and an individual participant’s compliance behavior may differ depending on which encouragement is applied. Recently, \citet{wang2024nested} introduced the concept of nested IVs: a binary IV is nested within another binary IV if two IVs differ in their induced compliance rates, and a complier under the weaker IV remains a complier under the stronger IV. Notably, the nested IV assumption does not impose additional restrictions on latent subgroup membership.

The classical monotonicity assumption \citep{angrist1996identification} is a statement about cross-world potential outcomes---specifically, the potential treatment received under two levels of a single binary instrumental variable. These outcomes cannot be observed simultaneously for the same individual, hence the term ``cross-world." Similarly, the nested IV assumption concerns the structure of four potential treatment-received outcomes defined by two binary IVs. Like other cross-world assumptions, it cannot be directly verified from data; however, in many applications, researchers have reasonable grounds, often informed by economic theory or behavioral science, to believe such assumptions hold. For example, in a multicenter clinical trial, large differences in the compliance rates across sites or time periods may support the assumption that a complier under a low-compliance-inducing IV would remain a complier under a high-compliance-inducing IV. These differences in the compliance rates are often attributable to identifiable factors; in the Prostate, Lung, Colorectal and Ovarian (PLCO) trial, for instance, the compliance rate at the Henry Ford Health System increased sharply from $50\%$ to $90\%$ after the site switched from a dual consent to a single consent process \citep{marcus2014non}. Nested IV structures also arise in policy research. For example, \citet{hull2018isolateing} and \citet{guo2025rationed} used twinning (bearing two children in one delivery) at the second birth as an IV for realized fertility before and after the implementation of a birth control policy. Conditional on the policy status (absence or presence of birth control), a complier is defined as a mother who would have two children if the second birth is a singleton and three children if the second birth is a twin. An always-taker, by contrast, is a mother who would always have three children regardless of twinning at the second birth. The nested IV assumption states that a complier in the absence of the birth control policy would remain a complier under the policy. Indeed, \citet{guo2025rationed} made this assumption in their empirical studies, although they did not formally frame the problem under a nested IV structure.

While conventional instrumental variable analysis identifies the treatment effect among compliers, an analysis using a pair of nested IVs can further distinguish treatment effects among two subgroups: always-compliers, those who comply under both IVs, and switchers, those who shift from non-compliers under a weaker IV to compliers under a stronger IV \citep{wang2024nested}. These subgroups are particularly relevant for policymaking because always-compliers represent individuals who are consistently responsive to encouragement, whereas switchers are those whose treatment adherence could be influenced through stronger incentives or interventions. For instance, in a clinical trial setting, switchers may represent participants whose treatment uptake could plausibly be improved through enhanced or more intensive encouragement strategies. Estimating treatment effects within these subgroups can inform implementation science, a key priority at the National Institutes of Health. For example, if switchers demonstrate a substantial treatment effect, this would support more proactive efforts to enhance adherence and ensure that evidence-based interventions reach those most likely to benefit. Moreover, the nested IV framework allows rigorously testing whether the treatment effect among the always-compliers equals that among the switchers, thus further yielyding insights for effect heterogeneity.


\subsection{A design-based framework for nested IV analysis}
\label{subsec: intro our contribution}
In this paper, we develop a design-based framework for nested IV analysis that has three main components. First, we introduce a pair-of-pairs design that mimics a finely stratified experiment under a hierarchical, two-stage randomization scheme. For instance, in a multicenter clinical trial, participants are first randomized to a clinical site and then, within each site, randomized to receive either treatment or control. We construct such designs by efficiently solving a network flow problem on a tripartite network \citep{karmakar2019using,zhang2023matchingone}. 

Second, we develop design-based inference for (1) the sample average treatment effect (SATE) among always-compliers (ACO-SATE); (2) the SATE among switchers (SW-SATE); and (3) whether ACO-SATE equals SW-SATE, all under the proposed pair-of-pairs design. 

Third, in our case study, the assignment of participants to one IV pair versus the other, such as attending a clinical site in Seattle versus Portland, need not be randomized, even though the IV assignment within a site is randomized; see \citet{stuart2011use} for a related discussion on trial selection probability. To address this setting, we introduce a novel \emph{partly biased randomization scheme} and study design-based inference under this scheme.

Unlike the approach of \citet{wang2024nested}, inference developed in this article relies solely on the IV assignment mechanism and does not require modeling any component of the potential outcomes. 
Through simulation studies and a case study, we show that this design-based method remains robust even in challenging scenarios, such as small sample sizes, low proportions of switchers, or binary endpoints with few observed events.

The rest of the article is organized as follows. Section \ref{sec: design} describes the proposed pair-of-pairs design and a computationally efficient method to achieve the design. Section \ref{sec: notation and estimands} describes the potential outcomes and estimands of interest. Design-based inference under randomization and a novel partly biased randomization scheme is studied in Section \ref{sec: inference}. We present simulation studies in Section \ref{sec: simulation} and apply the proposed design and analysis tools to the PLCO trial in Section \ref{sec: case study}. Section \ref{sec: discussion} concludes with a discussion. Technical proofs can be found in the Supplemental Materials, and code accompanying the article can be found via the following link: \url{https://github.com/Zhe-Chen-1999/Design-Based-Nested-IV}.
\section{Study design}
\label{sec: design}

\subsection{Pair-of-pairs nested IV design}
\label{subsec: study design pop-IV}
Our proposed pair-of-pairs nested IV (PoP-NIV) design consists of $I$ matched strata, indexed by $i = 1, \dots, I$, each consisting of two matched pairs, indexed by $j = 1, 2$. Each matched pair $j$ in stratum $i$ consists of two units indexed by $k$. Two units in one pair in the matched stratum are assigned one binary IV, denoted as $\{0_a, 1_a\}$, and those in the other pair are assigned the other binary IV, denoted as $\{0_b, 1_b\}$. We use $ijk$, $i = 1,\dots,I$, $j = 1,2$, $k = 1,2$, to index the $k$-th participant in the matched pair $j$ in the $i$-th stratum. 

A matching algorithm $\textsf{M}$ embeds the observational data into a hypothetical, two-stage randomized encouragement experiment as follows. 
Within each matched stratum $i$, one pair is randomized to the IV dose pair $(0_a, 1_a)$ and the other to $(0_b, 1_b)$. Then, within the pair randomized to $(0_a, 1_a)$, two units are further randomized to $0_a$ or $1_a$. Analogously, within the pair randomized to $(0_b, 1_b)$, two units are further randomized to $0_b$ or $1_b$. Let $\boldsymbol{Z}_i = (Z_{i11}, Z_{i12}, Z_{i21}, Z_{i22})$ denote the IV assignment vector of stratum $i$ and 
\begin{equation*}
    \begin{split}
        \Omega = \{(1_a, 0_a, 1_b, 0_b), (0_a, 1_a, 1_b, 0_b), (1_a, 0_a, 0_b, 1_b), (0_a, 1_a, 0_b, 1_b), \\
        (1_b, 0_b, 1_a, 0_a), ( 1_b, 0_b, 0_a, 1_a), (0_b, 1_b, 1_a, 0_a), (0_b, 1_b, 0_a, 1_a)\}
    \end{split}
\end{equation*}
denote the set of $2^3 = 8$ possible configurations of the IV assignment vector $\boldsymbol{Z}_i$. Finally, let $\boldsymbol{x}_{ijk}$ denote participant $ijk$'s observed covariates and $\boldsymbol{X}$ the covariate matrix for all $4I$ participants. 

In design-based IV analysis, the probability of IV assignment provides the sole basis of statistical inference. Randomization assumption is often a starting point for analysis \citep{rosenbaum2002observational,imbens2005robust,small2008war}, followed by a biased randomization scheme that allows the IV to deviate from the randomization scheme by at most some controlled amount \citep{baiocchi2010building, fogarty2021biased}. The randomization assumption in a finite-sample, pair-of-pairs nested IV design can be formally stated as follows:

\begin{assumption}\label{ass: Randomization Assumption}
    \textbf{(Randomization Assumption in a PoP-NIV Design).} IV assignments across matched strata are assumed to be independent of each other, with $\Pr(\boldsymbol{Z_i} = \boldsymbol{a} \mid \boldsymbol{X}, \textsf{M}) = 1/8$, for all $\boldsymbol{a} \in \Omega$. 
\end{assumption}

In an integrated analysis of RCTs from multiple centers or studies, the randomization assumption may be violated, for instance, if there exist factors that simultaneously affect outcomes of interest and participants' selection into different study centers (e.g., choosing to enroll at a clinical site in Seattle not Portland) or different studies (e.g., choosing to participate in one study not the other). We will discuss and study how to relax the randomization assumption in Section \ref{subsec: biased randomization scheme}.

\subsection{PoP-NIV design in a study of colorectal cancer screening}
\label{subsec: PLCO match}
We illustrate the proposed PoP-NIV design using the PLCO Cancer Screening Trial. Supplemental Material C summarizes data of study participants who enrolled at the Henry Ford Health System before and after $1997$ where the trial consent policy underwent a major shift, and the compliance rate was observed to increase massively from $50\%$ to $80\%$ \citep{marcus2014non,wang2024nested}. Table \ref{table: cov balance after matching} summarizes the PoP-NIV design applied to the PLCO trial data. The design consists of $I = 3071$ matched strata, where each stratum consists of two pairs: in the first pair, two participants enrolled in the study prior to $1997$ and were each randomized to the treatment group (cancer screening) or the control group (standard of care), while in the second pair, two participants enrolled in the study after $1997$ and were each randomized to the treatment group or the control group. The match was constructed using an efficient, network-flow-based algorithm in less than $2$ minutes on a standard laptop, the details of which are to be elaborated on in the next subsection.

\begin{table}[ht]
\centering
\caption{Covariate balance after matching. Mean (SD) are reported for continuous variables; Count (percentage) are reported for categorical variables.}
\label{table: cov balance after matching}
\resizebox{\textwidth}{!}{
\begin{tabular}{lllllll}
\toprule
& \multicolumn{2}{c}{Prior to 1997 ($G= a$)} & \multicolumn{2}{c}{After 1997 ($G= b$)}& &\\
\toprule
 & Control arm & Treatment arm & Control arm & Treatment arm & $P$-value &  \\ 
 \toprule
Sample size&  3071 &  3071 &  3071 &  3071 &  &  \\ 
 Treatment uptake (=Screening) (\%) &     0 ( 0.0)  &  1602 (52.2)  &     0 ( 0.0)  &  2422 (78.9)  & $<$0.001 &  \\ 
   Age (\%) &   &   &   &   &  1.000 &  \\ 
     \quad $\leq 60$ &  1031 (33.6)  &  1031 (33.6)  &  1031 (33.6)  &  1031 (33.6)  &  &  \\ 
 \quad $(60,65]$ &  1092 (35.6)  &  1092 (35.6)  &  1092 (35.6)  &  1092 (35.6)  &  &  \\ 
 \quad $(65, 70]$ &   650 (21.2)  &   650 (21.2)  &   650 (21.2)  &   650 (21.2)  &  &  \\ 
   \quad  $>70$ &   298 ( 9.7)  &   298 ( 9.7)  &   298 ( 9.7)  &   298 ( 9.7)  &  &  \\ 
  Age & 63.31 (4.84) & 63.33 (4.85) & 63.08 (4.98) & 63.09 (4.98) &  0.081 &  \\ 
   Sex (=Male) (\%) &  1185 (38.6)  &  1178 (38.4)  &  1195 (38.9)  &  1190 (38.7)  &  0.975 &  \\ 
  Race (=Minority) (\%) &   520 (16.9)  &   518 (16.9)  &   496 (16.2)  &   510 (16.6)  &  0.841 &  \\ 
   Education (\%) &   &   &   &   &  0.381 &  \\ 
   \quad No high school &   334 (10.9)  &   336 (10.9)  &   297 ( 9.7)  &   295 ( 9.6)  &  &  \\ 
   \quad  High school &  1192 (38.8)  &  1182 (38.5)  &  1173 (38.2)  &  1189 (38.7)  &  &  \\ 
 \quad  College or above &  1545 (50.3)  &  1553 (50.6)  &  1601 (52.1)  &  1587 (51.7)  &  &  \\ 
  Smoking status (\%) &   &   &   &   &  0.948 &  \\ 
  \quad  Non-smoker &  1309 (42.6)  &  1304 (42.5)  &  1328 (43.2)  &  1332 (43.4)  &  &  \\ 
 \quad  Current smoker &   415 (13.5)  &   401 (13.1)  &   408 (13.3)  &   418 (13.6)  &  &  \\ 
   \quad  Former smoker  &  1347 (43.9)  &  1366 (44.5)  &  1335 (43.5)  &  1321 (43.0)  &  &  \\ 
   BMI  ($>25$) (\%) &  2095 (68.2)  &  2102 (68.4)  &  2135 (69.5)  &  2096 (68.3)  &  0.654 &  \\ 
  \toprule
\end{tabular}}
\end{table}
We assessed covariate balance across the four groups using standard multiple-group comparison tests: $\chi^2$ tests for categorical variables and one-way analysis of variance (ANOVA) for continuous variables. We highlight two key observations. First, the compliance rate---defined as the difference between the proportion of treatment-group participants receiving their first scheduled flexible sigmoidoscopy screening and the corresponding proportion in the control group---was markedly lower among participants enrolled before 1997 compared with those enrolled afterward (52.2\% vs. 78.9\%). Second, for all covariates listed in Table \ref{table: cov balance after matching}, no statistically significant differences were detected at the conventional $0.05$ level, indicating that the design adequately balanced observed covariates, emulating the target randomized controlled trial. Nonetheless, as in any matched design of observational data, balance on unmeasured covariates cannot be guaranteed. In particular, unobserved factors related to clinical outcomes and to participants' decision to enroll before versus after $1997$ may still be imbalanced, in contrast to what would be expected in a genuine randomized controlled trial.

\subsection{A computationally efficient algorithm constructing matched strata}
\label{subsec: match and computation}
The design in Table \ref{table: cov balance after matching} was constructed using a modified network-flow-based tripartite matching algorithm \citep{karmakar2019using,zhang2023matchingone}. The matching structure is represented as a four-layer network, in addition to a source code $\xi$ and a sink node $\overline{\xi}$, as illustrated in Figure \ref{fig: network flow}. Four layers of nodes in the network consist of (1) nodes $\{\tau^a_1, \dots, \tau^a_{T_a}\}$ corresponding to $|T_a|$ units assigned to the IV dose $Z=1_a$, (2) nodes $\{\gamma^a_1, \dots, \gamma^a_{C_a}\}$ corresponds to $|C_a|$ units assigned to $Z = 0_a$, (3) nodes $\{\tau^b_1, \dots, \tau^b_{{T}_b}\}$ corresponding to $|T_b|$ units assigned to $Z = 1_b$, and (4) nodes $\{\gamma^b_1, \dots, \gamma^b_{C_b}\}$ corresponding to $|C_b|$ units assigned to $Z = 0_b$. The resulting network then contains $T_a + C_a + T_b + C_b + 2$ vertices $\mathcal{V}$, connected by a total of $|\mathcal{E}| = T_a  + T_aC_a + C_aT_b + T_bC_b+ C_b$ edges of the following form:
\begin{equation*}
\begin{split}
    \mathcal{E} = \bigg\{(\xi, \tau^a_{t}), (\tau^a_{t}, \gamma^a_{c}), (\gamma^a_{c}, \tau^b_{t'}), (\tau^b_{t'}, \gamma^b_{c'}), ( \gamma^b_{c'}, \overline\xi), ~&t = 1, \dots, T_a,~c = 1, \dots, C_a, \\
    &t' = 1, \dots, T_b,  ~c' = 1, \dots, C_b\bigg\}.
    \end{split}
\end{equation*}

\begin{figure}[ht]
\centering
\begin{tikzpicture}[thick, color = black,
  fsnode/.style={circle, fill=black, inner sep = 0pt, minimum size = 5pt},
  ssnode/.style={circle, fill=black, inner sep = 0pt, minimum size = 5pt},
  shorten >= 3pt,shorten <= 3pt
]


\begin{scope}[start chain=going below,node distance=7mm]
\foreach \i in {1,2,3,4}
  \node[fsnode,on chain] (r\i) [label=above left: {\small$\tau^a_\i$} ] {};
\end{scope}

\begin{scope}[xshift=2.5cm,yshift=0cm,start chain=going below,node distance=7mm]
\foreach \i in {1,2,3,4,5}
  \node[ssnode,on chain] (t\i) [label=above : {\small$\gamma^a_\i$}] {};
\end{scope}

\begin{scope}[xshift=5cm,yshift=0cm,start chain=going below,node distance=7mm]
\foreach \i in {1,2,3}
  \node[ssnode,on chain] (tt\i) [label=above : {\small$\tau^b_\i$}] {};
\end{scope}

\begin{scope}[xshift=7.5cm,yshift=0cm,start chain=going below,node distance=7mm]
\foreach \i in {1,2,3,4,5,6}
  \node[ssnode,on chain] (c\i) [label=above: {\small$\gamma^b_\i$}] {};
\end{scope}

\node [circle, fill = black, inner sep = 0pt, minimum size = 5pt, label=left: $\xi$] at (-2, -1.5) (source) {};

\node [circle, fill = black, inner sep = 0pt, minimum size = 5pt, label=right: $\overline\xi$ ] at (9.5, -1.8) (sink) {};

\foreach \i in {1,2,3,4} {
   \draw[color=gray] (source) -- (r\i);
   }

\draw[color=blue, line width=0.5mm] (source) -- (r1);
\draw[color=blue, line width=0.5mm] (source) -- (r3);

\foreach \i in {1,2,3,4} {
   \foreach \j in {1,2,3,4,5} {
   \draw[color=gray] (r\i) -- (t\j);
   }
}

\draw[color=blue, line width=0.5mm] (r1) -- (t2);
\draw[color=blue, line width=0.5mm] (r3) -- (t3);

\foreach \i in {1,2,3,4,5} {
\foreach \j in {1,2,3} {
   \draw[color=gray] (t\i) -- (tt\j);
}  
}

\draw[color=blue, line width=0.5mm] (t2) -- (tt3);
\draw[color=blue, line width=0.5mm] (t3) -- (tt2);

\foreach \i in {1,2,3} {
   \foreach \j in {1,2,3,4,5,6} {
   \draw[color=gray] (tt\i) -- (c\j);
   }
} 

\draw[color=blue, line width=0.5mm] (tt2) -- (c6);
\draw[color=blue, line width=0.5mm] (tt3) -- (c1);

\foreach \i in {1,2,3,4,5,6} {
   \draw[color=gray] (c\i) -- (sink);
}

\draw[color=blue, line width=0.5mm] (c6) -- (sink);
\draw[color=blue, line width=0.5mm] (c1) -- (sink);

\end{tikzpicture}

\caption{Network-flow representation of proposed pair-of-pairs design. Nodes $\{\tau_1^a,\dots,\tau_4^a\}$ represent units assigned $Z = 1_a$; $\{\gamma_1^a,\dots,\gamma_5^a\}$ are assigned $Z = 0_a$, $\{\tau_1^b,\dots,\tau_3^b\}$ are assigned $Z = 1_b$, and $\{\gamma_1^b,\dots,\gamma_6^b\}$ are assigned $Z = 0_b$. The bold blue lines correspond to two matched strata, each consisting of a pair of pairs: $\{(\tau_1^a, \gamma_2^a), (\tau_3^b, \gamma_1^b)\}$ and $\{(\tau_3^a, \gamma_3^a), (\tau_2^b, \gamma_6^b)\}$ }
\label{fig: network flow}
\end{figure}
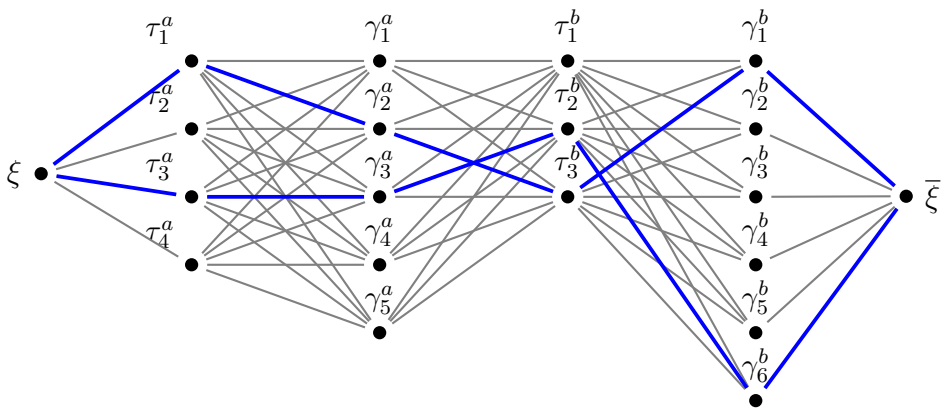

Each edge $e \in \mathcal{E}$ has unit capacity $\text{cap}(e) = 1$. A \textit{feasible integral flow} $f(\cdot)$ assigns a nonnegative integer value to each edge, subject to the following constraints:
\begin{itemize}
    \item Capacity constraints:  $0 \leq f(e) \leq \text{cap}(e)$, for all $e \in \mathcal{E}$;
    \item Supply–demand balance: all flow units emitted by the source $\xi$ are absorbed at the sink $\overline{\xi}$, i.e., $\sum_{t=1}^{T_a} f(\xi, \tau^a_{t}) = \sum_{c'=1}^{C_b} f(\gamma^b_{c'}, \overline\xi) = I \leq \min\{T_a, T_b, C_a, C_b\}$, where $I$ determines the number of matched strata in the final design;
    \item Flow conservation: for all intermediate vertices $b \in \mathcal{V} \setminus \{\xi, \overline\xi\}$, the inflow equals the outflow:
    \[
        \sum_{(a,b) \in \mathcal{E}} f(a,b) = \sum_{(b,c) \in \mathcal{E}} f(b,c).
    \]
\end{itemize}

Under these constraints, a feasible integral flow  $f(\cdot)$  encodes a set of matched strata, each containing two treated-control pairs of the form:
\[
    \mathcal{M}(f) = \left\{\left\{ (\tau^a_{t}, \gamma^a_{c}),(\tau^b_{t'}, \gamma^b_{c'}) \right\}: f(\tau^a_{t}, \gamma^a_{c}) = f(\gamma^a_{c}, \tau^b_{t'}) =f(\tau^b_{t'}, \gamma^b_{c'})= 1 \right\}.
\]
\noindent In the simple network illustrated in Figure~\ref{fig: network flow}, blue edges represent one feasible flow when constructing $I = 2$ matched strata:  $\{(\tau^a_1, \gamma^a_2), (\tau^b_3, \gamma^b_1)\}$ and $\{(\tau^a_3, \gamma^a_3), (\tau^b_2, \gamma^b_6)\}$. 

Each edge is further associated with a cost, e.g., rank-based, robust Mahalanobis distance calculated from baseline covariates \citep{rosenbaum2010design}. Let $\delta_{\tau^g_{t}, \gamma^{g}_{c}}(\boldsymbol{x})$ denote the distance between treated unit  $\tau^g_{t}$ and  control unit $\gamma^{g}_{c}$ on covariates $\boldsymbol{x}$, for $t = 1,\dots,T_g$, $c = 1, \dots,C_{g}$, and $g\in \{a,b\}$, and $\delta_{\gamma^a_{c}, \tau^{b}_{t}}(\boldsymbol{x})$ be defined similarly. The total cost associated with a feasible flow $f(\cdot)$ then equals:
\begin{equation*}
\begin{split}
    \text{cost}(f) := \sum_{f(\tau^a_{t}, \gamma^a_{c}) =1}\delta_{\tau^a_{t}, \gamma^a_{c}}(\boldsymbol{x}) + 
     \sum_{f(\gamma^a_{c}, \tau^b_{t'}) =1}\delta_{\gamma^a_{c},\tau^b_{t'}}(\boldsymbol{x}) +\sum_{f(\tau^b_{t'}, \gamma^b_{c'}) =1}\delta_{\tau^b_{t'}, \gamma^b_{c'}}(\boldsymbol{x}).
    \end{split}
\end{equation*}
A feasible flow is said to be optimal if it minimizes $\text{cost}(f)$. As described in more details in \citet{zhang2023matchingone}, a standard polynomial-time, minimum-cost network flow algorithm can be used to find the feasible flow $f$ with the minimum cost, which then yields the pair-of-pairs design displayed in Table \ref{table: cov balance after matching}.

\section{Notation and causal estimands}
\label{sec: notation and estimands}

\subsection{Potential outcomes; identification assumptions; principal stratification}
\label{subsec: PO}

We formalize the causal estimands and identification assumptions under the potential outcomes framework \citep{neyman1923application,rubin1974estimating}. First, we adopt Rubin's Stable Unit Treatment Value Assumption (SUTVA), which states that (i) the potential outcome of a participant does not depend on the treatment assignment of others; and (2) there is only one version of the treatment assignment  \citep{rubin1980randomization,rubin1986statistics}. Under SUTVA, we define the following potential treatment received for each participant indexed by $ijk$.

\begin{definition}[Potential treatment received]\label{def: potential treatment received}
    Let $D_{ijk}({Z} = z)$ denote participant $ijk$'s potential treatment received under IV level ${z} \in  \{0_a, 0_b, 1_a, 1_b\}$ as follows:
\begin{align*}
         &d_{C^aijk}\equiv D_{ijk}( {Z}={0}_{a}),~ \quad\quad~ d_{T^aijk}\equiv D_{ijk}( {Z}={1}_{a} ),\nonumber\\ 
         &d_{C^bijk}\equiv D_{ijk}( {Z}={0}_{b}),~ \quad\quad~ d_{T^bijk}\equiv D_{ijk}( {Z}={1}_{b} ).
\end{align*}
\end{definition}
\noindent In any given dataset, only the actual treatment received, $D_{ijk} = d_{C^aijk}\mathbbm{1}\{{Z}_{ijk}=0_a\} + d_{T^aijk}\mathbbm{1}\{{Z}_{ijk}=1_a\}+d_{C^bijk}\mathbbm{1}\{{Z}_{ijk}=0_b\} + d_{T^bijk}\mathbbm{1}\{{Z}_{ijk}=1_b\}$, is observed for each participant.

Analogously, Definition \ref{def: potential clinical outcomes} defines potential outcomes for each participant:
\begin{definition}[Potential outcomes]
\label{def: potential clinical outcomes}
    Let $R_{ijk}(D_{ijk}(Z_{ijk}), Z_{ijk})$ denote $ijk$'s potential outcome under IV assignment $Z_{ijk}$ and treatment received $D_{ijk}(Z_{ijk})$. We make the standard exclusion restriction assumption so that $R_{ijk}(D_{ijk}(Z_{ijk}) = d, Z_{ijk} = z) = R_{ijk}(D_{ijk}(Z_{ijk}) = d, Z_{ijk} = z'),~\forall~d, z, z'$.  We write $R_{ijk}(Z_{ijk})$ as a shorthand for $R_{ijk}(D_{ijk}(Z_{ijk}), Z_{ijk})$. Each participant $ijk$ is associated with the following $4$ potential outcomes:
\begin{equation*}
    \begin{split}
         &r_{C^aijk}\equiv R_{ijk}(Z = 0_a),~\quad\quad~   
         r_{T^aijk}\equiv R_{ijk}(Z = 1_a),\\
         &r_{C^bijk}\equiv R_{ijk}(Z = 0_b),~ \quad\quad~   
         r_{T^bijk}\equiv R_{ijk}(Z = 1_b).
    \end{split}
\end{equation*}
\end{definition}
\noindent Let $R_{ijk} = r_{C^aijk}\mathbbm{1}\{{Z}_{ijk}=0_a\} + r_{T^aijk}\mathbbm{1}\{{Z}_{ijk}=1_a\}+r_{C^bijk}\mathbbm{1}\{{Z}_{ijk}=0_b\} + r_{T^bijk}\mathbbm{1}\{{Z}_{ijk}=1_b\}$ denote the observed outcome for $ijk$.

Following \citet{angrist1996identification}, participant $ijk$ is said to be a complier with respect to the IV dose pair $(0_g, 1_g)$, for $g \in \{a, b\}$, if $(d_{T^gijk}, d_{C^gijk}) = (1, 0)$, an always-taker if $(d_{T^gijk}, d_{C^gijk}) = (1, 1)$, a never-taker if $(d_{T^gijk}, d_{C^gijk}) = (0, 0)$, and a defier if $(d_{T^gijk}, d_{C^gijk}) = (0, 1)$, where $d_{T^gijk}$ and $d_{C^gijk}$ are defined in Definition \ref{def: potential treatment received}. In IV analysis, a common assumption is that participants are not less likely to take the treatment when offered treatment rather than control \citep{angrist1996identification}. We adopt this assumption for each IV pair ($\{0_a, 1_a\}$ or $\{0_b, 1_b\}$) and refer to it as the partial monotonicity assumption.

\begin{assumption}[Partial monotonicity]\label{ass: partial mono} For each participant indexed by $ijk$, we have
     $d_{T^gijk}\geq d_{C^gijk}$, $g \in \{a, b\}$.
\end{assumption}

Under this partial monotonicity assumption, defiers under either IV pair are assumed not to exist, and the compliance rate under IV pair $(0_g, 1_g)$ is defined as follows: 
\begin{equation*}
    \iota_g = \frac{1}{4I}\sum_{i = 1}^I \sum_{j = 1}^2\sum_{k = 1}^2 (d_{T^gijk} - d_{C^gijk}),~g \in \{a, b\},
\end{equation*}
which is the sample average treatment effect of the binary IV $(0_g, 1_g)$ on treatment received.

Finally, we make the \emph{nested IV assumption} below to impose an additional structure on potential treatment received under two versions of the IV assignment \citep{wang2024nested}. 

\begin{assumption}[Nested IV]\label{ass: nested IV} For each participant indexed by $ijk$, 
     $(d_{C^aijk}, d_{T^aijk}) = (0, 1)$ implies $(d_{C^bijk}, d_{T^bijk}) = (0, 1)$. 
\end{assumption}

\noindent In words, Assumption \ref{ass: nested IV} says that a participant who is a complier under a weaker IV pair, that is, $(0_a, 1_a)$, would remain a complier under a stronger IV pair, that is, $(0_b, 1_b)$. The nested IV assumption has a testable implication. Under Assumptions \ref{ass: partial mono} and \ref{ass: nested IV}, we would necessarily have $\iota_b \geq \iota_a$; hence, by testing the one-sided null hypothesis $H_{0}: \iota_b \geq \iota_a$, one could assess the validity of Assumptions \ref{ass: partial mono} and \ref{ass: nested IV}. 

A participant's compliance status is determined jointly by both IV pairs, $(0_a, 1_a)$ and $(0_b, 1_b)$. In principle, this yields $2^4 = 16$ possible principal strata. However, under Assumptions \ref{ass: partial mono} and \ref{ass: nested IV}, the number is reduced from $16$ to $6$\footnote{
More precisely, Assumptions \ref{ass: partial mono} and \ref{ass: nested IV} reduce the number of principal strata from 16 to 7. In what follows, we consider only 6 principal strata, as two of the original strata are combined into a single stratum of switchers.
} \citep{wang2024nested}. These $6$ principal strata are:  (i) SWitchers (SW), who are non-compliers (that is, always-takers or never-takers) under the weaker IV pair $(0_a, 1_a)$ but compliers under the stronger IV pair $(0_b, 1_b)$, with $(d_{T^aijk}, d_{C^aijk}) \in \{(1, 1), (0, 0)\}$ and $(d_{T^bijk}, d_{C^bijk}) = (1, 0)$; (ii) Always-COmpliers (ACO),  who comply with the assigned treatment under both IV pairs $(0_a, 1_a)$ and $(0_b, 1_b)$, with $(d_{T^aijk}, d_{C^aijk}) = (d_{T^bijk}, d_{C^bijk}) = (1, 0)$; (iii) Always-Taker-Never-Takers (AT-NT), who always take the treatment under $\{0_a, 1_a\}$ but never take the treatment under 
 $\{0_b, 1_b\}$, with $(d_{T^aijk}, d_{C^aijk})=(1, 1)$ and $ (d_{T^bijk}, d_{C^bijk}) = (0, 0)$; (iv) Always-Always-Takers (AAT), who are always-takers under both IV pairs $(0_a, 1_a)$ and $(0_b, 1_b)$, with $(d_{T^aijk}, d_{C^aijk}) = (d_{T^bijk}, d_{C^bijk}) = (1, 1)$;
 (v) Never-Taker-Always-Takers (NT-AT),  who never take treatment under $\{0_a, 1_a\}$ but always take it under $\{0_b, 1_b\}$, with 
$(d_{T^aijk}, d_{C^aijk})=(0, 0)$ and $ (d_{T^bijk}, d_{C^bijk}) = (1, 1)$; and (vi) Always-Never-Takers (ANT), who never take the treatment under either IV pair $(0_a, 1_a)$ or $(0_b, 1_b)$, with $(d_{T^aijk}, d_{C^aijk}) = (d_{T^bijk}, d_{C^bijk}) = (0, 0)$. 

Finally, Definition \ref{def: PP outcomes} defines the ``per-protocol" potential outcomes.

\begin{definition}[Per-protocol potential outcomes]
\label{def: PP outcomes}
Let 
\[
r_{d=0, ijk}\equiv R_{ijk}(D = 0) \quad\text{and}\quad r_{d = 1, ijk}\equiv R_{ijk}(D = 1)
\]
denote the potential outcomes had participant $ijk$ taken the control or treatment.
\end{definition}
\noindent Unlike $(r_{T^g ijk}, r_{C^g ijk}),~g \in \{a, b\}$, which depend on the IV pair, $(r_{d=0, ijk}, r_{d=1, ijk})$ do not depend on the IV and are arguably the ultimate quantity of interest.

\subsection{Always-complier sample average treatment effect (ACO-SATE)}
\label{subsec: ACO-SATE}
Equipped with the potential outcomes defined in Definitions \ref{def: potential treatment received} and \ref{def: potential clinical outcomes}, we are ready to define the target causal estimands. Our first estimand of interest is the following ``effect ratio"-style estimand \citep{baiocchi2010building,kang2018inference,zhang2022bridging,chen2024manipulating}:
\begin{equation*}
    \kappa = \frac{\sum_{i = 1}^I \sum_{j = 1}^2 \sum_{k = 1}^2 (r_{T^aijk} - r_{C^aijk})}{\sum_{i = 1}^I \sum_{j = 1}^2\sum_{k = 1}^2 (d_{T^aijk} - d_{C^aijk})},
\end{equation*}
which, under the partial monotonicity assumption $d_{T^aijk}\geq d_{C^aijk}$ and the nested IV assumption, equals the SATE among the always-complier subgroup, that is,
\[
\text{ACO-SATE} = \frac{1}{n_{\text{ACO}}}\sum_{ijk \in ACO}\{r_{d = 1, ijk} - r_{d = 0, ijk} \},
\]
where $n_{\textup{ACO}}$ denotes the number of always-compliers in the cohort, and $(r_{d = 1, ijk}, r_{d = 0, ijk})$ are ``per-protocol" outcomes defined in Definition \ref{def: PP outcomes}. 

\subsection{Switcher sample average treatment effect (SW-SATE)}
\label{subsec: S-SWATE}
Our second target estimand is the following ratio of difference-in-differences:
\begin{equation}\label{causal estimand}
    \lambda = \frac{\sum_{i = 1}^I \sum_{j = 1}^2 \sum_{k = 1}^2 (r_{T^bijk} - r_{C^bijk})-(r_{T^aijk} - r_{C^aijk})}{\sum_{i = 1}^I \sum_{j = 1}^2\sum_{k = 1}^2 (d_{T^bijk} - d_{C^bijk}) - (d_{T^aijk} - d_{C^aijk})},
\end{equation}
where the numerator denotes the difference in the sample average treatment effect of IV on the outcome between two IV pairs, and the denominator denotes that on the treatment received. 

The estimand $\lambda$ is well-defined provided that the denominator, $\sum_{i = 1}^I \sum_{j = 1}^2\sum_{k = 1}^2 (d_{T^bijk} - d_{C^bijk}) - (d_{T^aijk} - d_{C^aijk})$, does not equal $0$; that is, the sample average treatment effect of IV on treatment received is different between two IV pairs.

In the absence of the partial monotonicity assumption and the nested IV assumption, the parameter $\lambda$ can be interpreted as follows: for every hundred additional people converted by the stronger IV $(0_b, 1_b)$ to take the treatment, the intention-to-treat effect is increased by $100 \times \lambda$. 

We further define the sample average treatment effect among switchers as follows:
\begin{equation}\label{S-SATE}
    \text{SW-SATE} = \frac{1}{n_{\text{SW}}}\sum_{ijk \in SW}\{r_{d = 1, ijk} - r_{d = 0, ijk} \},
\end{equation}
where $n_{\text{SW}}$ denotes the number of switchers in the cohort, and $(r_{d = 1, ijk}, r_{d = 0, ijk})$ are ``per-protocol" outcomes defined in Definition \ref{def: PP outcomes}. 

Under the partial monotonicity assumption and the nested IV assumption, we have $\lambda = \text{SW-SATE}$, as formalized in Proposition $1$.

\begin{proposition}\label{prop: lambda = SWATE}
    Let 
    $\lambda$ and $\text{SW-SATE}$ be defined as in \eqref{causal estimand} and \eqref{S-SATE}, respectively. Under Assumptions \ref{ass: partial mono} and \ref{ass: nested IV}, we have $\lambda = \text{SW-SATE}$. 
\end{proposition}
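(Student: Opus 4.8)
The plan is to evaluate the numerator and denominator of $\lambda$ separately by decomposing each per-unit summand according to the six principal strata identified under Assumptions \ref{ass: partial mono} and \ref{ass: nested IV}, and then to show that only switchers contribute to either sum. The starting point is the exclusion restriction built into Definition \ref{def: potential clinical outcomes}: it lets me rewrite each IV-indexed potential outcome as a per-protocol outcome, so that for $g\in\{a,b\}$ we have $r_{T^gijk}=r_{d=d_{T^gijk},\,ijk}$ and $r_{C^gijk}=r_{d=d_{C^gijk},\,ijk}$, each of which equals either $r_{d=0,ijk}$ or $r_{d=1,ijk}$ from Definition \ref{def: PP outcomes} depending solely on the corresponding potential treatment received.

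First I would tabulate the denominator contribution $(d_{T^bijk}-d_{C^bijk})-(d_{T^aijk}-d_{C^aijk})$ across the six strata. Under partial monotonicity the only admissible response patterns for each IV pair are $(d_{T^gijk},d_{C^gijk})\in\{(0,0),(1,0),(1,1)\}$, and the nested IV assumption forbids a complier under $(0_a,1_a)$ from being anything other than a complier under $(0_b,1_b)$. A direct check then shows this summand equals $0$ for ANT, AAT, ACO, AT-NT, and NT-AT, because the within-pair treatment difference is the same under both IV pairs, and equals $1$ for switchers. Summing yields that the denominator of $\lambda$ equals $n_{\text{SW}}$; in particular, $\lambda$ is well-defined exactly when $n_{\text{SW}}>0$, which is also when $\text{SW-SATE}$ is defined.

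Next I would carry out the analogous computation for the numerator $(r_{T^bijk}-r_{C^bijk})-(r_{T^aijk}-r_{C^aijk})$ after substituting the per-protocol outcomes. For a switcher, non-compliance under $(0_a,1_a)$ forces $d_{T^aijk}=d_{C^aijk}$ and hence $r_{T^aijk}-r_{C^aijk}=0$, while compliance under $(0_b,1_b)$ gives $r_{T^bijk}-r_{C^bijk}=r_{d=1,ijk}-r_{d=0,ijk}$, so each switcher contributes precisely $r_{d=1,ijk}-r_{d=0,ijk}$. For every other stratum the two within-pair differences cancel---either both are null or both move between the same pair of per-protocol values---so the summand vanishes. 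The numerator of $\lambda$ therefore equals $\sum_{ijk\in SW}(r_{d=1,ijk}-r_{d=0,ijk})$, and dividing by $n_{\text{SW}}$ gives $\lambda=\text{SW-SATE}$.

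This argument is essentially a stratum-by-stratum bookkeeping exercise, so there is no deep obstacle; the one point deserving care is that switchers comprise two distinct response types under the weaker IV, namely always-takers with $(d_{T^aijk},d_{C^aijk})=(1,1)$ and never-takers with $(d_{T^aijk},d_{C^aijk})=(0,0)$. I would verify explicitly that both types yield identical numerator and denominator contributions---in each case the weaker-IV difference is null and the stronger-IV difference collapses to the per-protocol treatment effect---so that the two sub-types can be pooled into a single switcher stratum without affecting either sum.
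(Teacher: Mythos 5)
Your proof is correct and follows essentially the same route as the paper's: a stratum-by-stratum evaluation showing that the five non-switcher strata contribute zero to both the numerator and denominator of $\lambda$, while each switcher contributes $r_{d=1,ijk}-r_{d=0,ijk}$ to the numerator and $1$ to the denominator. Your additional explicit checks---the exclusion-restriction substitution into per-protocol outcomes and the verification that both switcher sub-types (weak-IV always-takers and weak-IV never-takers) contribute identically---are careful elaborations of points the paper treats more briefly, but do not change the argument.
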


Proposition \ref{prop: lambda = SWATE} holds because, for the principal strata AT–NT, AAT, NT–AT, and ANT, under either IV, the IV has no effect on treatment received and, consequently, no effect on the outcome. Moreover, for the always-complier (ACO) principal stratum---participants who comply under either IV pair--- both the numerator and denominator in $\lambda$ are equal to $0$. Therefore, only the switcher subgroup contributes to $\lambda$, and we have:

\begin{equation}
\begin{split}
      \lambda = ~\frac{\sum_{ijk \in SW}  (r_{T^bijk} - r_{C^bijk})}{\sum_{ijk \in SW} (d_{T^bijk} - d_{C^bijk}) }
      = ~\frac{\sum_{ijk \in SW}  R_{ijk}( {D} = 1) - R_{ijk}( {D} = 0)}{\sum_{ijk \in SW} (1 - 0) } 
      = ~\text{SW-SATE},
\end{split}
\end{equation}
where the first equality holds because under the weaker IV pair $\{0_a, 1_a\}$, switchers are never-takers or always-takers with a zero intention-to-treat effect, and the second equality holds because under the stronger IV pair $\{0_b, 1_b\}$, switchers become compliers.

Both causal parameters, $\kappa$ and $\lambda$, involve contrasts of potential outcomes. 
The estimand $\kappa$ depends on $(d_{T^aijk}, r_{T^aijk})$ and $(d_{C^aijk}, r_{C^aijk})$; for each participant assigned to the weaker IV pair, either $(d_{T^aijk}, r_{T^aijk})$ or $(d_{C^aijk}, r_{C^aijk})$ is observed, whereas for those assigned to the stronger IV pair, neither is observed. Similarly, the estimand $\lambda$ involves $4$ sets of potential outcomes--- $(d_{T^aijk}, r_{T^aijk})$, $(d_{C^aijk}, r_{C^aijk})$, $(d_{T^bijk}, r_{T^bijk})$, $(d_{C^bijk}, r_{C^bijk})$---of which only one set is observed per participant. Consequently, statistical inference is required for both estimands. We develop design-based inference for $\kappa$ and $\lambda$ in the next section.
\section{Design-based inference for ACO-SATE and SW-SATE}
\label{sec: inference}
\subsection{Inference under the randomization assumption}
\label{subsec: inference under RA}
Write $\mathcal{F} = \{(d_{C^a ijk}, d_{T^a ijk},d_{C^b ijk}, d_{T^b ijk}, r_{C^a ijk}, r_{T^a ijk}, r_{C^b ijk}, r_{T^b ijk}, \boldsymbol{x}_{ijk}),~i = 1, \dots, I,~j = 1, 2,~k = 1, 2\}$ and $\mathcal{Z}$ the set containing all $8^I$ possible IV assignments. To facilitate making inference of  $\kappa$ and $\lambda$, we introduce an auxiliary random variable $Z^g_{ijk}$ for each $g \in \{a,b\}$ as follows: 
$Z^g_{ijk} = 1$ if $Z_{ijk} = 1_g$, $Z^g_{ijk} = 0$ if $Z_{ijk} = 0_g$, and $Z^g_{ijk} = 1/2$ otherwise. By design, this construction ensures that $Z^g_{ij1} + Z^g_{ij2} = 1$ for $j \in \{1,2\}$ and $g \in \{a,b\}$. The IV assignment mechanism for  $\boldsymbol{Z_i} = (Z_{i11}, Z_{i12}, Z_{i21}, Z_{i22})$, as formalized by Assumption~\ref{ass: Randomization Assumption}, determines the joint distribution of $Z^g_{ijk}$. 

We first derive a valid level-$\alpha$ confidence interval for the difference in compliance rates between the two IV pairs, which can be interpreted as the proportion of switchers in the sample under Assumptions~\ref{ass: partial mono} and \ref{ass: nested IV}. To that end, we conduct a finite-population asymptotic analysis in which the $4I$ units are embedded into a sequence of finite populations of increasing size \citep{fcltxlpd2016}. This asymptotic framework forms the basis for the statistical inference developed throughout the remainder of the article.

\begin{proposition}[Inference for the proportion of switchers]
\label{prop: compliance rate}
Let
    \begin{equation*}
    \begin{split}
V_{i} &= \sum_{j = 1}^2  \sum_{k = 1}^2 
    \left\{ \{Z^b_{ijk} D_{ijk} -(1-Z^b_{ijk}) D_{ijk}\}
    -  \{Z^a_{ijk} D_{ijk} -(1-Z^a_{ijk})D_{ijk}\}\right\}, \\
    \overline{V} &= I^{-1}\sum_{i = 1}^I V_{i},\\
    \end{split}
\end{equation*}
and define the variance estimator 
$$
S^2 = \frac{1}{I(I-1)}\sum_{i = 1}^I\left(V_{i}-\overline{V}\right)^2.
$$
Under mild regularity conditions and conditional on $\mathcal{F}$ and $\mathcal{Z}$, the interval 
$$
\left\{c: \left|\overline{V} - c \right| \leq z_{1-\alpha/2} \sqrt{S^2 }\right\}
$$ 
is an asymptotically valid, level-$\alpha$ confidence interval for $\iota_b - \iota_a$, the proportion of switchers in the sample,  where $z_{1-\alpha/2}$ denotes the $(1-\alpha/2)$-th quantile of the standard normal distribution.
\end{proposition}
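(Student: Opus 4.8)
The plan is to proceed in three stages: first show that $\overline{V}$ is exactly unbiased for $\iota_b - \iota_a$ under the randomization of Assumption~\ref{ass: Randomization Assumption}; then invoke a finite-population central limit theorem to get asymptotic normality of $\overline{V}$; and finally show that $S^2$ is a conservative estimator of $\operatorname{Var}(\overline{V})$, so that the stated interval attains coverage at least $1-\alpha$ in the limit. The first step begins by rewriting $V_i$ in a transparent form. Using the identity $Z^g_{ij1}+Z^g_{ij2}=1$ and collecting terms, the per-unit summand collapses to $2D_{ijk}(Z^b_{ijk}-Z^a_{ijk})$; since a unit in the pair assigned the $a$-dose has $Z^b_{ijk}=1/2$ and $Z^a_{ijk}\in\{0,1\}$ (and symmetrically for the $b$-pair), this shows $V_i=(D^{(b)}_{1}-D^{(b)}_{0})-(D^{(a)}_{1}-D^{(a)}_{0})$, the difference between the within-pair treatment-received contrasts of the $b$-pair and the $a$-pair. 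The eight equiprobable configurations in $\Omega$ factor as the three coin flips of the two-stage design: which pair receives the $a$-dose, and which unit is treated within each pair, with the matched pairs held fixed. Averaging $V_i$ over these eight assignments gives $\mathbb{E}[V_i\mid\mathcal{F}]=\tfrac14\sum_{j=1}^{2}\sum_{k=1}^{2}\{(d_{T^b ijk}-d_{C^b ijk})-(d_{T^a ijk}-d_{C^a ijk})\}$, and summing over $i$ and dividing by $I$ yields $\mathbb{E}[\overline{V}\mid\mathcal{F}]=\iota_b-\iota_a$, which under Assumptions~\ref{ass: partial mono} and \ref{ass: nested IV} equals the sample proportion of switchers.

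Next, because Assumption~\ref{ass: Randomization Assumption} makes the assignments $\boldsymbol{Z}_i$ independent across strata, $\overline{V}$ is an average of $I$ independent, uniformly bounded summands (each $V_i\in\{-2,-1,0,1,2\}$). I would then apply a Lindeberg--Feller-type finite-population central limit theorem in the style of \citet{fcltxlpd2016}: uniform boundedness makes the Lindeberg condition automatic, and the ``mild regularity conditions'' are precisely those guaranteeing that $\operatorname{Var}(\sqrt{I}\,\overline{V})=I^{-1}\sum_i\operatorname{Var}(V_i)$ stays bounded away from zero along the sequence of finite populations. This delivers $(\overline{V}-(\iota_b-\iota_a))/\sqrt{\operatorname{Var}(\overline{V})}\Rightarrow\mathcal{N}(0,1)$.

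Finally I would analyze the variance estimator. Writing $\mu_i=\mathbb{E}[V_i\mid\mathcal{F}]$ and $\mu=I^{-1}\sum_i\mu_i$, and using independence across strata together with the decomposition $\sum_i(V_i-\overline{V})^2=\sum_i V_i^2-I\overline{V}^2$, one obtains $\mathbb{E}[S^2\mid\mathcal{F}]=\operatorname{Var}(\overline{V})+\{I(I-1)\}^{-1}\sum_i(\mu_i-\mu)^2\ge\operatorname{Var}(\overline{V})$, with equality only when the stratum-level means $\mu_i$ are constant. A law of large numbers for the independent bounded terms $V_i^2$ shows $S^2$ concentrates on this (conservative) limit, so the ratio $\operatorname{Var}(\overline{V})/S^2$ converges in probability to a constant no larger than one. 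A Slutsky argument combining this with the central limit theorem then shows that the self-normalized statistic $(\overline{V}-(\iota_b-\iota_a))/\sqrt{S^2}$ is asymptotically no more dispersed than $\mathcal{N}(0,1)$, delivering asymptotic coverage at least $1-\alpha$.

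The main obstacle is exactly this last point: the between-stratum term $\{I(I-1)\}^{-1}\sum_i(\mu_i-\mu)^2$ cannot be consistently estimated from a design in which each stratum supplies a single realization of $V_i$---the classical Neyman irreducible bias in paired and finely stratified experiments. Consequently the interval is valid but generally conservative, and the honest conclusion is coverage at least $1-\alpha$, with exact coverage only when the stratum-specific effects $\mu_i$ are homogeneous. Care is also needed to confirm that the finite-population CLT is applied with centering at $\iota_b-\iota_a$ rather than at an i.i.d.\ superpopulation limit, and that the non-degeneracy condition on $I^{-1}\sum_i\operatorname{Var}(V_i)$ is included among the invoked regularity conditions so that the normalization does not collapse.
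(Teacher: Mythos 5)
Your proposal is correct and follows essentially the same route as the paper's proof: exact unbiasedness of $\overline{V}$ for $\iota_b-\iota_a$ by averaging over the eight equiprobable configurations, a CLT for independent bounded stratum-level summands under a non-degeneracy condition on $I^{-1}\sum_i \operatorname{Var}(V_i)$, and conservativeness of $S^2$ via the decomposition $\mathrm{E}[S^2]=\operatorname{Var}(\overline{V})+\{I(I-1)\}^{-1}\sum_i(\mu_i-\mu)^2$ together with a concentration argument for $S^2$. The only cosmetic differences are that you verify Lindeberg directly from boundedness where the paper checks a Lyapunov fourth-moment condition, and that you derive the closed form $V_i=(D^{(b)}_{1}-D^{(b)}_{0})-(D^{(a)}_{1}-D^{(a)}_{0})$ instead of tabulating the eight assignments.
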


\begin{remark}[Inference of $\iota_g$]
    By letting $V_{i} = \sum_{j = 1}^2  \sum_{k = 1}^2 \{Z^g_{ijk} D_{ijk} -(1-Z^g_{ijk})D_{ijk}\}$ in Proposition \ref{prop: compliance rate}, we can similarly construct a level-$\alpha$ confidence interval for $\iota_g$, the compliance rate under IV pair $(0_g, 1_g),~g \in \{a,b\}$.
\end{remark}

Next, we consider testing the null hypothesis
\[
H^{\mathrm{ACO}}_0: \kappa = \kappa_0
\]
for the ACO-SATE estimand. Theorem \ref{thm: RI for ACO-SATE} provides an asymptotically valid, level-$\alpha$ test under the randomization scheme specified in Assumption~\ref{ass: Randomization Assumption}.

\begin{theorem}[Inference for the SATE among always-compliers]\label{thm: RI for ACO-SATE}
Let
\begin{align*}
V_{i, \mathrm{ACO}}(\kappa_0) &=  \sum_{j = 1}^2  \sum_{k = 1}^2 \left\{Z^a_{ijk}(R_{ijk} - \kappa_0 D_{ijk}) -(1-Z^a_{ijk})(R_{ijk} - \kappa_0 D_{ijk})\right\}. 
\end{align*}
    and define the test statistic
\begin{align*}
    T_{\mathrm{ACO}}(\kappa_0) = I^{-1}\sum_{i = 1}^I V_{i, \mathrm{ACO}}(\kappa_0),
\end{align*}
with the corresponding variance estimator \[
S_{\mathrm{ACO}}^2(\kappa_0) = \frac{1}{I(I-1)}\sum_{i = 1}^I\left\{V_{i,\mathrm{ACO}}(\kappa_0)-I^{-1}\sum_{i = 1}^I V_{i,\mathrm{ACO}}(\kappa_0)\right\}^2.
\] 
Conditional on $\mathcal{F}$ and $\mathcal{Z}$,  the test that rejects $H^{\mathrm{ACO}}_0: \kappa = \kappa_0$ when $$|T_{\mathrm{ACO}}(\kappa_0)| \geq z_{1-\alpha/2} \sqrt{S_{\mathrm{ACO}}^2(\kappa_0) }$$ is an asymptotically valid, level-$\alpha$ test, under mild regularity conditions.
\end{theorem}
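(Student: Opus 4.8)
The plan is to follow the standard template for Neyman-style, design-based inference with an effect-ratio estimand: first rewrite the stratum-level statistic and compute its randomization mean under $H^{\mathrm{ACO}}_0$ to show that the test statistic is centered at zero; second, invoke independence across strata together with a finite-population central limit theorem to obtain asymptotic normality; and third, show that the plug-in variance estimator is asymptotically conservative, so the studentized statistic is stochastically dominated by a standard normal.

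First I would simplify the summand. Writing $Z^a_{ijk}(R_{ijk}-\kappa_0 D_{ijk})-(1-Z^a_{ijk})(R_{ijk}-\kappa_0 D_{ijk})=(2Z^a_{ijk}-1)(R_{ijk}-\kappa_0 D_{ijk})$ and noting that $2Z^a_{ijk}-1$ equals $+1$, $-1$, or $0$ according to whether unit $ijk$ receives $1_a$, $0_a$, or one of the $b$-doses, the statistic $V_{i,\mathrm{ACO}}(\kappa_0)$ collapses to the within-pair difference of the adjusted response $R-\kappa_0 D$ over the single pair in stratum $i$ assigned the weaker IV, with the $b$-pair contributing nothing. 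Under Assumption~\ref{ass: Randomization Assumption} each of the eight configurations in $\Omega$ is equally likely; averaging over which pair is assigned $(0_a,1_a)$ and over the treated/control labeling within that pair yields $\mathbb{E}[V_{i,\mathrm{ACO}}(\kappa_0)\mid\mathcal{F}]=\tfrac14\sum_{j=1}^2\sum_{k=1}^2\{(r_{T^aijk}-r_{C^aijk})-\kappa_0(d_{T^aijk}-d_{C^aijk})\}$. Summing over $i$ and substituting the definition of $\kappa$ shows that $\mathbb{E}[T_{\mathrm{ACO}}(\kappa_0)\mid\mathcal{F}]=\tfrac{1}{4I}\sum_{ijk}\{(r_{T^aijk}-r_{C^aijk})-\kappa_0(d_{T^aijk}-d_{C^aijk})\}$, which vanishes exactly when $\kappa=\kappa_0$. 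This is the effect-ratio cancellation that makes the statistic mean-zero under $H^{\mathrm{ACO}}_0$.

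Next I would establish asymptotic normality. Because Assumption~\ref{ass: Randomization Assumption} makes the assignment vectors $\bs{Z}_i$ independent across strata, the terms $V_{i,\mathrm{ACO}}(\kappa_0)$ are independent (though not identically distributed) and bounded under the stated regularity conditions (bounded potential outcomes or bounded moments, together with a non-degenerate limiting variance). A finite-population central limit theorem \citep{fcltxlpd2016} then gives $T_{\mathrm{ACO}}(\kappa_0)/\sqrt{\mathrm{Var}(T_{\mathrm{ACO}}(\kappa_0))}\to N(0,1)$ under $H^{\mathrm{ACO}}_0$, where $\mathrm{Var}(T_{\mathrm{ACO}}(\kappa_0))=I^{-2}\sum_i\mathrm{Var}(V_{i,\mathrm{ACO}}(\kappa_0))$ by independence.

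The hard part will be the variance step, which is exactly where design-based inference becomes conservative rather than exact. The within-stratum variance $\mathrm{Var}(V_{i,\mathrm{ACO}}(\kappa_0))$ is not identifiable from a single realized assignment, so I would instead analyze the expectation of the empirical variance estimator. A direct moment computation, using $\mathbb{E}[T_{\mathrm{ACO}}(\kappa_0)\mid\mathcal{F}]=0$ under the null, gives $\mathbb{E}[S^2_{\mathrm{ACO}}(\kappa_0)]=\mathrm{Var}(T_{\mathrm{ACO}}(\kappa_0))+\tfrac{1}{I(I-1)}\sum_i(\mathbb{E}[V_{i,\mathrm{ACO}}(\kappa_0)\mid\mathcal{F}])^2$; the second term is nonnegative, so the estimator over-estimates the true variance by an amount controlled by the heterogeneity of the stratum-level adjusted effects. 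I would then upgrade this expectation identity to a probabilistic statement, showing via a law of large numbers for the independent terms that $S^2_{\mathrm{ACO}}(\kappa_0)/\mathrm{Var}(T_{\mathrm{ACO}}(\kappa_0))$ converges to a limit no smaller than one. Combining this with the central limit theorem through Slutsky's theorem, the studentized statistic $T_{\mathrm{ACO}}(\kappa_0)/\sqrt{S^2_{\mathrm{ACO}}(\kappa_0)}$ is asymptotically dominated in absolute value by $|N(0,1)|$, so the rule rejecting when $|T_{\mathrm{ACO}}(\kappa_0)|\ge z_{1-\alpha/2}\sqrt{S^2_{\mathrm{ACO}}(\kappa_0)}$ has limiting size at most $\alpha$, establishing asymptotic validity. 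The principal technical care lies in verifying the regularity conditions underlying the finite-population CLT and the convergence of the conservative variance ratio, and in confirming that the excess term does not grow relative to $\mathrm{Var}(T_{\mathrm{ACO}}(\kappa_0))$.
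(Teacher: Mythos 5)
Your proposal is correct and follows essentially the same route as the paper's proof: collapse $V_{i,\mathrm{ACO}}(\kappa_0)$ to the within-pair contrast of the adjusted response $R-\kappa_0 D$ on the $a$-pair, compute $\mathrm{E}[V_{i,\mathrm{ACO}}(\kappa_0)\mid\mathcal{F}]=\tfrac14\sum_{j,k}\{(r_{T^aijk}-r_{C^aijk})-\kappa_0(d_{T^aijk}-d_{C^aijk})\}$ so the statistic is mean-zero under $H^{\mathrm{ACO}}_0$, invoke a Lyapunov-type CLT across independent strata, and establish conservativeness of $S^2_{\mathrm{ACO}}(\kappa_0)$ via the bias decomposition $\mathrm{E}[S^2]=\mathrm{Var}(\bar V)+\tfrac{1}{I(I-1)}\sum_i(\mathrm{E}[V_i]-\mathrm{E}[\bar V])^2$ together with a law of large numbers for the empirical variance. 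The only cosmetic difference is that the paper phrases the final step as $IS^2_{\mathrm{ACO}}(\kappa_0)-I\mathrm{Var}(T_{\mathrm{ACO}}(\kappa_0))-c_I\xrightarrow{p}0$ with $c_I\ge 0$ rather than as convergence of the variance ratio to a limit at least one, which avoids issues when the limiting variance is only bounded below rather than convergent.
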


\begin{remark}\label{rmk: ACO-SATE CI also valid for subgroup CO}
    The level-$\alpha$ confidence interval established in Theorem~\ref{thm: RI for ACO-SATE} is also a conditionally valid level-$\alpha$ confidence interval for the effect ratio estimand restricted to the $2I$ units assigned to $\{0_a, 1_a\}$, noting that these $2I$ units constitute a random sample from the full set of $4I$ units; see, for example, \citet{branson2018randomization}, for a related discussion.
\end{remark}

\begin{remark}\label{rmk: ACO-SATE conservative variance}
     The variance estimator $S_{\mathrm{ACO}}^2(\kappa_0)$ generally overestimates the true variance \citep{imai2008variance}. It is unbiased for the true variance if and only if
    \[
         \sum_{j = 1}^2 \sum_{k = 1}^2 
    \{r_{d = 1, ijk} - r_{d=0, ijk} - \kappa_0 \}\mathbbm{1}\{S_{ijk} = \mathrm{ACO}\}
    \]
    is constant across all strata $i = 1,\dots,I$. A sufficient condition for this to hold is that the treatment effect among always-compliers is constant. 
\end{remark}

Finally, we move on to making inference for the SW-SATE. Theorem \ref{thm: RI} establishes an asymptotically valid, level-$\alpha$ test for the null hypothesis 
\[
H^{\mathrm{SW}}_0: \lambda = \lambda_0.
\]

\begin{theorem}[Inference for the SATE among switchers]\label{thm: RI}
Let
\begin{align*}
V_{i,\mathrm{SW}}(\lambda_0) &= \sum_{j = 1}^2  \sum_{k = 1}^2 
    \{Z^b_{ijk}(R_{ijk} - \lambda_0 D_{ijk}) -(1-Z^b_{ijk})(R_{ijk} - \lambda_0 D_{ijk})\}\nonumber
    \\&- \sum_{j = 1}^2  \sum_{k = 1}^2 \{Z^a_{ijk}(R_{ijk} - \lambda_0 D_{ijk}) -(1-Z^a_{ijk})(R_{ijk} - \lambda_0 D_{ijk})\}, 
\end{align*}
and define the test statistic
\begin{align*}
    T_{\mathrm{SW}}(\lambda_0) = I^{-1}\sum_{i = 1}^I V_{i,\mathrm{SW}}(\lambda_0),
\end{align*}
with the corresponding variance estimator 
\[
S_{\mathrm{SW}}^2(\lambda_0) = \frac{1}{I(I-1)}\sum_{i = 1}^I\left\{V_{i,\mathrm{SW}}(\lambda_0)-I^{-1}\sum_{i = 1}^I V_{i,\mathrm{SW}}(\lambda_0)\right\}^2. 
\] Under mild regularity conditions and conditional on $\mathcal{F}$ and  $\mathcal{Z}$, the test that rejects $H^{SW}_0: \lambda = \lambda_0$ when $$|T_{\mathrm{SW}}(\lambda_0)| \geq z_{1-\alpha/2} \sqrt{S_{\mathrm{SW}}^2(\lambda_0) }$$ is an asymptotically valid, level-$\alpha$ test.
\end{theorem}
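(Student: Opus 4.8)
The plan is to reduce the claim to two facts: that the test statistic $T_{\mathrm{SW}}(\lambda_0)$ is centered at zero under $H_0^{\mathrm{SW}}:\lambda=\lambda_0$, and that $T_{\mathrm{SW}}(\lambda_0)/\sqrt{S_{\mathrm{SW}}^2(\lambda_0)}$ is asymptotically stochastically dominated by a standard normal, so that the two-sided rule has asymptotic size at most $\alpha$. Throughout I condition on $\mathcal{F}$ and $\mathcal{Z}$, treating the potential outcomes as fixed and $\boldsymbol{Z}_i$ as the sole randomness. First I would introduce the adjusted potential outcomes $y_{T^g ijk}(\lambda_0)=r_{T^g ijk}-\lambda_0 d_{T^g ijk}$ and $y_{C^g ijk}(\lambda_0)=r_{C^g ijk}-\lambda_0 d_{C^g ijk}$ for $g\in\{a,b\}$, and write $Y_{ijk}(\lambda_0)=R_{ijk}-\lambda_0 D_{ijk}$. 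Since $2Z^b_{ijk}-1\in\{+1,-1,0\}$ selects exactly the pair assigned the $\{0_b,1_b\}$ levels (with sign $+$ for the treated unit and $-$ for the control unit), and $2Z^a_{ijk}-1$ analogously selects the $\{0_a,1_a\}$ pair, one has $V_{i,\mathrm{SW}}(\lambda_0)=\sum_{j,k}\{(2Z^b_{ijk}-1)-(2Z^a_{ijk}-1)\}Y_{ijk}(\lambda_0)$, i.e. the within-pair adjusted-outcome difference of the $b$-pair minus that of the $a$-pair. Using the two-stage randomization of Assumption~\ref{ass: Randomization Assumption}---each pair is the $b$-pair with probability $1/2$, and within a pair each unit is treated with probability $1/2$---a per-unit computation gives $\mu_i := \mathbb{E}[V_{i,\mathrm{SW}}(\lambda_0)\mid\mathcal{F}] = \tfrac14\sum_{j,k}\{(y_{T^b ijk}-y_{C^b ijk})-(y_{T^a ijk}-y_{C^a ijk})\}$. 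Substituting the definitions and summing over $i$ yields $\sum_i\mu_i=\tfrac14(\mathcal{N}-\lambda_0\mathcal{D})$, where $\mathcal{N}$ and $\mathcal{D}$ are the numerator and denominator of $\lambda$ in \eqref{causal estimand}; under $H_0^{\mathrm{SW}}$ with $\lambda_0=\lambda=\mathcal{N}/\mathcal{D}$ this vanishes, so $\mathbb{E}[T_{\mathrm{SW}}(\lambda_0)\mid\mathcal{F}]=0$.

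Next I would establish asymptotic normality of the centered statistic. By Assumption~\ref{ass: Randomization Assumption} the vectors $\boldsymbol{Z}_i$ are independent across $i$, hence the $V_{i,\mathrm{SW}}(\lambda_0)$ are independent with $\mathrm{Var}(T_{\mathrm{SW}}(\lambda_0)\mid\mathcal{F})=I^{-2}\sum_i\sigma_i^2$, where $\sigma_i^2=\mathrm{Var}(V_{i,\mathrm{SW}}(\lambda_0)\mid\mathcal{F})$ is a finite sum over the eight equally likely configurations in $\Omega$. Appealing to a finite-population central limit theorem for sums of independent, non-identically distributed summands in the triangular-array framework of \citet{fcltxlpd2016}, under the stated mild regularity conditions (uniformly bounded moments of the adjusted outcomes, a non-degenerate limiting average variance $I^{-1}\sum_i\sigma_i^2$, and a Lindeberg-type negligibility condition), $T_{\mathrm{SW}}(\lambda_0)/\sqrt{\mathrm{Var}(T_{\mathrm{SW}}(\lambda_0)\mid\mathcal{F})}$ converges in distribution to $N(0,1)$.

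Finally I would relate $S_{\mathrm{SW}}^2(\lambda_0)$ to the true variance. The standard identity for the sample variance of independent summands gives $\mathbb{E}[S_{\mathrm{SW}}^2(\lambda_0)\mid\mathcal{F}] = I^{-2}\sum_i\sigma_i^2 + \tfrac{1}{I(I-1)}\sum_i(\mu_i-\bar\mu)^2 \ge \mathrm{Var}(T_{\mathrm{SW}}(\lambda_0)\mid\mathcal{F})$, with $\bar\mu=I^{-1}\sum_i\mu_i$. Thus $S_{\mathrm{SW}}^2$ is conservative, its excess being the between-stratum dispersion of the stratum means---the same phenomenon noted for the ACO case in Remark~\ref{rmk: ACO-SATE conservative variance}. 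After verifying that $S_{\mathrm{SW}}^2(\lambda_0)$ concentrates around its expectation (a weak law for the triangular array, again under bounded higher moments), Slutsky's theorem yields $\limsup_I \Pr(|T_{\mathrm{SW}}(\lambda_0)|\ge z_{1-\alpha/2}\sqrt{S_{\mathrm{SW}}^2(\lambda_0)}\mid\mathcal{F})\le\alpha$, which is the asserted asymptotically valid, level-$\alpha$ test.

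I expect the main obstacle to lie in the variance step rather than the mean step. The difference-in-differences structure couples the $a$-pair and $b$-pair contributions through the random pair-to-IV allocation, so $\sigma_i^2$ carries a genuine covariance term, and, more consequentially, the stratum means $\mu_i$ are heterogeneous. This heterogeneity is exactly what makes $S_{\mathrm{SW}}^2$ conservative rather than exactly unbiased, and it is why the conclusion is an asymptotically valid (not exact) test; verifying the Lindeberg and moment conditions that make the finite-population central limit theorem and the concentration of $S_{\mathrm{SW}}^2$ go through for this heterogeneous, dependent-within-stratum array is the technical crux.
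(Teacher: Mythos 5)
Your proposal is correct and follows essentially the same route as the paper's proof: the same representation of $V_{i,\mathrm{SW}}(\lambda_0)$ via the signs $2Z^g_{ijk}-1$, the same computation showing $\mathrm{E}[V_{i,\mathrm{SW}}(\lambda_0)\mid\mathcal{F}]=\tfrac14\sum_{j,k}\{(y_{T^bijk}-y_{C^bijk})-(y_{T^aijk}-y_{C^aijk})\}$ vanishes on average under $H_0^{\mathrm{SW}}$, a CLT for independent non-identically distributed stratum contributions (the paper verifies a Lyapunov fourth-moment condition rather than Lindeberg, an immaterial difference), and the same conservative-variance identity plus concentration of $S^2_{\mathrm{SW}}(\lambda_0)$ to conclude validity. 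Your closing diagnosis of where the conservativeness comes from matches the paper's Remark on the variance estimator.
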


\begin{remark}[Conservativeness of the variance estimator]\label{remark: SW-SATE variance}
    As with ACO-SATE, the variance estimator  $S_{\mathrm{SW}}^2(\lambda_0)$ for SW-SATE in general  overestimates the true variance of the test statistic. It is unbiased if and only if 
    \[
        \sum_{j = 1}^2 \sum_{k = 1}^2 
        \{r_{d = 1, ijk} - r_{d = 0, ijk} - \lambda_0 \}\mathbbm{1}(S_{ijk} = \mathrm{SW})
    \]
    is constant across all strata $i = 1, \dots, I$. A sufficient condition for this to hold is when the treatment effect is constant
    among switchers. In practice, such conditions may be stringent, and therefore $S_{\mathrm{SW}}^2(\lambda_0)$ typically overestimates the true variance of the test statistic $T_{\mathrm{SW}}(\lambda_0)$. However, as we will demonstrate in the simulation studies, the extent to which $S_{\mathrm{ACO}}^2(\kappa_0)$ in Theorem \ref{thm: RI for ACO-SATE} and $S_{\mathrm{SW}}^2(\lambda_0)$ are overestimated is relatively small across many practical scenarios; consequently, inference procedures based on Theorems~\ref{thm: RI for ACO-SATE} and~\ref{thm: RI} are often not overly conservative.
\end{remark}

In addition to estimating the treatment effects among always-compliers and switchers, the nested IV framework also makes the effect heterogeneity across different complier subpopulations 
an object of formal inference rather than an untestable assumption. To this end, we develop a test of the null hypothesis 
\[
H_0^{\mathrm{homog}}: \kappa = \lambda,
\]
which states that ACO-SATE and SW-SATE coincide. 
 Under Assumptions~2 and~3, testing $H_0^{\mathrm{homog}}$ can be reduced to testing whether the effect ratio under the stronger IV pair $(0_b, 1_b)$ equals that under the weaker IV pair $(0_a, 1_a)$; see Supplemental Material~A.5 for details. Theorem~\ref{thm:homog} exploits this reformulation to construct a valid, level-$\alpha$ test.

\begin{theorem}[Test of treatment effect homogeneity]
\label{thm:homog}
For each stratum $i$, define the vector
$W_i=(D_i^a, Y_i^a, D_i^b, Y_i^b)^\top$, with components
\begin{equation*}\label{eq:Wcomponents}
  D_i^g = \sum_{j=1}^2\sum_{k=1}^2
    \bigl\{Z^g_{ijk}D_{ijk}-(1-Z^g_{ijk})D_{ijk}\bigr\},\quad
  Y_i^g = \sum_{j=1}^2\sum_{k=1}^2
    \bigl\{Z^g_{ijk}R_{ijk}-(1-Z^g_{ijk})R_{ijk}\bigr\},
\end{equation*}
for $g\in\{a,b\}$. 
 Let $\bar{W}_I = I^{-1}\sum_{i=1}^I W_i$ and $S_W^2 = (I-1)^{-1}\sum_{i=1}^I (W_i - \bar{W}_I)(W_i - \bar{W}_I)^\top$ denote the sample mean and sample covariance matrix, respectively. Let $g: \mathbb{R}^4 \to \mathbb{R}$ be defined by $g(x_1,x_2,x_3,x_4) = x_4/x_3 - x_2/x_1$, with gradient
$\nabla g(x)=(x_2/x_1^2,\,-1/x_1,\,-x_4/x_3^2,\,1/x_3)^\top$.
Define the test statistic
\begin{equation*}\label{eq:Thomog}
  T_{\mathrm{homog}}
  \;=\; \frac{\sqrt{I}\,g(\bar{W}_I)}
         {\sqrt{[\nabla g(\bar{W}_I)]^\top S_W^2\,[\nabla g(\bar{W}_I)]}}.
\end{equation*}
Under mild regularity conditions and conditional on $\mathcal{F}$ and $\mathcal{Z}$, the test that rejects $H_{0,\mathrm{homog}}$ when $|T_{\mathrm{homog}}| \geq z_{1-\alpha/2}$ is an asymptotically valid, level-$\alpha$ test.
\end{theorem}
 
\begin{remark}\label{rem:homog_conserv}
 The variance estimator $S_W^2$ overestimates the true asymptotic variance of $\sqrt{I}\,\bar W_I$ in the Loewner (positive semi-definite) order, so that the resulting test is, in general, conservative. The estimator is asymptotically unbiased if the conditional mean $E[W_i \mid \mathcal{F}]$ is constant across strata $i = 1, \ldots, I$. 
A sufficient condition is that (i) both the number of always-compliers and the number of switchers are constant across strata, and (ii) the unit-level treatment effect $r_{d=1,ijk} - r_{d=0,ijk}$ is constant within the always-complier and switcher strata, respectively, though not necessarily equal between them.
\end{remark}

\subsection{A partly biased randomization scheme}
\label{subsec: biased randomization scheme}
In an integrated analysis of data from multiple sites in a multicenter clinical trial, treatment assignment is randomized within each site---either with or without matching on observed covariates---whereas ``site selection” may not be randomized, even after accounting for observed covariates. To be more specific, among the $8$ possible IV assignment configurations in $\Omega$, those in 
\begin{equation*}
    \Omega_1 = \{(1_a, 0_a, 1_b, 0_b), (0_a, 1_a, 1_b, 0_b), (1_a, 0_a, 0_b, 1_b), (0_a, 1_a, 0_b, 1_b)\}
\end{equation*}
are equiprobable, and we denote this probability as $\pi_{ia} = \Pr(\boldsymbol{Z_i} = \boldsymbol{e} \mid \boldsymbol{X}, \textsf{M})$ for $\boldsymbol{e} \in \Omega_1$. Similarly, the $4$ IV assignments in 
\begin{equation*}
    \Omega_2 = \{(1_b, 0_b, 1_a, 0_a), ( 1_b, 0_b, 0_a, 1_a), (0_b, 1_b, 1_a, 0_a), (0_b, 1_b, 0_a, 1_a)\}
\end{equation*}
are equiprobable. If we use $\pi_{ib} = \Pr(\boldsymbol{Z_i} = \boldsymbol{e} \mid \boldsymbol{X}, \textsf{M})$ for $\boldsymbol{e} \in \Omega_2$ to denote this probability, then $\pi_{ia} + \pi_{ib} = 1/4.$

We follow the Rosenbaum sensitivity analysis model \citep{rosenbaum2002observational} and quantify the maximum deviation from a randomization scheme for each matched strata $i$ as follows:
\begin{equation}
\label{eq: site level biased randomization 1}
    \frac{1}{4(1+\Gamma)} \leq \pi_{ia}\leq \frac{\Gamma}{4(1+\Gamma)},
\end{equation}
or equivalently:
\begin{equation}
\label{eq: site level biased randomization 2}
    \frac{1}{\Gamma} \leq \frac{\pi_{ia}}{\frac{1}{4} - \pi_{ia}} \leq \Gamma.
\end{equation}

Under this model, $\Gamma = 1$ corresponds to randomization at the ``trial" or ``clinical site" level, in which case all elements in $\Omega = \Omega_1 \cup \Omega_2$ are equiprobable and Assumption \ref{ass: Randomization Assumption} holds. As $\Gamma > 1$ increases, the potential deviation from randomization within each matched stratum $i$ grows, and Equation \eqref{eq: site level biased randomization 1} (equivalently, Equation \eqref{eq: site level biased randomization 2}) specifies a uniform upper bound on that deviation. 

Definition \ref{def: partly biased randomization} formalizes this partly biased randomization scheme.

\begin{definition}[Partly biased randomization scheme]\label{def: partly biased randomization}
   A partly biased randomization scheme indexed by $\Gamma \geq 1$, denoted as $\mathcal{M}_\Gamma$, refers to the set of probability distributions on $\Omega = \Omega_1 \cup \Omega_2$ that satisfy the restriction in Equation \eqref{eq: site level biased randomization 1} or equivalently Equation \eqref{eq: site level biased randomization 2}.
\end{definition}

\subsection{Inference under the partly biased randomization scheme}
\label{subsec: inference under biased randomization}
To facilitate inference under the partly biased randomization scheme $\mathcal{M}_\Gamma$, we first introduce an alternative characterization of the partly biased randomization scheme. 

Fix $\Gamma \geq 1$. Define $A_i = 1$ if, in stratum $i$, pair $j = 1$ is assigned to the IV pair $\{0_a, 1_a\}$ and pair $j = 2$ to $\{0_b, 1_b\}$, and let $A_i = 0$ if the assignments are reversed. Let $A_i$ satisfy 
\[
\frac{1}{1+\Gamma} \leq P(A_i = 1) \leq \frac{\Gamma}{1 + \Gamma},
\]
for $i = 1, \dots, I$.

For all $i = 1, \dots, I, ~ j = 1, 2$, further define the random variable $T_{ij}$ as follows: 
\[
T_{ij} = 
\begin{cases}
    +1, ~~\text{if the unit } ij1 \text{ is assigned to } 1_a \text{ or } 1_b \text{ and } ij2 \text{ to } 0_a \text{ or } 0_b, \\
    -1, ~~\text{if the unit } ij1 \text{ is assigned to } 0_a \text{ or } 0_b \text{ and } ij2 \text{ to } 1_a \text{ or } 1_b.
\end{cases}
\]
By construction, $T_{ij}$ is independent of $A_i$, and randomization within each pair $ij$ ensures $P(T_{ij} = +1) = P(T_{ij} = -1) = 1/2$. One can readily see that the partly biased randomization model $\mathcal{M}_{\Gamma}$ is then fully characterized by $\{(A_i, T_{i1}, T_{i2}),~i = 1, \dots, I\}$.

We consider testing the null hypothesis $H^{\mathrm{ACO}}_0: \kappa = \kappa_0$ 
for always-compliers
under $\mathcal{M}_\Gamma$. 
For each matched stratum $i$, define:
\begin{equation}\label{eq: aco tau_i bias}
\begin{split}
\hat\tau_{i, \mathrm{ACO}}(\kappa_0)
= &A_i\cdot T_{i1}\cdot\left\{R_{i11} - \kappa_0 D_{i11} - (R_{i12} -\kappa_0 D_{i12})\right\} \\
    + &(1-A_i)\cdot T_{i2}\cdot\left\{R_{i21} - \kappa_0 D_{i21} - (R_{i22} -\kappa_0 D_{i22})\right\}, 
\end{split}
\end{equation}
and
\begin{align}\label{eq: aco D_i bias}
     D^{\mathrm{ACO}}_{i,\Gamma}(\kappa_0)=  \hat\tau_{i, \mathrm{ACO}}(\kappa_0) -\left(\frac{\Gamma-1}{\Gamma+1}\right) |\hat\tau_{i, \mathrm{ACO}}(\kappa_0)|.
\end{align}
Theorem \ref{thm: aco biased_inference} shows that a valid test for $H^{\mathrm{ACO}}_0: \kappa = \kappa_0$ under $\mathcal{M}_\Gamma$ can be constructed using $\{D^{\mathrm{ACO}}_{i,\Gamma}(\kappa_0),~i = 1, \dots, I\}$. 

\begin{theorem}\label{thm: aco biased_inference}
For each matched stratum $i$, let $\hat\tau_{i, \mathrm{ACO}}(\kappa_0)$ and $D^{\mathrm{ACO}}_{i, \Gamma}(\kappa_0)$ be defined as in \eqref{eq: aco tau_i bias} and \eqref{eq: aco D_i bias}, respectively. Let $\overline{D}_{\Gamma, \mathrm{ACO}}(\kappa_0) = I^{-1}\sum_{i=1}^I D^{\mathrm{ACO}}_{i,\Gamma}(\kappa_0)$ be the sample mean across $I$ matched strata and $
S^{2}_{\Gamma,\mathrm{ACO}}(\kappa_0) = \frac{1}{I(I-1)} \sum_{i=1}^I\left(D^{\mathrm{ACO}}_{i, \Gamma}(\kappa_0)-\overline{D}_{\Gamma, \mathrm{ACO}}(\kappa_0)\right)^2
$ the corresponding variance estimator. Consider testing $H^{\mathrm{ACO}}_0: \kappa = \kappa_0$ against a greater-than alternative under the partly biased IV assignment model $\mathcal{M}_\Gamma$. The test that rejects $H^{\mathrm{ACO}}_0$ when
$\overline{D}_{\Gamma,\mathrm{ACO}}(\kappa_0)/S_{\Gamma, \mathrm{ACO}}(\kappa_0) \geq z_{1-\alpha}$ is an asymptotically valid level-$\alpha$ test, and the corresponding one-sided, level-$\alpha$ confidence interval can be obtained by inverting the test.
\end{theorem}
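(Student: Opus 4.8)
The plan is to exploit the characterization of $\mathcal{M}_\Gamma$ through $(A_i,T_{i1},T_{i2})$ and reduce the problem to a one-parameter family of biased assignments of $A_i$, the within-pair flips $T_{ij}$ remaining fair. First I would note that, since the pair-$1$ term of $\hat\tau_{i,\mathrm{ACO}}(\kappa_0)$ carries the factor $A_i$ and the pair-$2$ term carries $1-A_i$, the statistic $\hat\tau_{i,\mathrm{ACO}}(\kappa_0)$ always equals the $T$-oriented, $\kappa_0$-adjusted treated-minus-control difference within whichever pair is assigned the weaker IV $\{0_a,1_a\}$. Writing $Y_{ijk}=R_{ijk}-\kappa_0 D_{ijk}$ and using that $T_{ij}$ is a fair coin independent of $A_i$, a direct calculation gives, with $p_i:=P(A_i=1)$,
\[
E\!\left[\hat\tau_{i,\mathrm{ACO}}(\kappa_0)\right]=\tfrac12\big(p_i\theta_{i1}+(1-p_i)\theta_{i2}\big),\qquad E\!\left[\,|\hat\tau_{i,\mathrm{ACO}}(\kappa_0)|\,\right]=p_i\psi_{i1}+(1-p_i)\psi_{i2},
\]
where $\theta_{ij}=\sum_{k}\{(r_{T^aijk}-r_{C^aijk})-\kappa_0(d_{T^aijk}-d_{C^aijk})\}$ is the pair-level sum of $\kappa_0$-adjusted $a$-contrasts, and $\psi_{ij}$ denotes the conditional expectation of $|\hat\tau_{i,\mathrm{ACO}}(\kappa_0)|$ given that pair $j$ carries the weaker IV. Since $\psi_{ij}$ is the average of the two absolute oriented differences whose signed average is $\tfrac12\theta_{ij}$, the triangle inequality yields $\psi_{ij}\ge\tfrac12|\theta_{ij}|$. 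Consequently $E[D^{\mathrm{ACO}}_{i,\Gamma}(\kappa_0)]=p_i\alpha_{i1}+(1-p_i)\alpha_{i2}$ with $\alpha_{ij}=\tfrac12\theta_{ij}-\tfrac{\Gamma-1}{\Gamma+1}\psi_{ij}$ is linear in $p_i$, so its supremum over $p_i\in[\tfrac{1}{1+\Gamma},\tfrac{\Gamma}{1+\Gamma}]$ is attained at an endpoint.

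The heart of the argument, and the step I expect to be the main obstacle, is a sharp per-stratum bound on this worst-case expectation. Writing $\beta=\tfrac{\Gamma-1}{\Gamma+1}\in[0,1)$ and $m_i=\tfrac14(\theta_{i1}+\theta_{i2})$ (the fair-randomization, i.e.\ Assumption~\ref{ass: Randomization Assumption}, expectation of $V_{i,\mathrm{ACO}}$ from Theorem~\ref{thm: RI for ACO-SATE}), I would prove
\[
\sup_{p_i\in[\frac{1}{1+\Gamma},\,\frac{\Gamma}{1+\Gamma}]}E\!\left[D^{\mathrm{ACO}}_{i,\Gamma}(\kappa_0)\right]\ \le\ (1-\beta^2)\,m_i .
\]
Evaluating the endpoint maximum as $\tfrac12(\alpha_{i1}+\alpha_{i2})+\tfrac{\beta}{2}|\alpha_{i1}-\alpha_{i2}|$, this reduces to the two-sided inequality $|(a_1-a_2)-\beta(\psi_{i1}-\psi_{i2})|\le(\psi_{i1}+\psi_{i2})-\beta(a_1+a_2)$ with $a_j=\tfrac12\theta_{ij}$; each side follows after regrouping into the factors $(1\pm\beta)(a_j\mp\psi_{ij})$, whose signs are pinned down by $|a_j|\le\psi_{ij}$ and $\beta<1$. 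The point of the correction $-\tfrac{\Gamma-1}{\Gamma+1}|\hat\tau_{i,\mathrm{ACO}}|$ is precisely to make the bound hold with the \emph{multiplicative} constant $1-\beta^2$ rather than with additive slack; because that constant is common across strata, summing over $i$ and invoking the null $H^{\mathrm{ACO}}_0:\kappa=\kappa_0$ — equivalently $\sum_i m_i=\tfrac14\sum_{ijk}\{(r_{T^aijk}-r_{C^aijk})-\kappa_0(d_{T^aijk}-d_{C^aijk})\}=0$ — gives $E[\overline{D}_{\Gamma,\mathrm{ACO}}(\kappa_0)]\le(1-\beta^2)\sum_i m_i=0$ for every distribution in $\mathcal{M}_\Gamma$.

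With the mean controlled, the rest mirrors the randomization arguments behind Theorem~\ref{thm: RI for ACO-SATE} and Proposition~\ref{prop: compliance rate}. Because $(A_i,T_{i1},T_{i2})$ are independent across strata, $\{D^{\mathrm{ACO}}_{i,\Gamma}(\kappa_0)\}_{i=1}^I$ is a triangular array of independent, bounded summands, so a Lindeberg–Feller / finite-population central limit theorem \citep{fcltxlpd2016} delivers asymptotic normality of $\sqrt{I}\,\overline{D}_{\Gamma,\mathrm{ACO}}(\kappa_0)$ after centering at $E[\overline{D}_{\Gamma,\mathrm{ACO}}]$ and scaling by its standard deviation, under the same mild moment conditions. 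I would then verify that $S^2_{\Gamma,\mathrm{ACO}}(\kappa_0)$ is asymptotically conservative for $\mathrm{Var}(\overline{D}_{\Gamma,\mathrm{ACO}})$: the usual sum-of-squares decomposition shows its probability limit exceeds the true variance by the nonnegative dispersion of the stratum-specific means $E[D^{\mathrm{ACO}}_{i,\Gamma}]$, exactly as in Remark~\ref{rmk: ACO-SATE conservative variance}. Combining $E[\overline{D}_{\Gamma,\mathrm{ACO}}]\le 0$, the central limit theorem, and the conservative variance via Slutsky's theorem shows $P\big(\overline{D}_{\Gamma,\mathrm{ACO}}/S_{\Gamma,\mathrm{ACO}}\ge z_{1-\alpha}\big)$ converges to a value no larger than $\alpha$, uniformly over $\mathcal{M}_\Gamma$; the one-sided, level-$\alpha$ confidence interval follows by inverting this test over $\kappa_0$.
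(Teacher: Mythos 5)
Your proposal is correct, and the key step is carried out by a genuinely different argument from the paper's. The paper conditions on the four realizations of $(T_{i1},T_{i2})$, writes $\hat\tau_{i,\mathrm{ACO}}^{(t_1,t_2)}=\bar\tau_{i,\mathrm{ACO}}^{(t_1,t_2)}+(2A_i-1)\eta_{i,\mathrm{ACO}}^{(t_1,t_2)}$, introduces an auxiliary $\pm 1$ variable $V_{i,\Gamma}$ at the extreme odds so that $V_{i,\Gamma}\lvert\eta_{i,\mathrm{ACO}}^{(t_1,t_2)}\rvert$ stochastically dominates $(2A_i-1)\eta_{i,\mathrm{ACO}}^{(t_1,t_2)}$, pushes the dominance through the increasing map $x\mapsto x-\tfrac{\Gamma-1}{\Gamma+1}\lvert x\rvert$, and bounds the mean of the dominating variable by $\bigl[1-(\tfrac{\Gamma-1}{\Gamma+1})^2\bigr]\bar\tau_{i,\mathrm{ACO}}^{(t_1,t_2)}$ via Lemma 2 of Fogarty (2020). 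You instead condition on $A_i$, observe that $E[D^{\mathrm{ACO}}_{i,\Gamma}(\kappa_0)]$ is affine in $\pi_i=P(A_i=1)$, maximize at an endpoint of $[\tfrac{1}{1+\Gamma},\tfrac{\Gamma}{1+\Gamma}]$, and verify the endpoint bound $(1-\beta^2)m_i$ by the elementary two-sided inequality $\lvert(a_1-a_2)-\beta(\psi_{i1}-\psi_{i2})\rvert\le(\psi_{i1}+\psi_{i2})-\beta(a_1+a_2)$; I checked that the two sides do factor as $(1+\beta)(a_1-\psi_{i1})-(1-\beta)(a_2+\psi_{i2})\le 0$ and its mirror image, which need only $\lvert a_j\rvert\le\psi_{ij}$ and $\beta<1$, and that your $(1-\beta^2)m_i$ agrees with the paper's $\tfrac14\sum_{(t_1,t_2)}[1-\beta^2]\bar\tau^{(t_1,t_2)}_{i,\mathrm{ACO}}$. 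Both routes therefore land on the same per-stratum worst-case mean bound, after which the null kills $\sum_i m_i$ and the CLT / conservative-variance / Slutsky endgame is the same as the paper's (Lemmas 1--3 there). What your route buys is economy and transparency: only the mean of $D^{\mathrm{ACO}}_{i,\Gamma}(\kappa_0)$ is ever needed, so the stochastic-dominance construction is bypassed entirely, and the calculation makes explicit that the correction $-\tfrac{\Gamma-1}{\Gamma+1}\lvert\hat\tau_{i,\mathrm{ACO}}\rvert$ is calibrated exactly so the worst case is a common multiple of the fair-randomization mean. What the paper's route buys is modularity: the dominating variable controls the whole conditional distribution, reuses published lemmas, and transfers verbatim to the messier SW statistic of Theorem 4, whereas your conditional-on-$A_i$ expectations would need to be recomputed there (the $\psi$-type terms become averages over four sign patterns of $(T_{i1},T_{i2})$, though the same affine-in-$\pi_i$ structure persists).
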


Lastly, we consider testing the null hypothesis  $H^{\mathrm{SW}}_0: \lambda = \lambda_0$ under the model $\mathcal{M}_\Gamma$. 
For each matched stratum $i$, define:
{\small\begin{equation*}\label{eq: tau_i bias}
\begin{split}
\hat\tau_{i, \mathrm{SW}}(\lambda_0) 
= &A_i\cdot\left\{T_{i2}\cdot\left\{R_{i21} - \lambda_0 D_{i21} - (R_{i22} - \lambda_0 D_{i22})\right\}
    - T_{i1}\cdot\left\{R_{i11} - \lambda_0 D_{i11} - (R_{i12} - \lambda_0 D_{i12})\right\}\right\} \\
    + &(1-A_i)\cdot\left\{T_{i1}\cdot\left\{R_{i11} - \lambda_0 D_{i11} - (R_{i12} - \lambda_0 D_{i12})\right\}
    - T_{i2}\cdot\left\{R_{i21} - \lambda_0 D_{i21} - (R_{i22} - \lambda_0 D_{i22})\right\}\right\}, 
\end{split}
\end{equation*}}
and 
\begin{align*}\label{eq: D_i bias}
     D^{\mathrm{SW}}_{i,\Gamma}(\lambda_0) =  \hat\tau_{i, \mathrm{SW}}(\lambda_0) -\left(\frac{\Gamma-1}{\Gamma+1}\right) |\hat\tau_{i, \mathrm{SW}}(\lambda_0)|.
\end{align*}
Theorem \ref{thm: biased_inference} derives a valid, level-$\alpha$ test for the null hypothesis $H^{\mathrm{SW}}_0: \lambda = \lambda_0$ 
under 
$\mathcal{M}_\Gamma$.  

\begin{theorem}\label{thm: biased_inference}
For each matched pair $i$, let 
$\hat\tau_{i, \mathrm{SW}}(\lambda_0)$ and $D^{\mathrm{SW}}_{i,\Gamma}(\lambda_0)$ be defined as above.
Let $\overline{D}_{\Gamma, \mathrm{SW}}(\lambda_0) = I^{-1}\sum_{i=1}^I D^{\mathrm{SW}}_{i,\Gamma}(\lambda_0)$ be the sample mean across $I$ matched strata and $
S^2_{\Gamma, \mathrm{SW}}(\lambda_0) = \frac{1}{I(I-1)} \sum_{i=1}^I\left(D^{\mathrm{SW}}_{i, \Gamma}(\lambda_0)-\overline{D}_{\Gamma, \mathrm{SW}}(\lambda_0)\right)^2
$ be the usual variance estimator. Consider testing $H^{\mathrm{SW}}_0: \lambda = \lambda_0$ against a greater-than alternative under the partly biased IV assignment model $\mathcal{M}_\Gamma$. The test that rejects $H^{\mathrm{SW}}_0$ when
$\overline{D}_{\Gamma, \mathrm{SW}}(\lambda_0)/S_{\Gamma, \mathrm{SW}}(\lambda_0) \geq z_{1-\alpha}$ is an asymptotically valid level-$\alpha$ test, and the corresponding one-sided, level-$\alpha$ confidence interval can be obtained by inverting the test.
\end{theorem}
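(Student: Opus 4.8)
The plan is to mirror the argument behind Theorem~\ref{thm: aco biased_inference}, replacing each within-pair contrast by the between-pair difference-in-differences that defines $\lambda$, and to reduce the claim to two ingredients: a worst-case control of the mean of $\overline{D}_{\Gamma,\mathrm{SW}}(\lambda_0)$ under $H^{\mathrm{SW}}_0$, and a finite-population central limit theorem for the studentized statistic.

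First I would rewrite $\hat\tau_{i,\mathrm{SW}}(\lambda_0)$ explicitly as a function of $(A_i,T_{i1},T_{i2})$ and the potential outcomes of Definitions~\ref{def: potential treatment received}--\ref{def: potential clinical outcomes}. Writing $\phi^g_{ijk}=(r_{T^gijk}-r_{C^gijk})-\lambda_0(d_{T^gijk}-d_{C^gijk})$ for $g\in\{a,b\}$ and using $P(T_{ij}=\pm1)=1/2$ with $T_{ij}\indep A_i$, averaging over the fair within-pair coins gives $\mu_i^{(a)}:=E[\hat\tau_{i,\mathrm{SW}}(\lambda_0)\mid A_i=a,\mathcal F]$ equal to a difference of pair-level averages of the $\phi^b$'s and $\phi^a$'s. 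In particular the fair-randomization ($\Gamma=1$) mean is $m_i:=\tfrac12(\mu_i^{(1)}+\mu_i^{(0)})=\tfrac14\sum_{j,k}(\phi^b_{ijk}-\phi^a_{ijk})$. By the definition of $\lambda$ in \eqref{causal estimand}, the null $H^{\mathrm{SW}}_0:\lambda=\lambda_0$ is exactly $\sum_{i,j,k}(\phi^b_{ijk}-\phi^a_{ijk})=0$, so $\sum_{i=1}^I m_i=0$; this recovers the centering used in Theorem~\ref{thm: RI}.

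Then I write the per-stratum mean under an arbitrary $p_i:=P(A_i=1)\in[\tfrac1{1+\Gamma},\tfrac\Gamma{1+\Gamma}]$ as $E_{p_i}[D^{\mathrm{SW}}_{i,\Gamma}]=m_i+(p_i-\tfrac12)(\mu_i^{(1)}-\mu_i^{(0)})-c\,E_{p_i}|\hat\tau_{i,\mathrm{SW}}|$ with $c=\tfrac{\Gamma-1}{\Gamma+1}$. The crux is to show that the correction dominates the bias, i.e. $\max_{p_i}E_{p_i}[D^{\mathrm{SW}}_{i,\Gamma}]\le(1-c^2)m_i$. Two facts drive this: the admissible range forces $|p_i-\tfrac12|\le\tfrac{c}{2}$, so the bias is at most $\tfrac{c}{2}|\mu_i^{(1)}-\mu_i^{(0)}|$; and Jensen gives $E_{p_i}|\hat\tau_{i,\mathrm{SW}}|\ge p_i|\mu_i^{(1)}|+(1-p_i)|\mu_i^{(0)}|$, whose binding case makes the right-hand side an equality. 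Substituting these and maximizing the resulting linear-in-$p_i$ expression at an endpoint, a short case analysis on the signs of $\mu_i^{(1)}$ and $\mu_i^{(0)}$ shows that, at the binding value $w_i^{(a)}=|\mu_i^{(a)}|$, $|(\mu_i^{(1)}-\mu_i^{(0)})-c(w_i^{(1)}-w_i^{(0)})|-(w_i^{(1)}+w_i^{(0)})\le-2c\,m_i$, which yields the per-stratum bound. Summing over $i$ and invoking $\sum_i m_i=0$ gives $\sum_{i=1}^I\max_{p_i}E_{p_i}[D^{\mathrm{SW}}_{i,\Gamma}]\le0$, so the worst-case mean of $\overline{D}_{\Gamma,\mathrm{SW}}(\lambda_0)$ is nonpositive under $H^{\mathrm{SW}}_0$. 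I expect this step to be the main obstacle: unlike the sharp-null sensitivity analyses of \citet{rosenbaum2002observational}, here $|\hat\tau_{i,\mathrm{SW}}|$ is itself random because the weak null does not fix the unit-level effects, so the clean pointwise Rosenbaum bound fails and it is the sign case-analysis, together with aggregation through $\sum_i m_i=0$, that rescues validity.

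Finally, since IV assignments---hence the triples $(A_i,T_{i1},T_{i2})$ and the summands $D^{\mathrm{SW}}_{i,\Gamma}(\lambda_0)$---are independent across strata, I would apply a finite-population central limit theorem \citep{fcltxlpd2016} to $\sqrt I\,\overline{D}_{\Gamma,\mathrm{SW}}(\lambda_0)$ and show, exactly as in Remark~\ref{remark: SW-SATE variance}, that $S^2_{\Gamma,\mathrm{SW}}(\lambda_0)$ consistently estimates an upper bound on its variance. Combining the nonpositive worst-case mean with Slutsky's theorem, the studentized statistic $\overline{D}_{\Gamma,\mathrm{SW}}/S_{\Gamma,\mathrm{SW}}$ is, in the worst case over $\mathcal M_\Gamma$, asymptotically stochastically dominated by a standard normal, so $\limsup_I\Pr(\overline{D}_{\Gamma,\mathrm{SW}}/S_{\Gamma,\mathrm{SW}}\ge z_{1-\alpha})\le\alpha$ under $H^{\mathrm{SW}}_0$; inverting this family of one-sided tests over $\lambda_0$ delivers the one-sided confidence interval. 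The argument is structurally identical to that for Theorem~\ref{thm: aco biased_inference}, with the single-pair contrast replaced by the difference-in-differences.
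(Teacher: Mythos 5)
Your proposal is correct, and it reaches the crucial inequality $\mathrm{E}\bigl[\sum_{i=1}^I D^{\mathrm{SW}}_{i,\Gamma}(\lambda_0)\bigr]\le 0$ under $H^{\mathrm{SW}}_0$ by a genuinely different route than the paper. The paper conditions on the within-pair coins $(T_{i1},T_{i2})$, decomposes $\hat\tau_{i,\mathrm{SW}}^{(t_1,t_2)}=\bar\tau_{i,\mathrm{SW}}^{(t_1,t_2)}+(2A_i-1)\eta_{i,\mathrm{SW}}^{(t_1,t_2)}$, constructs an explicit stochastically dominating variable $U_{i,\Gamma}$ by replacing $(2A_i-1)\eta$ with $V_{i,\Gamma}\lvert\eta\rvert$, pushes the dominance through the increasing map $x\mapsto x-\tfrac{\Gamma-1}{\Gamma+1}\lvert x\rvert$, and invokes Lemma 2 of Fogarty (2020) to get $\mathrm{E}(U_{i,\Gamma})\le\bigl[1-(\tfrac{\Gamma-1}{\Gamma+1})^2\bigr]\,m_i$. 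You instead condition on the biased coin $A_i$, use conditional Jensen to lower-bound $\mathrm{E}\lvert\hat\tau_{i,\mathrm{SW}}\rvert$ by $p_i\lvert\mu_i^{(1)}\rvert+(1-p_i)\lvert\mu_i^{(0)}\rvert$, and maximize the resulting linear function of $p_i$ over $[\tfrac{1}{1+\Gamma},\tfrac{\Gamma}{1+\Gamma}]$ with a sign case analysis; I checked the four sign cases and your endpoint bound $\max_{p_i}\mathrm{E}_{p_i}[D^{\mathrm{SW}}_{i,\Gamma}]\le(1-c^2)m_i$ with $c=\tfrac{\Gamma-1}{\Gamma+1}$ is exactly the paper's per-stratum bound, so aggregation via $\sum_i m_i=0$ works identically. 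Your route is more elementary and self-contained (no auxiliary bounding variables, no external lemma), at the cost of an explicit case analysis; the paper's route delivers dominance of the entire conditional distribution rather than only the first moment, though only the mean is ultimately used. The remaining steps---Lyapunov-type CLT for the independent $D^{\mathrm{SW}}_{i,\Gamma}(\lambda_0)$, conservativeness of $S^2_{\Gamma,\mathrm{SW}}(\lambda_0)$ via the sample-variance-of-independent-variables lemma, and Slutsky---coincide with the paper's Step 3, though a complete write-up would need to state the nondegenerate-variance and vanishing-fourth-moment conditions (the paper's Conditions 1 and 3) that you leave implicit.
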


\begin{proof}
    We provide a sketch of the proof for Theorem \ref{thm: biased_inference}. The proof of Theorem \ref{thm: aco biased_inference} is analogous. Proving Theorem \ref{thm: biased_inference} consists of two key steps. The first step is to construct a bounding variable for the random variable 
     $D^{\mathrm{SW}}_{i,\Gamma}(\lambda_0)$
    conditional on each of the $4$ realizations of $(T_{i1}, T_{i2})$, which can be achieved using arguments in \citet{fogarty2020studentized}. Let $U_{i, \Gamma}^{(+1, +1)}$, $U_{i, \Gamma}^{(+1, -1)}$, $U_{i, \Gamma}^{(-1, +1)}$, and $U_{i, \Gamma}^{(-1, -1)}$ denote the bounding variable for 
    $D^{\mathrm{SW}}_{i,\Gamma}(\lambda_0)$
    when $(T_{i1}, T_{i2}) = (+1, +1)$, $(+1, -1)$, $(-1, +1)$, and $(-1, -1)$, respectively. Define a random variable $U_{i, \Gamma}$ that equals $U_{i, \Gamma}^{(+1, +1)}$, $U_{i, \Gamma}^{(+1, -1)}$, $U_{i, \Gamma}^{(-1, +1)}$, and $U_{i, \Gamma}^{(-1, -1)}$, each with probability $1/4$. In the second step, it can be shown that, under the null hypothesis $H_0: \lambda = \lambda_0$, $\mathbb{E}\left[\sum_{i=1}^I U_{i, \Gamma}\right] \leq 0,$ which then implies $\mathbb{E}[\sum_{i=1}^I 
     D^{\mathrm{SW}}_{i,\Gamma}(\lambda_0)
    ] \leq 0$ because 
     $D^{\mathrm{SW}}_{i,\Gamma}(\lambda_0)$
    is stochastically dominated by $U_{i, \Gamma}.$ The full details can be found in the Supplemental Material A. 
\end{proof}
\section{Simulation}
\label{sec: simulation}

We have two goals in the simulation studies. Section \ref{subsec: simu randomization} evaluates the finite-sample performance of the proposed tests for the SW-SATE and ACO-SATE under a randomization scheme. Section \ref{subsec: simulation biased RI} summarizes the performance of the proposed test under the partly biased randomization scheme.

\subsection{Validity and power of the proposed tests under a randomization scheme}
\label{subsec: simu randomization}
\subsubsection{Simulation setup}
\label{subsubsec: simu rand DGP}
We first verify the level and assess the power of the hypothesis test for the SW-SATE. We generate data for a PoP-NIV design consisting of $I$ strata. The first factor we vary is the number of matched strata:
\begin{description}
    \item[Factor 1:] Number of matched strata $I$: $100$, $500$, and $1000$.
\end{description}

For each unit $k$ in matched pair $j$ within stratum $i \in [I]$, its principal stratum membership $S_{ijk}$ is sampled from the following multinomial distribution: 
\begin{equation*}
\begin{split}
    &P(S_{ijk} = \text{SW}) = p; \\
    &P(S_{ijk} \in \{\text{ACO, AT-NT, AAT, NT-AT, ANT}\}) = (1-p)/5.
\end{split}
\end{equation*}

Recall switchers entail those who switch from a never-taker to a complier and from an always-taker to a complier. Half of switchers in our data-generating process are never-takers under a weaker IV, and the other half are always-takers. For each unit $ijk$, the potential outcomes $\{d_{T^a ijk}, d_{C^a ijk}, d_{T^b ijk}, d_{C^b ijk}\}$ are determined by $S_{ijk}$.

The second factor we vary is the proportion of switchers:
\begin{description}
    \item[Factor 2:] Proportion of switchers $p:$  $0.3$, $0.5$, and $0.7$.
\end{description}

Next, we generate potential ``per-protocol" outcomes $\{r_{d=0, ijk}, r_{d=1, ijk}\}$ as follows:
$$r_{d=0, ijk} \sim \mathcal{N}(0,1), ~~r_{d=1, ijk} = r_{d=0, ijk} + \tau_{ijk}, $$
where the treatment effect $\tau_{ijk}$ is heterogeneous and depends on $ijk$'s principal stratum membership: $\tau_{ijk} \sim \mathcal{N}(0.5,1)$ for always-compliers, $\tau_{ijk} \sim \mathcal{N}(0.1,1)$ for other non-switcher strata (AT-NT, AAT, NT-AT, or ANT), and $\tau_{ijk}$ is drawn from one of the following distributions for switchers: 
\begin{description}
    \item[Factor 3:] Treatment effect among switchers, $\tau^{\text{SW}}_{ijk}$, follows one of the three distributions: (i) $\tau^{\text{SW}}_{ijk} \sim \text{Unif}~[\mu - \sqrt{3}, \mu + \sqrt{3}]$; (ii) $\tau^{\text{SW}}_{ijk} \sim \mathcal{N}(\mu, 1)$; and (iii) $\tau^{\text{SW}}_{ijk} \sim \text{Exp}(1/\mu)$.
\end{description}

In each distribution, $\mu$ controls the magnitude of the effect size among switchers, which is varied as the fourth factor:
\begin{description}
    \item[Factor 4:] Effect size among switchers $\mu$: $0, 0.25, 0.5,$ and $1$.
\end{description}

After we generate participant $ijk$'s principal stratum $S_{ijk}$ and potential ``per-protocol" outcomes $\{r_{d=0, ijk}, r_{d=1, ijk}\}$, potential outcomes $\{r_{T^a ijk}, r_{C^a ijk}, r_{T^b ijk}, r_{C^b ijk}\}$ are then determined by $\{r_{d=0, ijk}, r_{d=1, ijk}, S_{ijk}\}$. 

Finally, we generate IV assignment vector $\boldsymbol{Z_i} = (Z_{i11}, Z_{i12}, Z_{i21}, Z_{i22})$ for each stratum $i$ independently according to the randomization scheme in Assumption \ref{ass: Randomization Assumption}. For each unit $ijk$, the observed treatment received and outcome are determined as follows:
\begin{align*}
     &D_{ijk} = d_{C^aijk}\mathbbm{I}\{{Z}_{ijk}=0_a\} + d_{T^aijk}\mathbbm{I}\{{Z}_{ijk}=1_a\}+d_{C^bijk}\mathbbm{I}\{{Z}_{ijk}=0_b\} + d_{T^bijk}\mathbbm{I}\{{Z}_{ijk}=1_b\}, \\
    &R_{ijk} = r_{d=1, ijk}\mathbbm{I}\{D_{ijk}=1\} + r_{d=0, ijk}\mathbbm{I}\{D_{ijk}=0\}.
\end{align*}

To summarize, Factors 1-4 define $108$ data-generating processes. For each DGP, we generated $1000$ datasets and recorded the true sample average treatment effect $\lambda_{\text{true}}$ among switchers in each dataset. We then evaluated the level and power of the randomization-based test in Theorem \ref{thm: RI} by conducting two tests: (i) $H_0: \lambda = \lambda_{\text{true}}$ and (ii) $H_0: \lambda = 0$. We also constructed confidence intervals for $\lambda_{\text{true}}$ by inverting the test.

\subsubsection{Simulation results}
\label{subsec: subsec: simu results rand}
Table \ref{tbl: simulation RI} summarizes the simulation results for different choices of the sample size $I$, switcher proportion $p$, and effect size $\mu$, when $\tau^{\text{SW}}_{ijk} \sim \mathcal{N}(\mu, 1)$. Reported metrics include type I error rate (level), empirical power, average length of the confidence intervals, and empirical coverage of the $95\%$ confidence intervals constructed according to Theorem \ref{thm: RI}. In addition, Table \ref{tbl: simulation RI} reports the average compliance rate $\bar{\iota}_g$ for each IV pair, the average true SW-SATE $\bar{\lambda}_{\text{true}}$, the average true standard deviation of the test statistic $\text{SD}[T(\lambda_{\text{true}})]$ under $\lambda = \lambda_{\text{true}}$, and the average standard error estimator $S(\lambda_{\text{true}})$, all averaged over $1000$ simulation replicates. Tables S2-S3 in the Supplemental Material B.1 summarize the same simulation results but for $\tau^{\text{SW}}_{ijk} \sim \text{Unif}~[\mu - \sqrt{3}, \mu + \sqrt{3}]$ and $\tau^{\text{SW}}_{ijk} \sim \text{Exp}(1/\mu)$.

We observed several consistent trends in the simulation studies. First, across all data-generating settings, the test maintained good type I error control, even with $I$ as small as $100$. The estimated standard errors slightly overestimated the true standard deviation, indicating that the variance estimator is mildly conservative. Second, as expected, the test’s power increased and the confidence interval width decreased as the number of strata $I$ grew and the proportion of switchers $p$ became larger. For example, when the effect distribution among switchers follows a normal distribution with mean $\mu = 1$, the empirical power is $0.76$ with a confidence interval width of $1.63$ for $I = 500$ strata and $p = 0.3$. Increasing the number of strata to 
$I=1000$ raised the empirical power to $0.96$ and narrowed the confidence interval to $1.10$. Further increasing the proportion of switchers to $p = 0.7$ (with $I = 1000$) yielded empirical power of 
$1.00$ and reduced the interval width to $0.44$. Third, fixing $I$ and $p$, the test showed higher power with larger effect sizes. For instance, with $I=500$ and $p=0.3$, when the effect distribution among switchers was uniform and the mean value increased from $0.25$ to $1.0$, empirical power rose from $0.11$ to $0.74$. Finally, the proposed method demonstrated robustness across a variety of effect size distributions.

In addition to the data-genrating processes described in Section \ref{subsubsec: simu rand DGP}, we also examined the scenario discussed in Remark \ref{remark: SW-SATE variance}, where each stratum contains exactly two switchers, and the treatment effect among switchers is constant. We verified that the variance estimator $S^2(\lambda_{\text{true}})$ is unbiased for the true variance under this setup; see Table~S4 in Supplementary Material B.1 for details.

\subsubsection{Hypothesis testing for the ACO-SATE}\label{subsubsec: addition simu rand}
We also evaluated the validity and power of the hypothesis test for the ACO-SATE, constructed according to Theorem~\ref{thm: RI for ACO-SATE}. The data-generating processes parallel those for SW-SATE in Section~\ref{subsubsec: simu rand DGP}, except that the roles of switchers and always-compliers are swapped. For each simulated dataset, we computed the true sample average treatment effect $\kappa_{\text{true}}$ for always-compliers and evaluated two hypothesis tests, $H_0: \kappa = \kappa_{\text{true}}$ and $H_0: \kappa = 0$. We also constructed confidence intervals by inverting the test. Performance was assessed in terms of type I error, empirical power, confidence interval length, and coverage. Supplemental Material~B.2.1 describes the simulation settings.

Tables~S5--S7 in Supplemental Material B.2.2 report the simulation results, which closely parallel those for SW-SATE. First, the test maintains a type I error rate near the nominal level across all combinations of number of strata $I$, proportion of always-compliers $p$, and the effect size $\mu$. Even with $I = 100$, the coverage of the 95\% confidence intervals stays close to the nominal level across all treatment effect distributions, demonstrating that the proposed randomization-based inference procedure performs reliably under finite samples and heterogeneous effect-size distributions. Second, increasing the number of strata $I$ or the proportion of always-compliers $p$ improves power and narrows confidence intervals. Finally, for fixed $I$ and $p$, larger effect sizes $\mu$ yield higher power, as expected.

\begin{table}[ht]
\centering
\caption{Simulation results over 1000 replicates for varying strata sizes ($I$), proportions of switchers ($p$), and effect size among switchers ($\mu$), under a normal distribution of treatment effects among switchers: $\tau^{\text{SW}}_{ijk} \sim \mathcal{N}(\mu, 1)$. Reported metrics include level, power, average compliance rate under the weaker IV pair $(\overline{\iota}_a)$, average compliance rate under the stronger IV pair $(\overline{\iota}_b)$, average SW-SATE $(\overline{\lambda}_{\text{true}})$, average length of the confidence interval, coverage probability of the confidence interval (cov. (\%)), the standard deviation of the test statistic under $\lambda = \lambda_{\text{true}}$ (SD[T($\lambda_{\text{true}}$)]), and the conservative standard error estimator ($S(\lambda_{\text{true}})$).   
}\label{tbl: simulation RI}
\resizebox{\textwidth}{!}{
\begin{tabular}{ccccccccccccc}
\midrule
\multirow{2}{*}{$I$}  & \multirow{2}{*}{$p$} & \multirow{2}{*}{$\mu$} & \multirow{2}{*}{level} & \multirow{2}{*}{power} & \multirow{2}{*}{$\bar{\iota}_a$} & \multirow{2}{*}{$\bar{\iota}_b$} & \multirow{2}{*}{$\bar{\lambda}_{\text{true}}$} & \multicolumn{2}{c}{CI}  & 100$\times$  & 100$\times$ \\ \cline{9-10}
  &  &  &  &  &  & & & length & cov.($\%$)&  SD[T($\lambda_{\text{true}}$)] & $S(\lambda_{\text{true}})$ \\
\midrule
  100 & 0.30 & 0.00 & 0.05 & 0.05 & 0.14 & 0.44 & 0.00 & 4.96 & 95.2 & 6.11 & 6.13 \\ 
  100 & 0.30 & 0.25 & 0.05 & 0.07 & 0.14 & 0.44 & 0.25 & 4.92 & 94.8 & 6.11 & 6.13  \\ 
  100 & 0.30 & 0.50 & 0.06 & 0.10 & 0.14 & 0.44 & 0.50 & 4.88 & 93.7 & 6.14 & 6.16 \\ 
  100 & 0.30 & 1.00 & 0.05 & 0.21 & 0.14 & 0.44 & 1.00 & 5.01 & 94.8 & 6.41 & 6.44  \\ 
  100 & 0.50 & 0.00 & 0.05 & 0.05 & 0.10 & 0.60 & 0.00 & 2.21 & 94.7 & 6.09 & 6.14  \\ 
  100 & 0.50 & 0.25 & 0.06 & 0.08 & 0.10 & 0.60 & 0.25 & 2.23 & 93.5 & 6.09 & 6.12 \\ 
  100 & 0.50 & 0.50 & 0.05 & 0.18 & 0.10 & 0.60 & 0.50 & 2.25 & 94.0 & 6.12 & 6.18 \\ 
  100 & 0.50 & 1.00 & 0.04 & 0.52 & 0.10 & 0.60 & 1.00 & 2.33 & 95.1 & 6.31 & 6.37  \\ 
  100 & 0.70 & 0.00 & 0.04 & 0.04 & 0.06 & 0.76 & 0.00 & 1.45 & 95.3 & 6.04 & 6.12  \\ 
  100 & 0.70 & 0.25 & 0.04 & 0.11 & 0.06 & 0.76 & 0.25 & 1.44 & 95.6 & 6.04 & 6.13  \\ 
  100 & 0.70 & 0.50 & 0.06 & 0.30 & 0.06 & 0.76 & 0.50 & 1.44 & 94.4 & 6.07 & 6.14  \\ 
  100 & 0.70 & 1.00 & 0.06 & 0.80 & 0.06 & 0.76 & 1.00 & 1.48 & 93.1 & 6.22 & 6.30  \\ 
  500 & 0.30 & 0.00 & 0.04 & 0.05 & 0.14 & 0.44 & 0.00 & 1.54 & 95.1 & 2.74 & 2.75  \\ 
  500 & 0.30 & 0.25 & 0.05 & 0.10 & 0.14 & 0.44 & 0.25 & 1.55 & 95.1 & 2.73 & 2.75  \\ 
  500 & 0.30 & 0.50 & 0.04 & 0.27 & 0.14 & 0.44 & 0.50 & 1.57 & 95.7 & 2.76 & 2.77 \\ 
  500 & 0.30 & 1.00 & 0.04 & 0.76 & 0.14 & 0.44 & 1.00 & 1.63 & 95.7 & 2.86 & 2.88  \\ 
  500 & 0.50 & 0.00 & 0.03 & 0.04 & 0.10 & 0.60 & 0.00 & 0.87 & 96.0 & 2.72 & 2.75 \\ 
  500 & 0.50 & 0.25 & 0.04 & 0.21 & 0.10 & 0.60 & 0.25 & 0.87 & 95.5 & 2.72 & 2.74  \\ 
  500 & 0.50 & 0.50 & 0.05 & 0.61 & 0.10 & 0.60 & 0.50 & 0.87 & 93.7 & 2.73 & 2.76  \\ 
  500 & 0.50 & 1.00 & 0.05 & 0.99 & 0.10 & 0.60 & 1.00 & 0.91 & 94.7 & 2.83 & 2.86 \\ 
  500 & 0.70 & 0.00 & 0.04 & 0.04 & 0.06 & 0.76 & 0.00 & 0.61 & 95.5 & 2.70 & 2.74  \\ 
  500 & 0.70 & 0.25 & 0.04 & 0.38 & 0.06 & 0.76 & 0.25 & 0.61 & 94.6 & 2.70 & 2.74  \\ 
  500 & 0.70 & 0.50 & 0.04 & 0.89 & 0.06 & 0.76 & 0.50 & 0.61 & 95.4 & 2.71 & 2.75 \\ 
  500 & 0.70 & 1.00 & 0.05 & 1.00 & 0.06 & 0.76 & 1.00 & 0.63 & 94.0 & 2.78 & 2.82  \\ 
  1000 & 0.30 & 0.00 & 0.05 & 0.05 & 0.14 & 0.44 & 0.00 & 1.05 & 94.5 & 1.93 & 1.95  \\ 
  1000 & 0.30 & 0.25 & 0.04 & 0.17 & 0.14 & 0.44 & 0.25 & 1.04 & 96.0 & 1.93 & 1.94  \\ 
  1000 & 0.30 & 0.50 & 0.05 & 0.49 & 0.14 & 0.44 & 0.50 & 1.05 & 94.2 & 1.94 & 1.95  \\ 
  1000 & 0.30 & 1.00 & 0.05 & 0.96 & 0.14 & 0.44 & 1.00 & 1.10 & 94.5 & 2.03 & 2.04  \\ 
  1000 & 0.50 & 0.00 & 0.04 & 0.04 & 0.10 & 0.60 & 0.00 & 0.61 & 96.0 & 1.92 & 1.94  \\ 
  1000 & 0.50 & 0.25 & 0.04 & 0.35 & 0.10 & 0.60 & 0.25 & 0.60 & 94.8 & 1.92 & 1.94  \\ 
  1000 & 0.50 & 0.50 & 0.05 & 0.88 & 0.10 & 0.60 & 0.50 & 0.61 & 95.0 & 1.93 & 1.95  \\ 
  1000 & 0.50 & 1.00 & 0.06 & 1.00 & 0.10 & 0.60 & 1.00 & 0.63 & 94.4 & 2.00 & 2.02  \\ 
  1000 & 0.70 & 0.00 & 0.04 & 0.04 & 0.06 & 0.76 & 0.00 & 0.43 & 95.5 & 1.91 & 1.94 \\ 
  1000 & 0.70 & 0.25 & 0.05 & 0.60 & 0.06 & 0.76 & 0.25 & 0.43 & 94.9 & 1.91 & 1.94  \\ 
  1000 & 0.70 & 0.50 & 0.04 & 0.99 & 0.06 & 0.76 & 0.50 & 0.43 & 94.4 & 1.92 & 1.95  \\ 
  1000 & 0.70 & 1.00 & 0.06 & 1.00 & 0.06 & 0.76 & 1.00 & 0.44 & 94.2 & 1.97 & 1.99 \\ 
 \hline
\end{tabular}}
\end{table}

\subsection{Validity under a partly biased randomization scheme}\label{subsec: simulation biased RI}
We next evaluated the validity of Theorem \ref{thm: biased_inference} under a partly biased randomization scheme $\mathcal{M}_{\Gamma}$. Simulation results for Theorem \ref{thm: aco biased_inference} is analogous and omitted. We first simulated potential outcomes 
$\{r_{T^a ijk}, r_{C^a ijk}, r_{T^b ijk}, r_{C^b ijk}\}$,
determined by $\{r_{d=0, ijk}, r_{d=1, ijk}, S_{ijk}\}$, following the data-generating process described in Section \ref{subsubsec: simu rand DGP} with $\tau^{\text{SW}}_{ijk} \sim \mathcal{N}(0, 1)$ and $p=0.5$. The observed IV assignment vector $\boldsymbol{Z}_i$ was then determined within each stratum in two steps. First, within each pair, two units were randomly assigned to treatment or control ($0_g$ or $1_g$ with $g \in \{a,b\}$ to be determined). In the second step, we determined which pair received the stronger IV $(0_b,1_b)$ and which received the weaker IV $(0_a,1_a)$ as follows. Within each stratum we computed (1) the treated-minus-control transformed outcome difference under the IV pair $(0_b, 1_b)$ in the first pair minus that under the IV pair $(0_a, 1_a)$ in the second pair, and (2) the treated-minus-control transformed outcome difference under the IV pair $(0_b, 1_b)$ in the second pair minus that under the IV pair $(0_a, 1_a)$ in the first pair. The stronger IV pair $(0_b,1_b)$ was then assigned to the first pair with probability $\pi_i$ if (1) exceeded (2), and to the second pair with probability $\pi_i$ otherwise.

We considered two submodels of $\mathcal{M}_{\Gamma}$: (I) $\pi_{i} 
\sim \text{Unif} \left[\frac{1}{2}, \frac{\Gamma}{\Gamma + 1} \right]$; and (II)  $\pi_i = \frac{\Gamma}{\Gamma + 1}$. In addition, we considered three sample sizes: $I = 100$, $500$, and $1000$, and we examined a range of  $\Gamma$ from $\Gamma = 1.1$ to $\Gamma = 2$. For each simulation setting, we tested the null hypothesis $H_0: \lambda = \lambda_{\text{true}}$ at $0.05$ level using randomization-based method (Theorem \ref{thm: RI}) or biased randomization-based method (Theorem \ref{thm: biased_inference}).
Table S8, Panel A in Supplemental Material B.3 reports the proportion of times the null hypothesis was rejected across $1000$ simulated datasets, using inference based on Theorems \ref{thm: RI} and \ref{thm: biased_inference}, under a fixed $\pi_i$ submodel, $I$, and $\Gamma$. Two key findings emerge. First, when data were generated under a partly biased randomization scheme, randomization inference for the SW-SATE using Theorem \ref{thm: RI} exhibited inflated Type I error rates, with the degree of inflation increasing in $\Gamma$ (i.e., as the maximum allowable bias grew). Second, inference based on Theorem \ref{thm: biased_inference} consistently controlled the Type I error rate at the nominal level. The resulting procedure is highly conservative under Submodel I, because the theorem is derived under the worst-case $\mathcal{M}_\Gamma$, whereas Submodel I does not reflect such worst-case scenarios. Under Submodel II, the inference based on Theorem \ref{thm: biased_inference} was only slightly conservative.

We next considered settings where the sensitivity analysis was less conservative. In a second set of simulations, we fixed compliance status across strata, with each stratum containing one pair of switchers and one pair of always-compliers. Baseline outcomes were generated as $r_{d=0,ijk} \sim \mathcal{N}(0,0.1^2)$, and treatment effects were held constant within compliance subgroups: $\tau^{\text{SW}}_{ijk} = 0$ for switchers and $\tau^{\text{ACO}}_{ijk} = 0.5$ for always-compliers. Finally, we set $\pi_i = \frac{\Gamma}{\Gamma + 1}$. Panel B of Table S8 summarizes the results. In this setup, hypothesis testing using Theorem \ref{thm: biased_inference} produced rejection rates closer to the nominal 5\% level while maintaining statistical validity.
\section{Revisiting the PLCO Study}
\label{sec: case study}
Table \ref{table: outcome after matching} reports the number of confirmed colorectal cancer cases across the four matched groups in the PoP-NIV design: treatment and control groups before and after 1997. We estimated the always-complier sample average treatment effect based on the covariate-balanced design shown in Table \ref{table: cov balance after matching}. We estimated that 52.2\% participants (95\% CI: 50.4\% to 53.9\%)  in the entire cohort were always-compliers. Applying Theorem \ref{thm: RI for ACO-SATE}, we obtained a 95\% confidence interval for the ACO-SATE ranging from $-1.90\%$ to $0.65\%$, with a point estimate of $-0.62\%$. This provides weak evidence that colorectal cancer screening reduced incidence among always-compliers.

Always-compliers---those who comply even under a weak incentive---may disproportionately represent individuals with stronger health-seeking behaviors or poorer baseline health. A natural next question is: what would be the broader public health impact if colorectal cancer screening were scaled up in such a way that more individuals participated? To address this, we estimated the sample average treatment effect among switchers---those induced into compliance by the stronger incentive---as $1.46\%$ (95\% CI: $-1.97\%$ to $4.91\%$), with switchers comprising $26.7\%$ (95\% CI: 24.5\% to 28.9\%) of the study population. 

This finding is qualitatively consistent with efficient influence function–based estimates in \citet{wang2024nested} and aligns well with the raw data: while compliance rates improved after 1997, the intention-to-treat effect did not increase and may even have declined. Intuitively, this suggests that the additional compliers brought in by the stronger instrument derived little benefit, perhaps because their baseline colorectal cancer risk was already low.

Taken together, always-compliers and switchers constitute nearly 80\% of the study cohort. The data provide no direct information about the treatment effect in the remaining study cohort, which includes mixtures of always-taker-never-takers (AT-NTs), always-taker-takers (AATs), never-taker-always-takers (NT-ATs), and always-never-takers (ANTs). If one is willing to impose further assumptions---for example, that the treatment effect in this group equals that among switchers---then it is possible to extrapolate and estimate the sample average treatment effect in the full cohort as $0.37\%$ (95\% CI: -2.55\% to 3.31\%).

\begin{table}[H]
\centering
\caption{Sample size, percentage of treatment uptake, and the number of confirmed colorectal cancer cases in the matched cohort.}
\label{table: outcome after matching}
\resizebox{\textwidth}{!}{
\begin{tabular}{lllllll}
\toprule
& \multicolumn{2}{c}{Prior to 1997 ($G= a$)} & \multicolumn{2}{c}{After 1997 ($G= b$)}& &\\
 & Control arm & Treatment arm & Control arm & Treatment arm &  \\ 
 \toprule
Sample size&  3071 &  3071 &  3071 &  3071 &  &  \\ 
 Treatment uptake (\%) &     0 (0.0)  &  1602 (52.2)  &     0 (0.0)  &  2422 (78.9)  &   \\ 
  Confirmed colorectal cancer (\%) &    61 (2.0)  &    51 (1.7)  &    45 (1.5)  &    47 (1.5)  &   \\ 
  \toprule
\end{tabular}}
\end{table}

\section{Summary and discussion}
\label{sec: discussion}
A common but often implicit assumption in instrumental variable analyses is that the instrumental variable has only one single version. This assumption is embedded in Rubin’s Stable Unit Treatment Value Assumption and violated when the definition of the IV is not unique. For example, randomization to treatment versus control, or encouragement to adhere to an assigned treatment, may involve trial- or site-specific features. Such heterogeneity is evident in variation of compliance rates across clinical sites within the same multicenter trial or across different trials evaluating the same intervention. For instance, in a meta-analysis of epidural analgesia, \citet{zhou2019bayesian} reported compliance rates ranging from less than 5\% to more than 50\% across $10$ studies. In this article, together with \citet{wang2024nested}, we show that such violations of SUTVA can in fact create new opportunities: certain principal stratum effects become non-parametrically identifiable under a nested IV framework.

In contrast to \citet{wang2024nested}, we study design-based inference within a nested IV framework. A design-based approach eliminates the need to estimate nuisance functions and often achieves better finite-sample performance. Two distinctive features of this framework merit emphasis. First, inference is made possible by a novel design of observational data, the ``pair-of-pairs” design. Second, the design-based perspective naturally accommodates examining violations of the no unmeasured confounding assumption at the site or trial level. To this end, we introduced a novel ``hierarchical” partly biased randomization model, which may be of independent interest.

The set of compliers under the stronger IV pair $(0_b, 1_b)$ is precisely the union of always-compliers and switchers. Their ``combined" treatment effect can therefore be estimated using the standard IV approach applied to the stronger IV pair $(0_b, 1_b)$. What, then, does the nested IV framework add? A key advantage of a nested IV analysis is its ability to decompose the overall effect among strong-IV compliers into two distinct groups: always-compliers and switchers. This decomposition makes it possible to detect heterogeneity in treatment effects and, in particular, to uncover a meaningful gradient in effect size. For example, in the PLCO trial, a conventional analysis of compliers after 1997 might suggest that screening had only a negligible impact. The nested IV framework, however, offers a richer interpretation: the negligible overall effect reflects a modest but likely genuine benefit among always-compliers combined with essentially a null effect among switchers. This gradient---from a beneficial point estimate in always-compliers to a null effect in switchers---provides substantive insights that would be otherwise missed under a traditional IV analysis.

\section*{Acknowledgment}
We are grateful to all study participants of the Prostate, Lung, Colorectal and Ovarian (PLCO) Cancer Screening Trial. De-identified PLCO trial data can be requested via the Natioanl Cancer Institute Cancer Data Access System. This work was supported by NSF DMS 2400961 awarded to Dr. Xinran Li.
\onehalfspacing
\bibliographystyle{apalike}
\bibliography{paper-ref}

\clearpage

\def\bs{\boldsymbol}
\def\I{\mathbbm{1}}

\setcounter{figure}{0}
\renewcommand{\thefigure}{S\arabic{figure}}

\setcounter{table}{0}
\renewcommand{\thetable}{S\arabic{table}}

\makeatletter
\renewcommand{\algocf@captiontext}[2]{#1\algocf@typo. \AlCapFnt{}#2} 
\renewcommand{\AlTitleFnt}[1]{#1\unskip}
\def\@algocf@capt@plain{top}
\renewcommand{\algocf@makecaption}[2]{%
  \addtolength{\hsize}{\algomargin}%
  \sbox\@tempboxa{\algocf@captiontext{#1}{#2}}%
  \ifdim\wd\@tempboxa >\hsize
  \hskip .5\algomargin%
  \parbox[t]{\hsize}{\algocf@captiontext{#1}{#2}}
  \else%
  \global\@minipagefalse%
  \hbox to\hsize{\box\@tempboxa}
  \fi%
  \addtolength{\hsize}{-\algomargin}%
}
\makeatother



\sectionfont{\bfseries\large\sffamily}%

\subsectionfont{\bfseries\sffamily\normalsize}%




\begin{center}
    \Large Supplemental Materials to ``Design-based nested instrumental variable analysis" 
\end{center}

\section*{Supplemental Material A: Proofs}
\subsection*{A.1 Regularity conditions and lemmas}
Throughout the proofs, define for each unit $ijk$ the adjusted outcomes under stage $g\in\{a,b\}$:
     $$y_{T^gijk} = r_{T^gijk} - \lambda_0 d_{T^gijk},\quad y_{C^gijk} = r_{C^gijk} - \lambda_0 d_{C^gijk}.$$ 

\noindent Define for each stratum $i$ and treatment assignment realization $(t_1, t_2)\in \{(+1,+1), (+1,-1), (-1,+1), (-1,-1)\}$, the quantities $\eta_{i,\mathrm{ACO}}^{(t_1, t_2)}$ and $\eta_{i,\text{SW}}^{(t_1, t_2)}$ as follows:
\begin{align*}
&\eta_{i,\text{ACO}}^{(1,-1)} 
= \tfrac{1}{2}\!\left[ 
 (y_{T^ai11} - y_{C^ai12})-(y_{T^ai22} - y_{C^ai21})) \right],\\
&\eta_{i,\text{ACO}}^{(-1,-1)} 
= \tfrac{1}{2}\!\left[  
 (y_{T^ai12} - y_{C^ai11})-(y_{T^ai22} - y_{C^ai21}) \right],
\\
&\eta_{i,\text{ACO}}^{(-1,1)} 
= \tfrac{1}{2}\!\left[(y_{T^ai12} - y_{C^ai11})-(y_{T^ai21} - y_{C^ai22}) 
 \right],
\\
&\eta_{i,\text{ACO}}^{(1,1)} 
= \tfrac{1}{2}\!\left[ 
 (y_{T^ai11} - y_{C^ai12}) -(y_{T^ai21} - y_{C^ai22})\right].
\end{align*}
\vspace{-1.7cm}
\begin{align*}
&\eta_{i,\text{SW}}^{(1,-1)} 
= \tfrac{1}{2}\!\left[(y_{T^ai22} - y_{C^ai21}) + (y_{T^bi22} - y_{C^bi21})\right] 
- \tfrac{1}{2}\!\left[(y_{T^ai11} - y_{C^ai12}) + (y_{T^bi11} - y_{C^bi12})\right],\\
&\eta_{i,\text{SW}}^{(-1,-1)} 
= \tfrac{1}{2}\!\left[(y_{T^ai22} - y_{C^ai21}) + (y_{T^bi22} - y_{C^bi21})\right] 
- \tfrac{1}{2}\!\left[(y_{T^ai12} - y_{C^ai11}) + (y_{T^bi12} - y_{C^bi11})\right],
\\
&\eta_{i,\text{SW}}^{(-1,1)} 
= \tfrac{1}{2}\!\left[(y_{T^ai21} - y_{C^ai22}) + (y_{T^bi21} - y_{C^bi22})\right] 
- \tfrac{1}{2}\!\left[(y_{T^ai12} - y_{C^ai11}) + (y_{T^bi12} - y_{C^bi11})\right],
\\
&\eta_{i,\text{SW}}^{(1,1)} 
= \tfrac{1}{2}\!\left[(y_{T^ai21} - y_{C^ai22}) + (y_{T^bi21} - y_{C^bi22})\right] 
- \tfrac{1}{2}\!\left[(y_{T^ai11} - y_{C^ai12}) + (y_{T^bi11} - y_{C^bi12})\right].
\end{align*}
All results are established conditional upon the set of potential outcomes $\mathcal{F}$ and the set of possible IV configurations $\mathcal{Z}$.
\begin{condition} \label{con: bdd4thmoment}
 For $j,k \in \{1,2\}$, there exist constants $\mu_{T^ajk}$, $\mu_{C^ajk}$, $\mu_{T^bjk}$ and $\mu_{C^bjk}$ such that
  $$\lim\limits_{I\to \infty}{I}^{-1}\sum_{i=1}^Iy_{T^aijk}=\mu_{T^ajk}, \quad \lim\limits_{I\to \infty}{I}^{-1}\sum_{i=1}^I y_{C^aijk} =\mu_{C^ajk}, $$
 $$\lim\limits_{I\to \infty}{I}^{-1}\sum_{i=1}^Iy_{T^bijk} =\mu_{T^bjk}, \quad \lim\limits_{I\to \infty}{I}^{-1}\sum_{i=1}^I y_{C^bijk}=\mu_{C^bjk},$$
  $$\lim\limits_{I\to \infty}{I}^{-2}\sum_{i=1}^Iy_{T^aijk}^2=0, \quad \lim\limits_{I\to \infty}{I}^{-2}\sum_{i=1}^I y_{C^aijk}^2 =0, \quad \lim\limits_{I\to \infty}{I}^{-2}\sum_{i=1}^Iy_{T^bijk}^2 =0, \quad \lim\limits_{I\to \infty}{I}^{-2}\sum_{i=1}^I y_{C^bijk}^2=0,$$
    $$\lim\limits_{I\to \infty}{I}^{-2}\sum_{i=1}^Iy_{T^aijk}^4=0, \quad \lim\limits_{I\to \infty}{I}^{-2}\sum_{i=1}^I y_{C^aijk}^4 =0, \quad \lim\limits_{I\to \infty}{I}^{-2}\sum_{i=1}^Iy_{T^bijk}^4 =0, \quad \lim\limits_{I\to \infty}{I}^{-2}\sum_{i=1}^I y_{C^bijk}^4=0$$
\end{condition}

\begin{condition} \label{con: RI_bdd2ndmoment}
    There exists a constant $C > 0$ such that
    $$\lim\limits_{I\to \infty}\frac{1}{I}\sum_{i=1}^I  \mathrm{Var}(V_i)>C, \quad\lim\limits_{I\to \infty}\frac{1}{I}\sum_{i=1}^I  \mathrm{Var}(V_{i, \mathrm{ACO}}(\kappa_0))>C,  \quad\lim\limits_{I\to \infty}\frac{1}{I}\sum_{i=1}^I  \mathrm{Var}(V_{i,\mathrm{SW}}(\lambda_0))>C.$$ 
\end{condition}

\begin{condition} \label{con: bdd2ndmoment}
    There exists a constant $C > 0$ such that
    $$\lim\limits_{I\to \infty}\frac{1}{I}\sum_{i=1}^I  \sum_{{t_1\in \{1,-1\}}} \sum_{{t_2\in \{1,-1\}}}\left(\eta_{i,\mathrm{ACO}}^{(t_1, t_2)}\right)^2>C,$$   
    and
    $$\lim\limits_{I\to \infty}\frac{1}{I}\sum_{i=1}^I  \sum_{{t_1\in \{1,-1\}}} \sum_{{t_2\in \{1,-1\}}}\left(\eta_{i,\mathrm{SW}}^{(t_{1}, t_{2})}\right)^2>C,$$
\end{condition}

\begin{lemma}[Conservative Variance Estimation]\label{lemma: conservative variance}
For independent random variables $X_1,\dots,X_I$, the expectation of the sample variance devided further by $I$ provides a conservative estimate of the variance of the sample mean:
\begin{align*}
\mathrm{E}\left[\frac{1}{I(I-1)}\sum_{i = 1}^I (X_i - \bar{X})^2\right] \geq \text{Var}(\bar{X})
\end{align*}
with equality if and only if $\text{E}(X_i)$ is constant across all $i$.
\end{lemma}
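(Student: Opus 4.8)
The plan is to carry out a direct second-moment computation that decomposes the expected (scaled) sample variance into the variance of the sample mean plus a nonnegative ``between-stratum'' term. First I would fix notation: write $\mu_i = \mathrm{E}[X_i]$, $\sigma_i^2 = \mathrm{Var}(X_i)$, and $\bar\mu = I^{-1}\sum_{i=1}^I \mu_i$. Because $X_1,\dots,X_I$ are independent, all cross-covariances vanish and the variance of the sample mean is
\[
\mathrm{Var}(\bar X) = \frac{1}{I^2}\sum_{i=1}^I \sigma_i^2 .
\]

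Next I would invoke the algebraic identity $\sum_{i=1}^I (X_i - \bar X)^2 = \sum_{i=1}^I X_i^2 - I\bar X^2$ and take expectations term by term. Using $\mathrm{E}[X_i^2] = \sigma_i^2 + \mu_i^2$ together with $\mathrm{E}[\bar X^2] = \mathrm{Var}(\bar X) + \bar\mu^2 = I^{-2}\sum_{i=1}^I \sigma_i^2 + \bar\mu^2$, a short rearrangement yields
\[
\mathrm{E}\!\left[\sum_{i=1}^I (X_i - \bar X)^2\right] = \frac{I-1}{I}\sum_{i=1}^I \sigma_i^2 + \sum_{i=1}^I (\mu_i - \bar\mu)^2 ,
\]
where the last term arises by recognizing $\sum_{i=1}^I \mu_i^2 - I\bar\mu^2 = \sum_{i=1}^I (\mu_i - \bar\mu)^2$.

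Dividing through by $I(I-1)$ then gives
\[
\mathrm{E}\!\left[\frac{1}{I(I-1)}\sum_{i=1}^I (X_i - \bar X)^2\right] = \frac{1}{I^2}\sum_{i=1}^I \sigma_i^2 + \frac{1}{I(I-1)}\sum_{i=1}^I (\mu_i - \bar\mu)^2 = \mathrm{Var}(\bar X) + \frac{1}{I(I-1)}\sum_{i=1}^I (\mu_i - \bar\mu)^2 .
\]
The residual term is a sum of squares, hence nonnegative, which establishes the claimed inequality; it vanishes precisely when $\mu_i = \bar\mu$ for every $i$, i.e.\ when $\mathrm{E}(X_i)$ is constant across $i$, yielding the stated equality condition.

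I do not anticipate a genuine obstacle, as this is a standard finite-population / ANOVA-style variance decomposition. The only step requiring care is the use of the independence hypothesis, which enters exactly once, in discarding the cross-covariances $\mathrm{Cov}(X_i, X_{i'})$ for $i \neq i'$ so that $\mathrm{Var}(\bar X) = I^{-2}\sum_{i=1}^I \sigma_i^2$. I would flag this explicitly, and note that the equality characterization is immediate from the positive-definiteness of the between-stratum sum of squares.
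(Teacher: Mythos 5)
Your proof is correct and follows essentially the same route as the paper's: expand $\sum_i (X_i-\bar X)^2 = \sum_i X_i^2 - I\bar X^2$, take expectations, use independence to identify $\sum_i \sigma_i^2 = I^2\mathrm{Var}(\bar X)$, and recognize the leftover term as the nonnegative sum $\sum_i(\mu_i-\bar\mu)^2$, which vanishes exactly when the means are constant. No issues.
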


\begin{proof}
We decompose the expectation of the sample variance as follows:
\begin{align*}
\mathrm{E}\left[\sum_{i = 1}^I (X_i - \bar{X})^2\right]&=
\mathrm{E}\left[\sum_{i = 1}^I X_i^2\right] - I\mathrm{E}\left[\bar{X}^2\right]\\
&=\sum_{i = 1}^I\left[\text{Var}(X_i)+\text{E}^2(X_i) \right] -I\left[\text{Var}(\bar{X})+\text{E}^2(\bar{X}) \right]\\
&=\sum_{i = 1}^I\text{Var}(X_i)-I\text{Var}(\bar{X}) +\sum_{i = 1}^I\text{E}^2(X_i)-I\text{E}^2(\bar{X})\\
&=I^2\text{Var}(\bar{X})-I\text{Var}(\bar{X}) +\sum_{i = 1}^I \left[\text{E}(X_i) - \text{E}(\bar{X})\right]^2\\
&=I(I-1)\text{Var}(\bar{X}) +\sum_{i = 1}^I \left[\text{E}(X_i) - \text{E}(\bar{X})\right]^2
\end{align*}
Therefore,
\begin{align*}
\mathrm{E}\left[\frac{1}{I(I-1)}\sum_{i = 1}^I (X_i - \bar{X})^2\right] &= \text{Var}(\bar{X}) +\frac{1}{I(I-1)}\sum_{i = 1}^I \left[\text{E}(X_i) - \text{E}(\bar{X})\right]^2
\geq \text{Var}(\bar{X}),
\end{align*}
where equality holds 
if and only if
$\text{E}(X_i)$ is constant for all $i$.
\end{proof}

 \begin{lemma}
 \label{lemma: CLT}
Let \(D_1,\dots,D_I\) be independent random variables with means \(\mu_i=\mathrm{E}(D_i)\) and variances \(\sigma_i^2=\operatorname{Var}(D_i)\).
Assume there exists a constant \(c>0\) and that as \(I\to\infty\):
\begin{enumerate}
  \item[(i)] (Nondegeneracy) \(\;I^{-1}\sum_{i=1}^I\sigma_i^2 \ge c.\)
  \item[(ii)] (Fourth-moment vanishing) \(\displaystyle I^{-2}\sum_{i=1}^I \mathrm{E}\big(D_i^4\big)\rightarrow 0.\)
\end{enumerate}
Then, writing \(\overline{D}_I = I^{-1}\sum_{i=1}^I D_i\), we have
\[
\frac{\sqrt{I}\bar{D}_I-\mathrm{E}(\sqrt{I}\bar{D}_I)}{\sqrt{\operatorname{var}(\sqrt{I}\bar{D}_I)}}\xrightarrow{d}\mathcal{N}(0,1), \quad \text{as}~I \rightarrow \infty.
\]
\end{lemma}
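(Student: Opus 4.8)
The plan is to recognize Lemma~\ref{lemma: CLT} as an instance of the classical Lindeberg--Feller central limit theorem for a sum of independent, non-identically distributed summands, and to establish it by verifying the Lyapunov condition of order $2+\delta$ with $\delta=2$, which the fourth-moment hypothesis (ii) is tailor-made to handle. The first step is a bookkeeping reduction: the factor $I^{-1/2}$ in $\sqrt{I}\,\overline{D}_I = I^{-1/2}\sum_{i=1}^I D_i$ cancels under standardization, so that, writing $s_I^2 := \sum_{i=1}^I \sigma_i^2$,
\[
\frac{\sqrt{I}\,\overline{D}_I - \mathrm{E}(\sqrt{I}\,\overline{D}_I)}{\sqrt{\operatorname{Var}(\sqrt{I}\,\overline{D}_I)}}
= \frac{\sum_{i=1}^I (D_i - \mu_i)}{s_I}.
\]
It therefore suffices to show $s_I^{-1}\sum_{i=1}^I (D_i - \mu_i) \xrightarrow{d}\mathcal{N}(0,1)$, which is exactly the Lindeberg--Feller conclusion for the independent, centered variables $D_i - \mu_i$.

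Next I would verify Lyapunov's condition, namely $s_I^{-4}\sum_{i=1}^I \mathrm{E}|D_i - \mu_i|^4 \to 0$. For the numerator, the $c_r$-inequality $|a-b|^4 \le 8(|a|^4 + |b|^4)$ combined with Jensen's inequality $\mu_i^4 = |\mathrm{E}(D_i)|^4 \le \mathrm{E}(D_i^4)$ gives
\[
\mathrm{E}|D_i - \mu_i|^4 \;\le\; 8\big(\mathrm{E}(D_i^4) + \mu_i^4\big) \;\le\; 16\,\mathrm{E}(D_i^4),
\]
so that $\sum_{i=1}^I \mathrm{E}|D_i - \mu_i|^4 \le 16\sum_{i=1}^I \mathrm{E}(D_i^4)$. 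For the denominator, the nondegeneracy condition (i) yields $s_I^2 \ge cI$ for all large $I$, hence $s_I^4 \ge c^2 I^2$. Combining the two bounds,
\[
\frac{\sum_{i=1}^I \mathrm{E}|D_i - \mu_i|^4}{s_I^4}
\;\le\; \frac{16}{c^2}\, I^{-2}\sum_{i=1}^I \mathrm{E}(D_i^4) \;\longrightarrow\; 0
\]
by condition (ii). Since Lyapunov's condition implies the Lindeberg condition, the Lindeberg--Feller theorem delivers the stated convergence.

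The argument is essentially mechanical, so there is no serious analytic obstacle; the only genuine subtlety is that hypothesis (ii) controls the \emph{raw} fourth moments $\mathrm{E}(D_i^4)$, whereas Lyapunov's ratio involves the \emph{central} fourth moments $\mathrm{E}|D_i-\mu_i|^4$, and the passage between them is precisely what the $c_r$- and Jensen inequalities in the second step supply. The remaining point to keep track of is that the denominator must not collapse, which is guaranteed by condition (i); finiteness of each $\sigma_i^2$ is automatic since $\sigma_i^2 \le \{\mathrm{E}(D_i^4)\}^{1/2} < \infty$. The same template applies verbatim to whichever of the design-based statistics $\overline{V}$, $T_{\mathrm{ACO}}(\kappa_0)$, $T_{\mathrm{SW}}(\lambda_0)$, $\overline{D}_{\Gamma,\mathrm{ACO}}(\kappa_0)$, and $\overline{D}_{\Gamma,\mathrm{SW}}(\lambda_0)$ one wishes to treat, once Conditions~\ref{con: bdd4thmoment}--\ref{con: bdd2ndmoment} are invoked to certify (i) and (ii) for the corresponding per-stratum summands.
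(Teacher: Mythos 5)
Your proposal is correct and follows essentially the same route as the paper's proof: both verify Lyapunov's condition with exponent $4$ by combining the $c_r$-inequality $|x-\mu|^4\le 8(x^4+\mu^4)$ with Jensen's inequality to bound the central fourth moments by $16\,\mathrm{E}(D_i^4)$, and both use the nondegeneracy condition to lower-bound the variance sum before invoking condition (ii). The only cosmetic difference is that the paper carries the $I^{-1/2}$ normalization inside the summands while you standardize by $s_I$ directly; the resulting ratio bound is identical.
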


\begin{proof}
For any real numbers \(x\) and \(\mu\), the elementary inequality
$
|x-\mu|^4 \le 8(x^4+\mu^4)
$
holds. Applying this to \(D_i\) and taking expectations gives
\[
\mathrm{E}|D_i-\mu_i|^4 \le 8\{\mathrm{E}(D_i^4)+\mu_i^4\}.
\]
 Since the function \(f(x)=x^4\) is convex on \(\mathbb{R}\), Jensen's inequality yields
\[
\mu_i^4 = f\big(\mathrm{E}[D_i]\big) \le \mathrm{E}\big(f(D_i)\big)=\mathrm{E}(D_i^4).
\]
Hence \(\mathrm{E}|D_i-\mu_i|^4 \le 8\{\mathrm{E}(D_i^4)+\mathrm{E}(D_i^4)\}=16\,\mathrm{E}(D_i^4)\).
It follows that
\[
\sum_{i=1}^I \mathrm{E}|I^{-1/2}D_i - I^{-1/2}\mu_i|^4 
= I^{-2}\sum_{i=1}^I \mathrm{E}|D_i - \mu_i|^4 
\le 16 I^{-2} \sum_{i=1}^I \mathrm{E}[D_i^4].
\]
\noindent By assumption (i),
\[
\sum_{i=1}^I \operatorname{Var}(I^{-1/2} D_i) = I^{-1} \sum_{i=1}^I \sigma_i^2 \ge c.
\]
Therefore,
\[
\frac{\sum_{i=1}^I \mathrm{E}|I^{-1/2}D_i-I^{-1/2}\mu_i|^4}{\left(\sum_{i=1}^I \operatorname{Var}(I^{-1/2}D_i)\right)^2}
\le \frac{16}{c^2}\;I^{-2}\sum_{i=1}^I \mathrm{E}(D_i^4).
\]
By assumption (ii), the right-hand side tends to \(0\) as \(I\to\infty\). Hence, the Lyapunov condition with exponent \(4\) holds, and Lyapunov's central limit theorem implies
\[
\frac{\sum_{i=1}^I (I^{-1/2}D_i-I^{-1/2}\mu_i)}{\sqrt{\sum_{i=1}^I \operatorname{Var}(I^{-1/2}D_i)}}\xrightarrow{d}\mathcal N(0,1),
\]
which is equivalent to the stated result by the identity 
\[
\sum_{i=1}^I (I^{-1/2}D_i - I^{-1/2}\mu_i) = \sqrt{I}\,(\overline{D}_I - \mathrm{E}[\overline{D}_I])
\quad \text{and} \quad
\sum_{i=1}^I \operatorname{Var}(I^{-1/2} D_i) = \operatorname{Var}(\sqrt{I}\, \overline{D}_I).
\]
\end{proof}

\begin{lemma}
\label{lem:IS2_consistency}
Let $\{D_{i}\}_{i=1}^I$ be independent random variables, and define 
\[
\overline{D}_{I} = \frac{1}{I} \sum_{i=1}^I D_{i}, \qquad
S^2 = \frac{1}{I(I-1)} \sum_{i=1}^I \big(D_{i} - \overline{D}_{I}\big)^2.
\]
Suppose that, as $I\to\infty$,
\begin{enumerate}
    \item[(i)] $\mathrm{E}(\overline{D}_{I}) = O(1)$;
    \item[(ii)] $I^{-2} \sum_{i=1}^I \mathrm{E}(D_{i}^2) \to 0$;
    \item[(iii)] $I^{-2} \sum_{i=1}^I \mathrm{E}(D_{i}^4) \to 0$.
\end{enumerate}
Then
$
I S^2 - I \,\mathrm{E}(S^2)$  converges in probability to $0$.
\end{lemma}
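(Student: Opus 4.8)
The plan is to reduce the claim to a variance bound. Since $I\,\mathrm{E}(S^2) = \mathrm{E}(IS^2)$, Chebyshev's inequality shows that it suffices to prove $\mathrm{Var}(IS^2)\to 0$ as $I\to\infty$. To exploit independence I would rewrite the scaled sample variance through the usual computational identity
\[
IS^2 = \frac{1}{I-1}\sum_{i=1}^I (D_{i} - \overline{D}_{I})^2 = \frac{1}{I-1}\left\{\sum_{i=1}^I D_{i}^2 - \frac{1}{I}\Big(\sum_{i=1}^I D_{i}\Big)^2\right\},
\]
so that $IS^2 = (I-1)^{-1}(A - B)$ with $A = \sum_i D_{i}^2$ and $B = I^{-1}T^2$, where $T := \sum_i D_{i}$. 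Using $\mathrm{Var}(A-B)\le 2\mathrm{Var}(A) + 2\mathrm{Var}(B)$, it is then enough to bound $(I-1)^{-2}\mathrm{Var}(A)$ and $(I-1)^{-2}\mathrm{Var}(B)$ separately.

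The linear-in-squares term is immediate. By independence $\mathrm{Var}(A) = \sum_i \mathrm{Var}(D_{i}^2) \le \sum_i \mathrm{E}(D_{i}^4)$, so $(I-1)^{-2}\mathrm{Var}(A) \le (I-1)^{-2}\sum_i \mathrm{E}(D_{i}^4)\to 0$ by assumption (iii).

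The main obstacle is the quadratic term $B = T^2/I$, because $T$ is a sum of independent but \emph{non-centered} summands. I would center the variables, writing $\mu_i = \mathrm{E}(D_{i})$, $M = \sum_i \mu_i$, $\tilde D_i = D_{i} - \mu_i$, $\tilde T = \sum_i \tilde D_i$, so that $T = M + \tilde T$ and
\[
\mathrm{Var}(T^2) = \mathrm{Var}\big(2M\tilde T + \tilde T^2\big) = 4M^2\,\mathrm{Var}(\tilde T) + 4M\,\mathrm{Cov}(\tilde T, \tilde T^2) + \mathrm{Var}(\tilde T^2).
\]
I would then control the three pieces at order $I^4$ (since $(I-1)^2 I^2 \sim I^4$). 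Writing $V = \sum_i \sigma_i^2$ with $\sigma_i^2 = \mathrm{Var}(D_{i})$, assumption (ii) gives $V = o(I^2)$ (using $\sigma_i^2 \le \mathrm{E}(D_{i}^2)$), while assumption (i) gives $M = O(I)$; hence $M^2 V = O(I^2)\,o(I^2) = o(I^4)$ disposes of the first piece. For the cross term, $\mathrm{Cov}(\tilde T,\tilde T^2) = \sum_i \mathrm{E}(\tilde D_i^3)$, and I would apply the Cauchy--Schwarz bounds $|\mathrm{E}(\tilde D_i^3)| \le \sqrt{\mathrm{E}(\tilde D_i^4)}\,\sqrt{\sigma_i^2}$ together with $\mathrm{E}(\tilde D_i^4)\le 16\,\mathrm{E}(D_{i}^4)$ (from $(a-b)^4\le 8(a^4+b^4)$ and the Jensen inequality $\mu_i^4\le\mathrm{E}(D_{i}^4)$), yielding $|\sum_i \mathrm{E}(\tilde D_i^3)| \le \sqrt{\sum_i \mathrm{E}(\tilde D_i^4)}\,\sqrt{V} = o(I^2)$, so $M\,\mathrm{Cov}(\tilde T,\tilde T^2) = O(I)\,o(I^2) = o(I^3)$. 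For the last piece, $\mathrm{Var}(\tilde T^2) = \sum_i \mathrm{E}(\tilde D_i^4) + 2V^2 - 3\sum_i\sigma_i^4 \le \sum_i \mathrm{E}(\tilde D_i^4) + 2V^2 = o(I^2) + o(I^4) = o(I^4)$. Summing the three contributions gives $\mathrm{Var}(T^2) = o(I^4)$, whence $(I-1)^{-2}\mathrm{Var}(B) = (I-1)^{-2}I^{-2}\,\mathrm{Var}(T^2)\to 0$.

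Combining the two bounds yields $\mathrm{Var}(IS^2)\to 0$, and Chebyshev's inequality then delivers $IS^2 - I\,\mathrm{E}(S^2)\xrightarrow{P}0$. The only delicate part is the bookkeeping in the expansion of $\mathrm{Var}(T^2)$: assumption (i) is used precisely to keep the $M^2V$ term at order $o(I^4)$, whereas the weaker bound $M = o(I^{3/2})$ implied by (ii) alone would give only $M^2 V = o(I^5)$ there and would not suffice.
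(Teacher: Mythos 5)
Your argument is correct. You and the paper both start from the same computational identity $IS^2=\frac{1}{I-1}\sum_i D_i^2-\frac{I}{I-1}\overline{D}_I^2$ and handle the linear-in-squares term identically (independence plus condition (iii) and Chebyshev). Where you diverge is the quadratic term: you bound $\mathrm{Var}(T^2)$ directly by centering, expanding $\mathrm{Var}(2M\tilde T+\tilde T^2)$, and invoking the exact third- and fourth-moment formulas for sums of independent mean-zero variables, concluding $\mathrm{Var}(IS^2)\to 0$ and hence the claim by a single application of Chebyshev. The paper instead avoids any moment computation for $\overline{D}_I^2$: it shows $\overline{D}_I-\mathrm{E}(\overline{D}_I)\xrightarrow{p}0$ from condition (ii), uses the algebraic identity $\overline{D}_I^2-\{\mathrm{E}(\overline{D}_I)\}^2=(\overline{D}_I-\mathrm{E}\overline{D}_I)^2+2\,\mathrm{E}(\overline{D}_I)(\overline{D}_I-\mathrm{E}\overline{D}_I)$ together with $\mathrm{E}(\overline{D}_I)=O(1)$, and then swaps $\{\mathrm{E}(\overline{D}_I)\}^2$ for $\mathrm{E}(\overline{D}_I^2)$ via $\mathrm{Var}(\overline{D}_I)\to 0$. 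Your route is more computational but delivers the strictly stronger conclusion of $L^2$ convergence of $IS^2$ to its mean; the paper's route is lighter on bookkeeping and needs no third- or fourth-moment identities for $\tilde T$. Both arguments use condition (i) in the same essential place—to keep the cross term $M\,\mathrm{Var}$-type contribution (respectively $\mathrm{E}(\overline{D}_I)(\overline{D}_I-\mathrm{E}\overline{D}_I)$) under control—and your closing remark correctly identifies that (ii) alone would only give $M=o(I^{3/2})$, which is insufficient there.
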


\begin{proof}
Recall that
\[
IS^2 = \frac{1}{I-1} \sum_{i=1}^I\left(D_{i}-\overline{D}_{I}\right)^2= \frac{1}{I-1} \sum_{i=1}^ID_{i}^2-\frac{I}{I-1}\overline{D}_{I}^2.
\] 
By the fourth-moment condition (iii) and independence,
\[
\operatorname{Var}\Bigg(\frac{1}{I-1}\sum_{i=1}^I D_{i}^2\Bigg)
\le \frac{1}{(I-1)^2} \sum_{i=1}^I \mathrm{E}(D_{i}^4) \;\rightarrow\; 0.
\]
Hence, by Chebyshev's inequality,
\[
\frac{1}{I-1} \sum_{i=1}^I D_{i}^2 - \mathrm{E}\Bigg[\frac{1}{I-1} \sum_{i=1}^I D_{i}^2\Bigg] \;\xrightarrow{p}\; 0.
\]

\noindent By condition (ii), 
\[
\operatorname{Var}(\overline{D}_{I}) = I^{-2} \sum_{i=1}^I \operatorname{Var}(D_{i}) \le I^{-2} \sum_{i=1}^I \mathrm{E}(D_{i}^2) \;\rightarrow\; 0,
\]
 Chebyshev’s inequality then implies that  
 $\overline{D}_{I}-\mathrm{E}(\overline{D}_{I})\xrightarrow{p}0$,
and thus $(\overline{D}_{I}-\mathrm{E}\overline{D}_{I})^2\xrightarrow{p}0$.

\noindent Noting that
\[
\frac{I}{I-1}\overline{D}_{I}^2
-\frac{I}{I-1}\{\mathrm{E}(\overline{D}_{I})\}^2
=\frac{I}{I-1}(\overline{D}_{I}-\mathrm{E}\overline{D}_{I})^2
+\frac{2I}{I-1}\,\mathrm{E}(\overline{D}_{I})
(\overline{D}_{I}-\mathrm{E}\overline{D}_{I}),
\]
where $\mathrm{E}(\overline{D}_{I})=O(1)$, both terms converge in probability to $0$. 
Therefore,
\[
\frac{I}{I-1}\overline{D}_{I}^2
-\frac{I}{I-1}\{\mathrm{E}(\overline{D}_{I})\}^2
\xrightarrow{p}0.
\]
Since $\operatorname{Var}(\overline{D}_{I})
=\mathrm{E}(\overline{D}_{I}^2)
-\{\mathrm{E}(\overline{D}_{I})\}^2
$ converges to $0$ as $I\rightarrow \infty$., it follows that
\[
\frac{I}{I-1}\overline{D}_{I}^2
-\frac{I}{I-1}\mathrm{E}(\overline{D}_{I}^2)
\xrightarrow{p} 0.
\]
\noindent Put together, the above results imply that 
\[
IS^2-I\mathrm{E}(S^2) \;\xrightarrow{p}\; 0.
\]
\end{proof}

\subsection*{A.2 Proof of Proposition 2}
\begin{proof}
  Under the randomization mechanism in Assumption 1,  each IV assignment vector $\boldsymbol{Z}_i$ has equal probability among eight  possible configurations.
The full list of configurations and their induced $(\boldsymbol{Z}^a_i, \boldsymbol{Z}^b_i)$ values are  summarized in Table~\ref{IV configurations}.
\begin{table}[ht]
\caption{IV dose assignment configurations}
\label{IV configurations}
\centering
\begin{tabular}{ccc}
\hline
\(\boldsymbol{Z}_i\)
& \((Z^a_{i11}, Z^a_{i12}, Z^a_{i21}, Z^a_{i22})\) & \((Z^b_{i11}, Z^b_{i12}, Z^b_{i21}, Z^b_{i22})\)\\ 
\hline
\((1_a, 0_a, 1_b, 0_b)\) & \((1, 0, \tfrac{1}{2}, \tfrac{1}{2})\) & \((\tfrac{1}{2}, \tfrac{1}{2}, 1, 0)\) \\
\((0_a, 1_a, 1_b, 0_b)\) & \((0, 1, \tfrac{1}{2}, \tfrac{1}{2})\) & \((\tfrac{1}{2}, \tfrac{1}{2}, 1, 0)\) \\
\((1_a, 0_a, 0_b, 1_b)\) & \((1, 0, \tfrac{1}{2}, \tfrac{1}{2})\) & \((\tfrac{1}{2}, \tfrac{1}{2}, 0, 1)\) \\
\((0_a, 1_a, 0_b, 1_b)\) & \((0, 1, \tfrac{1}{2}, \tfrac{1}{2})\) & \((\tfrac{1}{2}, \tfrac{1}{2}, 0, 1)\) \\
\((1_b, 0_b, 1_a, 0_a)\) & \((\tfrac{1}{2}, \tfrac{1}{2}, 1, 0)\) & \((1, 0, \tfrac{1}{2}, \tfrac{1}{2})\) \\
\((1_b, 0_b, 0_a, 1_a)\) & \((\tfrac{1}{2}, \tfrac{1}{2}, 0, 1)\) & \((1, 0, \tfrac{1}{2}, \tfrac{1}{2})\) \\
\((0_b, 1_b, 1_a, 0_a)\) & \((\tfrac{1}{2}, \tfrac{1}{2}, 1, 0)\) & \((0, 1, \tfrac{1}{2}, \tfrac{1}{2})\) \\
\((0_b, 1_b, 0_a, 1_a)\) & \((\tfrac{1}{2}, \tfrac{1}{2}, 0, 1)\) & \((0, 1, \tfrac{1}{2}, \tfrac{1}{2})\) \\
\hline
\end{tabular}
\end{table}

Taking the expectation conditional on the potential outcomes $\mathcal{F}$, we have
 \begin{align*}
\mathrm{E}[V_{i} \mid \mathcal{F}] &= \mathrm{E}\{\sum_{j = 1}^2  \sum_{k = 1}^2 
    \{Z^b_{ijk} D_{ijk} -(1-Z^b_{ijk}) D_{ijk}\}\nonumber
    -  \{Z^a_{ijk} D_{ijk} -(1-Z^a_{ijk})D_{ijk}\}\mid \mathcal{F}\}\\
     &=\frac{1}{8}\{(d_{T^bi11} -  d_{C^bi12}) - (d_{T^ai21} -  d_{C^ai22})
    + (d_{T^bi12} -  d_{C^bi11}) - (d_{T^ai21} -  d_{C^ai22})\}\\
    &+\frac{1}{8}\{(d_{T^bi11} -  d_{C^bi12}) - (d_{T^ai22} -  d_{C^ai21})
    + (d_{T^bi12} -  d_{C^bi11}) - (d_{T^ai22} -  d_{C^ai21})\}\\
    &+\frac{1}{8}\{(d_{T^bi21} -  d_{C^bi22}) - (d_{T^ai11} -  d_{C^ai12})
    + (d_{T^bi22} -  d_{C^bi21}) - (d_{T^ai11} -  d_{C^ai12})\}\\
    &+\frac{1}{8}\{(d_{T^bi21} -  d_{C^bi22}) - (d_{T^ai12} -  d_{C^ai11})
    + (d_{T^bi22} -  d_{C^bi21}) - (d_{T^ai12} -  d_{C^ai11})\}\\
    &=\frac{1}{4}\sum_{j=1}^2\sum_{k=1}^2\{(d_{T^bijk}-  d_{C^bijk} ) - (d_{T^aijk} -  d_{C^aijk})
     \}.
\end{align*}
Define $\bar{V} = \frac{1}{I} \sum_{i = 1}^I V_i$. By the definitions of $\iota_b$ and $\iota_a$, it follows that
 \begin{equation}\label{Exp(V_bar)}
     \mathrm{E}[\bar{V}\mid \mathcal{F}]=\frac{1}{4I}\sum_{i = 1}^I \sum_{j=1}^2\sum_{k=1}^2\{(d_{T^bijk}-  d_{C^bijk} ) - (d_{T^aijk} -  d_{C^aijk})\}=\iota_b - \iota_a.
     \end{equation}


\noindent By construction, each $V_i$ is a finite linear combination of potential outcomes $d_{T^bijk},  d_{C^bijk}, d_{T^aijk},  d_{C^aijk}$, all of which are binary. Hence, both $\mathrm{E}[V_{i}]$ and $\mathrm{E}[\bar V]$ are uniformly bounded, satisfying $\mathrm{E}(\overline{V}) = O(1)$. 
Moreover, since $|V_i|<C$ for some universal constant $C>0$, we have \[
I^{-2}\sum_{i=1}^I \mathrm{E}(V_i^2) \le \frac{C^2}{I} \to 0,
\qquad
I^{-2}\sum_{i=1}^I \mathrm{E}(V_i^4) \le \frac{C^4}{I} \to 0.
\]
Therefore, all conditions of Lemma~\ref{lem:IS2_consistency} are satisfied, and $IS^2-I\mathrm{E}(S^2) \;\xrightarrow{p}\; 0$.
Moreover, by Lemma~\ref{lemma: conservative variance}, the sample variance
$
S^2 = \frac{1}{I(I-1)}\sum_{i=1}^I (V_i - \bar{V})^2
$
satisfies $\mathrm{E}[S^2] \geq \text{Var}(\bar{V})$,
which implies that

\begin{equation}\label{prop2: S2consistency}
IS^2 - I\operatorname{Var}(\bar{V}) - c_I \xrightarrow{p} 0, 
\end{equation}
where $c_I = I \mathrm{E}[S^2] - I\operatorname{Var}(\bar{V}) \ge 0$.

\noindent 
 From
the regularity condition~\ref{con: RI_bdd2ndmoment}, 
we have 
$\lim\limits_{I\to \infty}\frac{1}{I}\sum_{i=1}^I  \mathrm{Var}(V_i)>c_0>0$ for some constant $c_0$. Since the $V_i$ are independent across strata with bounded fourth moments,  Lemma~\ref{lemma: CLT} yields
\begin{equation*}
  \label{prop2: CLT}
\frac{\sqrt{I}\bar{V} - \mathrm{E}(\sqrt{I}\bar{V})}{\sqrt{\text{Var}(\sqrt{I}\bar{V})}} \xrightarrow{d} \mathcal{N}(0, 1) \quad \text{as } I \to \infty.
\end{equation*}

\noindent Therefore, combining with \eqref{Exp(V_bar)} and \eqref{prop2: S2consistency}, and under the null hypothesis $H_{0, \text{nested IV}}: \iota_b - \iota_a = c$, we have
\begin{equation*}
\lim\limits_{I\to \infty} \mathrm{Pr}\{\mid\sqrt{I}\bar{V}-\sqrt{I}c \mid\leq z_{1-\alpha/2}\sqrt{IS^2}\}\geq 1-\alpha.
\end{equation*}
Equivalently, the interval
\[
\{c: \mid\bar{V}-c\mid\leq z_{1-\alpha/2}\sqrt{S^2}\}
\]
is an asymptotically valid $(1-\alpha)$ confidence interval for $\iota_b - \iota_a$.

\end{proof}

\subsection*{A.3 Proof of Theorem 1}
\begin{proof}
Define the adjusted outcomes with $\kappa_0$ under stage $a$: \[y_{T^aijk}=r_{T^aijk} -\kappa_0d_{T^aijk}, \quad y_{C^aijk}=r_{C^aijk}-\kappa_0d_{C^aijk}\]
The test statistic can be expressed as:
\begin{align}
   V_{i, \mathrm{ACO}}(\kappa_0)&=\sum_{j = 1}^2  \sum_{k = 1}^2 \{Z^a_{ijk}(R_{ijk} - \kappa_0 D_{ijk}) -(1-Z^a_{ijk})(R_{ijk} - \kappa_0 D_{ijk})\}\nonumber\\
   &=\sum_{j = 1}^2 |2Z^a_{ij1}-1| \sum_{k = 1}^2 \{Z^a_{ijk}(R_{ijk} - \kappa_0 D_{ijk}) -(1-Z^a_{ijk})(R_{ijk} - \kappa_0 D_{ijk})\}\label{indicator for stage a}\\
   &=  \sum_{j = 1}^2  |2Z^a_{ij1}-1|\sum_{k = 1}^2 \{Z^a_{ijk}y_{T^aijk} -(1-Z^a_{ijk})y_{C^aijk}\}\nonumber\\
     &=\sum_{j = 1}^2  |2Z^a_{ij1}-1|\{Z^a_{ij1}(y_{T^aij1}-y_{C^aij2})+(1-Z^a_{ij1})(y_{T^aij2}-y_{C^aij1}) \} \nonumber
\end{align}
where \eqref{indicator for stage a} leverages the fact that \( |2Z^a_{ij1} - 1| = 1 \) when pair \( ij \) is assigned to stage \( a \), and 0 otherwise, which ensures that only pairs in stage \( a \) contribute to the sum. 

Under the randomization probabilities in Table~\ref{IV configurations}, for each matched pair $ij$, we have
\[
\mathrm{E}[|2Z^a_{ij1} - 1| Z^a_{ij1}] = \mathbb{P}(Z^a_{ij1} = 1) = \frac{1}{4}, \quad 
\mathrm{E}[|2Z^a_{ij1} - 1| (1 - Z^a_{ij1})] = \mathbb{P}(Z^a_{ij1} = 0) = \frac{1}{4}.
\]

Hence, under Condition \ref{con: bdd4thmoment},
\begin{align}\label{bdd: V_ACO_1st_moment}
\mathrm{E}[ T_{\mathrm{ACO}}(\kappa_0) \mid \mathcal{F}]
&= \frac{1}{I} \sum_{i = 1}^I \mathrm{E}[V_{i, \mathrm{ACO}}(\kappa_0) \mid \mathcal{F}] \nonumber\\
&= \frac{1}{I} \sum_{i = 1}^I \sum_{j = 1}^2 \frac{1}{4} \left\{ (y_{T^aij1} - y_{C^aij2}) + (y_{T^aij2} - y_{C^aij1}) \right\} \nonumber\\
&= \frac{1}{4I} \sum_{i = 1}^I \sum_{j = 1}^2 \sum_{k = 1}^2 \left( y_{T^aijk} - y_{C^aijk} \right)=O(1).
\end{align}
Under the null hypothesis  $H^{\text{ACO}}_0: \kappa = \kappa_0$, we have 
\begin{align*}
\sum_{i = 1}^I \sum_{j = 1}^2 \sum_{k = 1}^2 \left( y_{T^aijk} - y_{C^aijk} \right) &=\sum_{i = 1}^I \sum_{j = 1}^2 \sum_{k = 1}^2 \{(r_{T^aijk} -\kappa_0 d_{T^aijk})- (r_{C^aijk}-\kappa_0 d_{C^aijk})\}\\
&=\sum_{i = 1}^I \sum_{j = 1}^2 \sum_{k = 1}^2 \{(r_{T^aijk} -r_{C^aijk})- \kappa( d_{T^aijk}- d_{C^aijk})\}\\
&=0,
\end{align*}
where the last equality follows from the definition of $\kappa$.
Therefore, $\mathrm{E}[ T_{\mathrm{ACO}}(\kappa_0) \mid \mathcal{F}] = 0$ under $H^{\text{ACO}}_0$. 
    
By the Cauchy–Schwarz inequality,
\begin{align}\label{bdd: V_ACO_2nd_moment}
\frac{1}{I^2} \sum_{i = 1}^I\mathrm{E}[V^2_{i, \mathrm{ACO}}(\kappa_0) \mid \mathcal{F}] 
&= \frac{1}{I^2} \sum_{i = 1}^I  \frac{1}{4} \sum_{j = 1}^2\left\{ (y_{T^aij1} - y_{C^aij2})^2 + (y_{T^aij2} - y_{C^aij1})^2 \right\}\nonumber \\
&\leq \frac{1}{2I^2} \sum_{i = 1}^I \sum_{j = 1}^2 \sum_{k = 1}^2 \left( y^2_{T^aijk} + y^2_{C^aijk} \right),
\end{align}
which converges to $0$ as $I \rightarrow \infty$ under Condition \ref{con: bdd4thmoment}.

Similarly, by Hölder’s inequality,
\begin{align}\label{bdd: V_ACO_4th_moment}
\frac{1}{I^2} \sum_{i = 1}^I\mathrm{E}[V^4_{i, \mathrm{ACO}}(\kappa_0) \mid \mathcal{F}] 
&= \frac{1}{I^2} \sum_{i = 1}^I  \frac{1}{4} \sum_{j = 1}^2\left\{ (y_{T^aij1} - y_{C^aij2})^4 + (y_{T^aij2} - y_{C^aij1})^4 \right\} \nonumber\\
&\leq \frac{1}{4I^2} \sum_{i = 1}^I \sum_{j = 1}^2 8\sum_{k = 1}^2 \left( y^4_{T^aijk} + y^4_{C^aijk} \right),
\end{align}
which also converges to 0 under Condition \ref{con: bdd4thmoment}.

Thus, under $H^{\text{ACO}}_0$ and Conditions \ref{con: bdd4thmoment}-\ref{con: RI_bdd2ndmoment}, we apply Lemma \ref{lemma: CLT} to $V_{i, \mathrm{ACO}}(\kappa_0)$ and conclude:
$$
\frac{\sqrt{I}T_{\mathrm{ACO}}(\kappa_0)}{\sqrt{\text{Var}(\sqrt{I}T_{\mathrm{ACO}}(\kappa_0))}} \xrightarrow{d}\mathcal{N}(0,1), \quad \text{as}~  I \to \infty.
$$

It remains to show that \( IS_{\mathrm{ACO}}^2(\kappa_0) \) is a conservative estimator of $\text{Var}(\sqrt{I}T_{\mathrm{ACO}}(\kappa_0)) $. Decompose
\[
IS_{\mathrm{ACO}}^2(\kappa_0) - I\text{Var}(T_{\mathrm{ACO}}(\kappa_0)) 
= \left(IS_{\mathrm{ACO}}^2(\kappa_0) - I\mathrm{E}[S_{\mathrm{ACO}}^2(\kappa_0)]\right) 
+ \left(I\mathrm{E}[S_{\mathrm{ACO}}^2(\kappa_0)] - I\text{Var}(T_{\mathrm{ACO}}(\kappa_0))\right).
\]
By Lemma \ref{lem:IS2_consistency} and the bounded-moment results in \eqref{bdd: V_ACO_1st_moment}–\eqref{bdd: V_ACO_4th_moment}, the first term converges in probability to 0, while the second is nonnegative by Lemma~\ref{lemma: conservative variance}. Therefore,
\[
IS_{\mathrm{ACO}}^2(\kappa_0) -\text{Var}(\sqrt{I}T_{\mathrm{ACO}}(\kappa_0))- c_I \xrightarrow{p} 0,
\]
where $c_I = I \mathrm{E}[S_{\mathrm{ACO}}^2(\kappa_0)] - I\operatorname{Var}(T_{\mathrm{ACO}}(\kappa_0)) \ge 0$. This establishes that confidence intervals based on \( IS_{\mathrm{ACO}}^2(\kappa_0) \) are asymptotically valid and conservative.


\noindent Note that
 \begin{align*}
\mathrm{E}[V_{i, \mathrm{ACO}}(\kappa_0)\mid \mathcal{F}]  &= \frac{1}{4}\sum_{j = 1}^2 \sum_{k = 1}^2 \{y_{T^aijk} - y_{C^aijk}\}\\
     &= \frac{1}{4}  \sum_{j = 1}^2 \sum_{k = 1}^2 
    \{
    r_{T^aijk}  - r_{C^aijk} - \kappa_0 (d_{T^aijk} -  d_{C^aijk})\}.
\end{align*}
For the non-always-complier units who are always-takers or never-takers under the IV pair $(0_a, 1_a)$, we have $r_{T^aijk}  - r_{C^aijk} = d_{T^aijk} -  d_{C^aijk} = 0$, and therefore
 \begin{align*}
\mathrm{E}[V_{i, \mathrm{ACO}}(\kappa_0)\mid \mathcal{F}] 
    &=\frac{1}{4}  \sum_{j = 1}^2 \sum_{k = 1}^2 
    \{
    r_{T^aijk}  - r_{C^aijk} - \kappa_0 (d_{T^aijk} -  d_{C^aijk})\}\mathbbm{I}(S_{ijk} = \text{ACO})\\
    &=\frac{1}{4}  \sum_{j = 1}^2 \sum_{k = 1}^2 
    \{r_{d = 1, ijk} - r_{d=0, ijk} - \kappa_0 \}\mathbbm{I}(S_{ijk} = \text{ACO}).
\end{align*}
Finally, $IS_{\mathrm{ACO}}^2(\kappa_0)$ is unbiased for $\text{Var}(\sqrt{I}T_{\mathrm{ACO}}(\kappa_0))$ if and only if
\[
\mathrm{E}[V_{i, \mathrm{ACO}}(\kappa_0)\mid \mathcal{F}] = \frac{1}{4}  \sum_{j = 1}^2 \sum_{k = 1}^2 
    \{r_{d = 1, ijk} - r_{d=0, ijk} - \kappa_0 \}\mathbbm{I}(S_{ijk} = \text{ACO})
\]
is constant across all $i = 1, \dots, I$. 
A sufficient condition for this to hold is that the treatment effect among always-compliers is constant, i.e.,
\[
r_{d = 1, ijk} - r_{d = 0, ijk} = \kappa \quad \text{for all always-compliers}.
\]
Under $H^{\text{ACO}}_0: \kappa = \kappa_0$, this implies \( \mathrm{E}[V_{i, \mathrm{ACO}}(\kappa_0)\mid \mathcal{F}] = 0 \) for all \( i \). 

\end{proof}

\subsection*{A.4 Proof of Theorem 2}
\begin{proof}

By the same logic as in the proof of Theorem~1,
\begin{align*}
   V_{i,\mathrm{SW}}(\lambda_0)
     &=\sum_{j = 1}^2  |2Z^b_{ij1}-1|\{Z^b_{ij1}(y_{T^bij1}-y_{C^bij2})+(1-Z^b_{ij1})(y_{T^bij2}-y_{C^bij1}) \}\\
     -&\sum_{j = 1}^2  |2Z^a_{ij1}-1|\{Z^a_{ij1}(y_{T^aij1}-y_{C^aij2})+(1-Z^a_{ij1})(y_{T^aij2}-y_{C^aij1}) \}
\end{align*}
Under the equal probabilities of assignment in Table \ref{IV configurations}, 
the conditional expectation of $V_{i,\mathrm{SW}}(\lambda_0)$ given $\mathcal{F}$ is
\begin{align*}
\mathrm{E}[V_{i,\mathrm{SW}}(\lambda_0) \mid \mathcal{F}]
    &= \tfrac{1}{4} \sum_{j = 1}^2 \sum_{k = 1}^2 
    \{(y_{T^b_{ijk}} - y_{C^b_{ijk}}) - (y_{T^a_{ijk}} - y_{C^a_{ijk}})\}.
\end{align*}
Taking the average across strata, we then obtain
 \begin{align*}
\mathrm{E}[ T_{\mathrm{SW}}(\lambda_0) \mid \mathcal{F}] &= \frac{1}{I} \sum_{i = 1}^I \mathrm{E}[V_{i,\mathrm{SW}}(\lambda_0) \mid \mathcal{F}]\\&= \frac{1}{4I}\sum_{i = 1}^I  \sum_{j = 1}^2 \sum_{k = 1}^2 
    \{(y_{T^bijk} - y_{C^bijk})- (y_{T^aijk} - y_{C^aijk})\}\\
    &= \frac{1}{4I}\sum_{i = 1}^I  \sum_{j = 1}^2 \sum_{k = 1}^2 
    \{(r_{T^bijk} - r_{C^bijk} - \lambda_0 (d_{T^bijk} -  d_{C^bijk})) \\&- 
    (r_{T^aijk}  - r_{C^aijk} - \lambda_0 (d_{T^aijk} -  d_{C^aijk}))\}\\
    &= \frac{1}{4I}\sum_{i = 1}^I  \sum_{j = 1}^2 \sum_{k = 1}^2  ( \lambda - \lambda_0)\{ (d_{T^bijk} -  d_{C^bijk}) -  (d_{T^aijk} -  d_{C^aijk})\}
\end{align*}
where the last equality follows from the definition of $\lambda$.
Hence, under the null hypothesis 
 $H^{SW}_0: \lambda = \lambda_0$, we have 
 \begin{equation}\label{eq:ESW0}
\mathrm{E}[T_{\mathrm{SW}}(\lambda_0)\mid\mathcal{F}]=0.
\end{equation}

\noindent Next, by the same bounding arguments used in Theorem~1  and Condition~\ref{con: bdd4thmoment}, there exists a constant $C$ such that  
\begin{align*}
\frac{1}{I^2} \sum_{i = 1}^I \mathrm{E}[V^2_{i,\mathrm{SW}}(\lambda_0) \mid \mathcal{F}]
    &\leq  \frac{C}{4I^2} \sum_{i = 1}^I\sum_{j = 1}^2 \sum_{k = 1}^2 
    \left(y^2_{T^b_{ijk}} + y^2_{C^b_{ijk}} + y^2_{T^a_{ijk}} + y^2_{C^a_{ijk}}\right)\rightarrow0
\end{align*}
and
\begin{align*}
\frac{1}{I^2} \sum_{i = 1}^I \mathrm{E}[V^4_{i,\mathrm{SW}}(\lambda_0) \mid \mathcal{F}]
    &\leq  \frac{C}{4I^2} \sum_{i = 1}^I\sum_{j = 1}^2 \sum_{k = 1}^2 
    \left(y^4_{T^b_{ijk}} + y^4_{C^b_{ijk}} + y^4_{T^a_{ijk}} + y^4_{C^a_{ijk}}\right)\rightarrow0
\end{align*}
as $I\rightarrow \infty$. 
Therefore Lemma~\ref{lem:IS2_consistency} implies $$IS_{\mathrm{SW}}^2(\lambda_0) - I\mathrm{E}[S_{\mathrm{SW}}^2(\lambda_0)]\xrightarrow{p} 0.$$
Applying Lemma~\ref{lemma: CLT} together with  Condition~\ref{con: RI_bdd2ndmoment} and \eqref{eq:ESW0} under $H^{SW}_0$ yields
\[
\frac{\sqrt{I}T_{\mathrm{SW}}(\lambda_0)}{\sqrt{\text{Var}(\sqrt{I}T_{\mathrm{SW}}(\lambda_0))}} \xrightarrow{d} \mathcal{N}(0,1), \quad \text{as } I \to \infty.
\]

\noindent Finally, define $c_I = I\mathrm{E}[S_{\mathrm{SW}}^2(\lambda_0)] - \text{Var}(\sqrt{I}T_{\mathrm{SW}}(\lambda_0))$, which is nonnegative by Lemma~\ref{lemma: conservative variance}. Then 
\begin{align}\label{S2 convergence:V_SW}
IS_{\mathrm{SW}}^2(\lambda_0) - \text{Var}(\sqrt{I}T_{\mathrm{SW}}(\lambda_0)) -c_I
&= IS_{\mathrm{SW}}^2(\lambda_0) - I\mathrm{E}[S_{\mathrm{SW}}^2(\lambda_0)]
\xrightarrow{p} 0,\nonumber
\end{align}
where $c_I = 0$ holds if and only if
\begin{align*}
\mathrm{E}[V_{i,\mathrm{SW}}(\lambda_0) \mid \mathcal{F}] &= \frac{1}{4}\sum_{j = 1}^2 \sum_{k = 1}^2 \{y_{T^bijk} - y_{C^bijk}\}- \{y_{T^aijk} - y_{C^aijk}\}\\
     &= \frac{1}{4}  \sum_{j = 1}^2 \sum_{k = 1}^2 
    \{(r_{T^bijk} - r_{C^bijk} - \lambda_0 (d_{T^bijk} -  d_{C^bijk})) - 
    (r_{T^aijk}  - r_{C^aijk} - \lambda_0 (d_{T^aijk} -  d_{C^aijk}))\}\\
    &=\frac{1}{4}  \sum_{j = 1}^2 \sum_{k = 1}^2 
    \{r_{T^bijk} - r_{C^bijk} - \lambda_0 (d_{T^bijk} -  d_{C^bijk}) \}\mathbbm{I}(S_{ijk} = SW)\\
    &=\frac{1}{4}  \sum_{j = 1}^2 \sum_{k = 1}^2 
    \{r_{d = 1, ijk} - r_{d=0, ijk} - \lambda_0 \}\mathbbm{I}(S_{ijk} = SW)
\end{align*}
is constant across all $i = 1, \dots, I$.

Therefore, $IS_{\mathrm{SW}}^2(\lambda_0)$ is a large-sample conservative estimator for $\text{Var}(\sqrt{I}T_{\mathrm{SW}}(\lambda_0))$, and can be used to construct asymptotically valid confidence intervals.
\end{proof}

{\color{black}

\subsection*{A.5\quad Testing homogeneity of treatment effects across subgroups}

 Under Assumptions~2 and~3, testing whether ACO-SATE equals SW-SATE is equivalent 
to testing whether the IV effect ratio under the stronger IV pair 
$(0_b, 1_b)$ equals that under the weaker IV pair $(0_a, 1_a)$; see 
Proposition~\ref{prop:equiv} below.

\begin{proposition}\label{prop:equiv}
Let $\kappa$ and $\lambda$ be defined as in Sections~3.2 and~3.3. Define
\begin{equation}\label{eq:rho}
  \rho \;=\;
  \frac{\displaystyle\sum_{i=1}^I\sum_{j=1}^2\sum_{k=1}^2
        (r_{T^bijk}-r_{C^bijk})}
       {\displaystyle\sum_{i=1}^I\sum_{j=1}^2\sum_{k=1}^2
        (d_{T^bijk}-d_{C^bijk})}.
\end{equation}
as the IV effect ratio under the stronger IV pair $(0_b, 1_b)$ alone.
Then $H_0^{\mathrm{homog}}:\mathrm{ACO\text{-}SATE}=\mathrm{SW\text{-}SATE}$,
i.e., $\kappa=\lambda$, if and only if $\kappa=\rho$.
\end{proposition}

\begin{proof}
Write $A=\sum_{ijk}(r_{T^aijk}-r_{C^aijk})$,
$B=\sum_{ijk}(d_{T^aijk}-d_{C^aijk})$,
$C=\sum_{ijk}(r_{T^bijk}-r_{C^bijk})$,
$D=\sum_{ijk}(d_{T^bijk}-d_{C^bijk})$,
so $\kappa=A/B$, $\lambda=(C-A)/(D-B)$, $\rho=C/D$.
Setting $\kappa=\lambda$ and cross-multiplying:
$A(D-B)=B(C-A) \Leftrightarrow AD=BC \Leftrightarrow A/B=C/D \Leftrightarrow \kappa=\rho$.
\end{proof}

\subsection*{Proof of Theorem~3}
\begin{proof}
The proof has four steps: (1)~establish that 
$H_{0,\mathrm{homog}}:\kappa=\lambda$ implies $g(\bar\mu_I)=0$;  (2)~establish a joint CLT for 
$\bar{W}_I$ via Cram\'er--Wold and Lemma~2; (3)~apply the delta method; 
(4)~establish conservative variance estimation via Lemmas~1 and~3.

\paragraph{Step 1: The null hypothesis implies $g(\bar\mu_I)=0$.}

Define $\mu_i=E[W_i|\mathcal{F}]$ and 
$\bar\mu_I=I^{-1}\sum_{i=1}^I\mu_i$.
We show that under $H_{0,\mathrm{homog}}:\kappa=\rho$, we have 
$g(\bar\mu_I)=0$ exactly for every finite~$I$.

Under the randomization scheme of Assumption~1, $Z_i$ is uniformly 
distributed on the eight-element set $\Omega$, independently across 
strata. Using Table~S1 (the enumeration of IV assignment configurations 
and the resulting values of $Z^g_{ijk}$), the conditional expectation of 
each component of $W_i = (D_i^a, Y_i^a, D_i^b, Y_i^b)^\top$ is:
\begin{align*}
  E[D_i^g|\mathcal{F}]
  &= \frac{1}{4}\sum_{j=1}^2\sum_{k=1}^2(d_{T^gijk}-d_{C^gijk}),
\\
  E[Y_i^g|\mathcal{F}]
  &= \frac{1}{4}\sum_{j=1}^2\sum_{k=1}^2(r_{T^gijk}-r_{C^gijk}),
\end{align*}
for $g\in\{a,b\}$. Therefore,
\[
  \bar\mu_I^{(1)}
  = \frac{1}{4I}\sum_{ijk}(d_{T^aijk}-d_{C^aijk})
  = \iota_a,\quad
  \bar\mu_I^{(3)}
  = \frac{1}{4I}\sum_{ijk}(d_{T^bijk}-d_{C^bijk})
  = \iota_b,
\]
and analogously $\bar\mu_I^{(2)}=\kappa\iota_a$, $\bar\mu_I^{(4)}=\rho\iota_b$.  
Evaluating the function $g$ at the expected vector $\bar\mu_I$ yields:
\[
  g(\bar\mu_I) = \frac{\bar\mu_I^{(4)}}{\bar\mu_I^{(3)}} - \frac{\bar\mu_I^{(2)}}{\bar\mu_I^{(1)}} = \rho - \kappa.
\]
Under $H_{0,\mathrm{homog}}:\kappa=\rho$, this gives $g(\bar\mu_I) = 0$ exactly.

\paragraph{Step 2: Joint CLT via Cram\'er--Wold.}
By Assumption~1, the IV assignment vectors are independent across strata, so $W_1,\ldots,W_I$ are independent conditional on $\mathcal{F}$. We establish
\begin{equation}\label{eq:jointCLT}
  \sqrt{I}(\bar{W}_I-\bar\mu_I)
  \xrightarrow{d} N(\mathbf{0},\Sigma),
\end{equation}
where 
$\Sigma=\lim_{I\to\infty}I^{-1}\sum_{i=1}^I\mathrm{Var}(W_i|\mathcal{F})$
(which exists by Condition~1 and boundedness).

By the \emph{Cram\'er--Wold theorem}, \eqref{eq:jointCLT} holds if and 
only if for every fixed $c\in\mathbb{R}^4$, the scalar sequence
\[
  \sqrt{I}\,c^\top(\bar{W}_I-\bar\mu_I)
  = \sqrt{I}\cdot I^{-1}\sum_{i=1}^I[c^\top W_i - c^\top\mu_i]
\]
is asymptotically $N(0, c^\top\Sigma c)$. The case $c=\mathbf{0}$ is trivial, since both sides vanish identically.
Fix such a $c\neq\mathbf{0}$.
 The scalars $c^\top W_i$ and $c^\top W_{i'}$ are independent for $i\neq i'$ by Assumption~1.
Each component of $W_i$ is a finite linear combination of bounded potential outcomes, so there exists $B_c<\infty$, depending only on $c$ and the bound on the potential outcomes, such that $|c^\top W_i | \leq B_c < \infty$ for all $i$. Consequently,
\[
I^{-2}\sum_{i=1}^I E[(c^\top W_i )^4 \mid \mathcal{F}] \;\leq\; B_c^4 / I \;\to\; 0,
\]
verifying condition (ii) of Lemma~2. Condition~(i) of Lemma~2 (nondegeneracy), 
$I^{-1}\sum_{i=1}^I \mathrm{Var}(c^\top W_i\mid\mathcal{F}) \to c^\top\Sigma\,c > 0$, follows from Condition~2. 
Lemma~2 then yields the univariate CLT for $c^\top\bar{W}_I$, and since $c$ was arbitrary, the Cram\'er--Wold theorem gives \eqref{eq:jointCLT}.
\medskip

\paragraph{Step 3: Delta method.}

The function $g(x_1,x_2,x_3,x_4) = x_4/x_3-x_2/x_1$ is continuously 
differentiable on any open set where $x_1\neq 0$ and $x_3\neq 0$.
Throughout we assume the implicit regularity condition that the compliance rates $\iota_a$ and $\iota_b$ are bounded away from zero for all sufficiently large $I$; this is the same condition that ensures the IV effect ratios $\kappa$ and $\rho$ are well-defined. Under this condition, $\bar\mu_I^{(1)} = \iota_a$ and $\bar\mu_I^{(3)} = \iota_b$ lie in the differentiability region of $g$. 

The gradient at $\bar\mu_I$ is:
\[
  \nabla g(\bar\mu_I)
  = \Bigl(\frac{\bar\mu_I^{(2)}}{(\bar\mu_I^{(1)})^2},\;
          -\frac{1}{\bar\mu_I^{(1)}},\;
          -\frac{\bar\mu_I^{(4)}}{(\bar\mu_I^{(3)})^2},\;
          \frac{1}{\bar\mu_I^{(3)}}\Bigr)^\top
 =
  \Bigl(\frac{\kappa}{\iota_a},\;
        -\frac{1}{\iota_a},\;
        -\frac{\rho}{\iota_b},\;
        \frac{1}{\iota_b}\Bigr)^\top.
\]
Applying the delta method to \eqref{eq:jointCLT} at the 
sequence of centering points $\bar\mu_I$ gives
\begin{equation}\label{eq:deltamethod}
  \sqrt{I}\bigl(g(\bar{W}_I)-g(\bar\mu_I)\bigr)
  \;\xrightarrow{d}\;
  N\!\bigl(0,\,\sigma_H^2\bigr),
\end{equation}
where, with $\boldsymbol\mu:=\lim_{I\to\infty}\bar\mu_I$,
\begin{equation*}\label{eq:sigmaH2}
  \sigma_H^2
  := [\nabla g(\boldsymbol\mu)]^\top\Sigma\,[\nabla g(\boldsymbol\mu)]
  = \lim_{I\to\infty}\frac{1}{I}\sum_{i=1}^I
    \mathrm{Var}\bigl([\nabla g(\bar\mu_I)]^\top W_i\bigm|\mathcal{F}\bigr)
\end{equation*}
exists and is finite under Condition~1.
Under $H_{0,\mathrm{homog}}$,  $g(\bar\mu_I)=0$, and 
\eqref{eq:deltamethod} reduces to
\[
  \sqrt{I}\,g(\bar{W}_I) \;\xrightarrow{d}\; N(0,\sigma_H^2).
\]

\paragraph{Step 4: Conservative variance estimation.}

We establish that $\hat{\sigma}_H^2 := [\nabla g(\bar{W}_I)]^\top S_W^2[\nabla g(\bar{W}_I)]$ is a conservative estimate of $\sigma_H^2$. 

\noindent\textit{Step 4a: Matrix version of Lemma~1 
(overestimation in expectation).}
Define $\bar{\Sigma}_I = \frac{1}{I} \sum_{i=1}^I \mathrm{Var}(W_i \mid \mathcal{F})$. We establish that $E[S_W^2 \mid \mathcal{F}] \succeq \bar{\Sigma}_I$ in the positive semi-definite order. Expanding the sample covariance:
\begin{align*}
(I-1)\, S_W^2 
&= \sum_{i=1}^I (W_i - \bar{W}_I)(W_i - \bar{W}_I)^\top \\
&= \sum_{i=1}^I \bigl[(W_i - \mu_i) + (\mu_i - \bar{\mu}_I) - (\bar{W}_I - \bar{\mu}_I)\bigr]\bigl[(W_i - \mu_i) + (\mu_i - \bar{\mu}_I) - (\bar{W}_I - \bar{\mu}_I)\bigr]^\top.
\end{align*}
Taking expectations conditional on $\mathcal{F}$, cross-terms pairing $(\mu_i - \bar\mu_I)$ with either $(W_i - \mu_i)$ or 
$(\bar W_I - \bar\mu_I)$ vanish, since $\mu_i - \bar\mu_I$ is non-random 
given $\mathcal{F}$ while both $E[W_i - \mu_i \mid \mathcal{F}] = 0$ and 
$E[\bar W_I - \bar\mu_I \mid \mathcal{F}] = 0$. 
The surviving terms are
\begin{align*}
E\!\left[(I-1)\, S_W^2 \;\middle|\; \mathcal{F}\right] 
&= \sum_{i=1}^I \mathrm{Var}(W_i \mid \mathcal{F}) \;+\; \sum_{i=1}^I (\mu_i - \bar{\mu}_I)(\mu_i - \bar{\mu}_I)^\top \;-\; I\, \mathrm{Var}(\bar{W}_I \mid \mathcal{F}) \nonumber\\
&= I\, \bar{\Sigma}_I \;+\; \sum_{i=1}^I (\mu_i - \bar{\mu}_I)(\mu_i - \bar{\mu}_I)^\top \;-\; \bar{\Sigma}_I, \label{eq:expand_SW}
\end{align*}
 Dividing by $(I-1)$:
\begin{equation}\label{eq:SW_bias}
E[S_W^2 \mid \mathcal{F}] = \bar{\Sigma}_I + \frac{1}{I-1} \sum_{i=1}^I (\mu_i - \bar{\mu}_I)(\mu_i - \bar{\mu}_I)^\top.
\end{equation}
Since the second term is positive semi-definite, 
$E[S_W^2 \mid \mathcal{F}] \succeq \bar{\Sigma}_I$, meaning
for any vector $v \in \mathbb{R}^4$: 
\[
v^\top E[S_W^2 \mid \mathcal{F}]\, v \geq v^\top \bar{\Sigma}_I\, v.
\]
In particular, taking $v=\nabla g(\boldsymbol\mu)$:
\begin{equation}\label{eq:expconserv}
  [\nabla g(\boldsymbol\mu)]^\top E[S_W^2 \mid\mathcal{F}]\,[\nabla g(\boldsymbol\mu)]
  \;\geq\;
  [\nabla g(\boldsymbol\mu)]^\top\bar\Sigma_I\,[\nabla g(\boldsymbol\mu)]
  \;\to\; \sigma_H^2.
\end{equation}

\noindent\textit{Step 4b: Multivariate Lemma~3 applied to $S_W^2$ via 
linear projections.}
We show that $\,S_W^2 - \,E[S_W^2 \mid \mathcal{F}] \xrightarrow{p} 0$ 
entry-wise. Lemma~3 is a scalar statement and does not apply directly to 
the matrix $S_W^2$; instead, we apply it to the scalar projection 
$a^\top S_W^2\,a$ for an arbitrary fixed $a \in \mathbb{R}^4$, and then 
recover entry-wise convergence.

Fix $a \in \mathbb{R}^4$ and set $D_i^a := a^\top W_i$. Then 
$D_1^a, \ldots, D_I^a$ are independent conditional on $\mathcal{F}$ by 
Assumption~1, and
\[
  a^\top S_W^2\,a 
  \;=\; \frac{1}{I-1}\sum_{i=1}^I \bigl(a^\top (W_i - \bar W_I)\bigr)^2
  \;=\; \frac{1}{I-1}\sum_{i=1}^I \bigl(D_i^a - \bar D_I^a\bigr)^2
  \;=\; I\,S_a^2,
\]
where $S_a^2 := \frac{1}{I(I-1)}\sum_{i=1}^I (D_i^a - \bar D_I^a)^2$ is 
the scalar variance estimator to which Lemma~3 applies. We verify the three conditions of Lemma~3.
Each component of $W_i$ is a finite linear combination of bounded 
potential outcomes, so there exists $B_a < \infty$, depending only on $a$ 
and the bound on the potential outcomes, such that $|D_i^a| \leq B_a$ for 
all $i$.

\emph{Condition (i):} $E[\bar D_I^a \mid \mathcal{F}] = a^\top \bar\mu_I$, 
which is bounded uniformly in $I$ since the entries of $\bar\mu_I$ are 
uniformly bounded under Condition~1. Hence $E[\bar D_I^a \mid \mathcal{F}] = O(1)$.

\emph{Condition (ii):} 
$I^{-2}\sum_{i=1}^I E[(D_i^a)^2 \mid \mathcal{F}] \leq I^{-2} \cdot I \cdot B_a^2 = B_a^2/I \to 0$.

\emph{Condition (iii):} 
$I^{-2}\sum_{i=1}^I E[(D_i^a)^4 \mid \mathcal{F}] \leq B_a^4/I \to 0$.

Lemma~3 then yields $I\,S_a^2 - I\,E[S_a^2 \mid \mathcal{F}] 
\xrightarrow{p} 0$, which since $a^\top S_W^2 a = I\,S_a^2$ is exactly
\begin{equation}\label{eq:projection_concentration}
  a^\top S_W^2\,a - a^\top E[S_W^2 \mid \mathcal{F}]\,a \;\xrightarrow{p}\; 0.
\end{equation}
For each pair $(\ell,\ell')\in\{1,2,3,4\}^2$, the $({\ell},\ell')$ entry of 
$S_W^2$ is
$S_W^{2,(\ell\ell')} = (I-1)^{-1}\sum_i(W_i^{(\ell)}-\bar{W}^{(\ell)})(W_i^{(\ell')}-\bar{W}^{(\ell')})$.
Since~\eqref{eq:projection_concentration} holds for every fixed 
$a \in \mathbb{R}^4$, applying it to $a = e_\ell$ (the $\ell$-th 
standard basis vector) gives the entry-wise convergence
$S_W^{2,(\ell\ell)} - E[S_W^{2,(\ell\ell)} \mid \mathcal{F}] \xrightarrow{p} 0$. For the off-diagonal entries of  $M_I:=S_W^2-E[S_W^2 \mid \mathcal{F}]$, we note that
\[
  M_I^{(\ell\ell')}
  \;=\; \tfrac{1}{2}\bigl[(e_\ell + e_{\ell'})^\top M_I (e_\ell + e_{\ell'}) 
                          - e_\ell^\top M_I e_\ell - e_{\ell'}^\top M_I e_{\ell'}\bigr]
\]
can be expressed as a fixed linear combination of three scalar projections of $M_I$, each 
of which converges to zero in probability by the scalar result applied 
to $a = e_\ell + e_{\ell'}$, $a = e_\ell$, and $a = e_{\ell'}$ in turn. 
By continuous mapping theorem, 
$M_I^{(\ell\ell')} \xrightarrow{p} 0$ for all $\ell \neq \ell'$.
We conclude that
\begin{equation}\label{eq:SW_concentration}
  \,S_W^2 - \,E[S_W^2 \mid \mathcal{F}] \;\xrightarrow{p}\; 0
  \quad \text{entry-wise.}
\end{equation}

\noindent\textit{Step 4c: Consistent (conservative) estimation of 
$\sigma_H^2$ via the continuous mapping theorem.}
Combining Steps~4a and~4b, define 
$\Lambda_I := (I-1)^{-1}\sum_{i=1}^I (\mu_i - \bar\mu_I)(\mu_i - \bar\mu_I)^\top$, 
so that~\eqref{eq:SW_bias} reads $E[S_W^2 \mid \mathcal{F}] = \bar\Sigma_I + \Lambda_I$. 
Together with~\eqref{eq:SW_concentration}:
\begin{equation}\label{eq:SW_decomp}
  S_W^2 \;=\; E[S_W^2 \mid \mathcal{F}] + o_p(1)
        \;=\; \bar\Sigma_I + \Lambda_I + o_p(1)
        \;\xrightarrow{p}\; \Sigma + \Lambda,
\end{equation}
where $\Lambda := \lim_{I\to\infty} \Lambda_I$, which exists 
and is positive semi-definite under Condition~1.

 Since $W_1, \ldots, W_I$ are independent with uniformly bounded entries, we have
\[
  \mathrm{Var}(\bar W_I^{(\ell)}\mid\mathcal{F}) 
  \;=\; I^{-2}\sum_{i=1}^I \mathrm{Var}(W_i^{(\ell)}\mid\mathcal{F}) 
  \;=\; O(I^{-1}),
\]
for each component $\ell$. By Chebyshev's inequality, 
$\bar W_I - \bar\mu_I \xrightarrow{p} 0$ entry-wise, and hence 
$\bar W_I \xrightarrow{p} \boldsymbol\mu$.  The continuous mapping theorem gives
\begin{equation}\label{eq:sigmahat_limit}
 \hat\sigma_H^2 := [\nabla g(\bar W_I)]^\top S_W^2\,[\nabla g(\bar W_I)]
  \;\xrightarrow{p}\;
  [\nabla g(\boldsymbol\mu)]^\top(\Sigma + \Lambda)\,[\nabla g(\boldsymbol\mu)]
  \;=\;  \sigma_H^2 + \delta,
\end{equation}
where 
$\delta := [\nabla g(\boldsymbol\mu)]^\top \Lambda\, [\nabla g(\boldsymbol\mu)] \geq 0$ because $\Lambda$ is positive semi-definite; 
$\delta = 0$ if $\mu_i$ is constant across strata $i=1,\dots,I$.


\medskip
\noindent\textit{Step 4d: Slutsky's theorem and asymptotic level.}
Under $H_{0,\mathrm{homog}}$, Step~3 gives 
$\sqrt{I}\,g(\bar W_I) \xrightarrow{d} N(0, \sigma_H^2)$, and Step~4c 
gives $\hat\sigma_H^2 \xrightarrow{p} \sigma_H^2 + \delta$ with 
$\delta \geq 0$.  Slutsky's theorem then yields
\[
  T_{\mathrm{homog}}
  \;=\; \frac{\sqrt{I}\,g(\bar W_I)}{\sqrt{\hat\sigma_H^2}}
  \;=\; \frac{\sqrt{I}\,g(\bar W_I)}{\sqrt{\sigma_H^2}}
        \;\cdot\;
        \frac{\sqrt{\sigma_H^2}}{\sqrt{\hat\sigma_H^2}}
  \;\xrightarrow{d}\;
  N\!\Bigl(0,\;\frac{\sigma_H^2}{\sigma_H^2+\delta}\Bigr).
\]
The asymptotic variance $\sigma_H^2/(\sigma_H^2+\delta)$ is at most $1$, 
with equality if and only if $\delta = 0$. Therefore, for 
$Z \sim N(0,1)$:
\[
  \limsup_{I\to\infty}\,
  \Pr\!\left\{|T_{\mathrm{homog}}| \geq z_{1-\alpha/2}
              \;\Big|\; \mathcal{F},\,\mathcal{Z}\right\}
  \;\leq\;
  \Pr\!\left\{|Z| \geq z_{1-\alpha/2}\right\}
  \;=\; \alpha,
\]
establishing that the test is asymptotically valid at level~$\alpha$. The inequality is strict whenever $\delta > 0$, in which case the test is conservative; equality holds when $\mu_i$ is constant across strata.
\end{proof}
}

\subsection*{A.6 Proof of Theorem 5}

\noindent To streamline exposition, we first present the proof of Theorem 5, as it contains the key technical ingredients used in biased randomization inference. Theorem 4 is proved by the same argument applied to a simpler test statistic; hence we prove Theorem 5 below and omit the nearly identical details for Theorem 4.

\begin{proof}







The proof proceeds in three main steps: (1) establish stochastic dominance conditional on each treatment assignment realization, (2) construct moment bounds to replace the unknown expectation and variance under the null, and (3) verify asymptotic normality via a Lyapunov's condition.


\paragraph{Step 1: Stochastic dominance construction.}
There are four possible realizations of $(T_{i1}, T_{i2})$, each occurring with probability $1/4$ under randomized treatment assignment within each pair. We consider each realization separately.
Let $\hat\tau^{(t_1, t_2)}_{i,\text{SW}}$ denote the value of $\hat\tau_{i, \mathrm{SW}}(\lambda_0) $ given the realization $(T_{i1}, T_{i2})=(t_1, t_2)$, where $(t_1, t_2)\in \{(+1,+1), (+1,-1), (-1,+1), (-1,-1)\}$. For each realization, we decompose:
$$\hat\tau^{(t_1, t_2)}_{i,\text{SW}} = \bar\tau_{i,\text{SW}}^{(t_1, t_2)} + (2A_i - 1)\eta_{i,\text{SW}}^{(t_1, t_2)}$$
where $\eta_{i,\text{SW}}^{(t_1, t_2)}$ are defined in Section A.1, and the quantities $\hat\tau^{(t_1, t_2)}_{i,\text{SW}}$ and $\bar\tau_{i,\text{SW}}^{(t_1, t_2)}$ are specified for each case as follows:

---

\textit{Case 1:} $(t_1, t_2) = (1,-1)$
\begin{align*}
\hat\tau_{i,\text{SW}}^{(1,-1)}
&= A_i\{(y_{T^bi22} - y_{C^bi21}) - (y_{T^ai11}-y_{C^ai12})\}  \\
&\quad+ (1-A_i)\{(y_{T^bi11} - y_{C^bi12}) - (y_{T^ai22} - y_{C^ai21})\}, \\[6pt]
\bar\tau_{i,\text{SW}}^{(1,-1)} 
&= \tfrac{1}{2}\!\left[(y_{T^bi11} - y_{C^bi12}) - (y_{T^ai11} - y_{C^ai12})
+ (y_{T^bi22} - y_{C^bi21}) - (y_{T^ai22} - y_{C^ai21})\right]. 
\end{align*}

---

\textit{Case 2:} $(t_1, t_2) = (-1,-1)$
\begin{align*}
\hat\tau_{i,\text{SW}}^{(-1,-1)}
&= A_i\{(y_{T^bi22} - y_{C^bi21}) - (y_{T^ai12}-y_{C^ai11})\} \\
&\quad+ (1-A_i)\{(y_{T^bi12} - y_{C^bi11}) - (y_{T^ai22} - y_{C^ai21})\}, \\[6pt]
\bar\tau_{i,\text{SW}}^{(-1,-1)} 
&= \tfrac{1}{2}\!\left[(y_{T^bi12} - y_{C^bi11}) - (y_{T^ai12} - y_{C^ai11})
+ (y_{T^bi22} - y_{C^bi21}) - (y_{T^ai22} - y_{C^ai21})\right]. 
\end{align*}

---

\textit{Case 3:} $(t_1, t_2) = (-1,1)$
\begin{align*}
\hat\tau_{i,\text{SW}}^{(-1,1)}
&= A_i\{(y_{T^bi21} - y_{C^bi22}) - (y_{T^ai12}-y_{C^ai11})\} \\
&\quad+ (1-A_i)\{(y_{T^bi12} - y_{C^bi11}) - (y_{T^ai21} - y_{C^ai22})\}, \\[6pt]
\bar\tau_{i,\text{SW}}^{(-1,1)} 
&= \tfrac{1}{2}\!\left[(y_{T^bi12} - y_{C^bi11}) - (y_{T^ai12} - y_{C^ai11})
+ (y_{T^bi21} - y_{C^bi22}) - (y_{T^ai21} - y_{C^ai22})\right]. 
\end{align*}

---

\textit{Case 4:} $(t_1, t_2) = (1,1)$
\begin{align*}
\hat\tau_{i,\text{SW}}^{(1,1)}
&= A_i\{(y_{T^bi21} - y_{C^bi22}) - (y_{T^ai11} - y_{C^ai12})\} \\
&\quad+ (1-A_i)\{(y_{T^bi11} - y_{C^bi12}) - (y_{T^ai21} - y_{C^ai22})\}, \\[6pt]
\bar\tau_{i,\text{SW}}^{(1,1)} 
&= \tfrac{1}{2}\!\left[(y_{T^bi11} - y_{C^bi12}) - (y_{T^ai11} - y_{C^ai12})
+ (y_{T^bi21} - y_{C^bi22}) - (y_{T^ai21} - y_{C^ai22})\right]. 
\end{align*}

\noindent Under the partly biased randomization scheme, assignment probabilities satisfy:
\begin{align}
\frac{1}{1+\Gamma} \leq \pi_i = \Pr(A_i = 1) \leq \frac{\Gamma}{1+\Gamma}, \quad \frac{1}{1+\Gamma} \leq 1-\pi_i = \Pr(A_i = 0) \leq \frac{\Gamma}{1+\Gamma}
\end{align}
This implies:
$$ \frac{1}{1+\Gamma} \leq \Pr\{(2A_i- 1)\eta_{i,\text{SW}}^{(t_1, t_2)}=|\eta_{i,\text{SW}}^{(t_1, t_2)}|\}\leq \frac{\Gamma}{1+\Gamma}$$
 for $\eta_{i,\text{SW}}^{(t_1, t_2)}\neq 0$. If $\eta_{i,\text{SW}}^{(t_1, t_2)}= 0$ then $(2A_i- 1)\eta_{i,\text{SW}}^{(t_1, t_2)}=|\eta_{i,\text{SW}}^{(t_1, t_2)}|=0$, and all inequalities below hold trivially.

\noindent Define a binary random variable $V_{i,\Gamma}$ taking values $+1$ or $-1$ with:
\begin{align}
\Pr(V_{i,\Gamma} = +1) = \frac{\Gamma}{1+\Gamma}, \quad \Pr(V_{i,\Gamma} = -1) = \frac{1}{1+\Gamma}.
\end{align}
For the fixed $\eta_{i,\text{SW}}^{(t_1, t_2)}$, the variable $V_{i,\Gamma}|\eta_{i,\text{SW}}^{(t_1, t_2)}|$ takes values $\pm|\eta_{i,\text{SW}}^{(t_1, t_2)}|$ with mass on $+|\eta_{i,\text{SW}}^{(t_1, t_2)}|$ equal to $\Gamma/(1+\Gamma)$, 
which is at least the mass that $(2A_i - 1)\eta_{i,\text{SW}}^{(t_1, t_2)}$ places on $+|\eta_{i,\text{SW}}^{(t_1, t_2)}|$. 
Therefore, for each fixed $(t_1, t_2)$, 
$V_{i,\Gamma}|\eta_{i,\text{SW}}^{(t_1, t_2)}|$ stochastically dominates $(2A_i - 1)\eta_{i,\text{SW}}^{(t_1, t_2)}$.

Consider the function $f(x) = x - \frac{\Gamma-1}{\Gamma+1} |x|$. For $x \geq 0$, $f(x)=\frac{2}{\Gamma+1} x$ and for $x < 0$, $f(x)=\frac{2\Gamma}{\Gamma+1} x$. Since $\Gamma>0$ and  $f$ is continuous at $x=0$ with $f(0)=0$, $f$ is strictly increasing on $\mathbbm{R}$.

\noindent Define
$$U_{i,\Gamma}^{(t_1, t_2)} \equiv \bar\tau_{i,\text{SW}}^{(t_1, t_2)} + V_{i,\Gamma}|\eta_{i,\text{SW}}^{(t_1, t_2)}| - \left(\frac{\Gamma-1}{\Gamma+1}\right)|\bar\tau_{i,\text{SW}}^{(t_1, t_2)} + V_{i,\Gamma}|\eta_{i,\text{SW}}^{(t_1, t_2)}||$$
and
$$D_{i,\Gamma}^{(t_1, t_2)} \equiv  \hat\tau^{(t_1, t_2)}_{i,\text{SW}} - \left(\frac{\Gamma-1}{\Gamma+1}\right)|\hat\tau^{(t_1, t_2)}_{i,\text{SW}}|$$
Since stochastic dominance is preserved under the monotone increasing $f$, we have:
$$\Pr(U_{i,\Gamma}^{(t_1, t_2)} \geq c) \geq\Pr(D_{i,\Gamma}^{(t_1, t_2)} \geq c) = \Pr(D^{\mathrm{SW}}_{i, \Gamma}(\lambda_0)\geq c \mid (T_{i1}, T_{i2}) = (t_1, t_2)),$$
for any scalar $c$ and each realization $(T_{i1}, T_{i2}) = (t_1, t_2)$, where the last equality uses the definition $D_{i,\Gamma}^{(t_1, t_2)} = D_{i,\Gamma}$ on the event $(T_{i1}, T_{i2}) = (t_1, t_2)$.

 By the law of total probability:
\begin{align}\label{law of total prob: D}
\Pr(D^{\mathrm{SW}}_{i, \Gamma}(\lambda_0)\geq c) &= \sum_{(t_1,t_2)} \Pr(D^{\mathrm{SW}}_{i, \Gamma}(\lambda_0)\geq c \mid (T_{i1}, T_{i2}) = (t_1, t_2)) \Pr((T_{i1}, T_{i2}) = (t_1, t_2))\nonumber\\
&= \frac{1}{4} \sum_{(t_1,t_2)} \Pr(D^{\mathrm{SW}}_{i, \Gamma}(\lambda_0)\geq c \mid (T_{i1}, T_{i2}) = (t_1, t_2))\nonumber\\
&\leq \frac{1}{4} \sum_{(t_1,t_2)} \Pr(U_{i,\Gamma}^{(t_1, t_2)} \geq c),
\end{align}

Define $U_{i,\Gamma}$ as a random variable with:
\begin{align*}
\Pr(U_{i,\Gamma} = U_{i,\Gamma}^{(t_1, t_2)}) = \frac{1}{4} \quad \text{for } (t_1, t_2) \in \{(+1,+1), (+1,-1), (-1,+1), (-1,-1)\}
\end{align*}

 Using the law of total probability for $U_{i,\Gamma}$:
\begin{align}\label{law of total prob: U}
\Pr(U_{i,\Gamma} \geq c) &= \sum_{(t_1,t_2)} \Pr(U_{i,\Gamma} \geq c \mid U_{i,\Gamma} = U_{i,\Gamma}^{(t_1, t_2)}) \Pr(U_{i,\Gamma} = U_{i,\Gamma}^{(t_1, t_2)})\nonumber\\
&= \frac{1}{4} \sum_{(t_1,t_2)} \Pr(U_{i,\Gamma}^{(t_1, t_2)} \geq c)
\end{align}


Combining the results from \eqref{law of total prob: D} and \eqref{law of total prob: U}, we prove that $U_{i,\Gamma}$ stochastically dominates $D^{\mathrm{SW}}_{i, \Gamma}(\lambda_0)$ for each $i$, i.e., for any scalar $c$:
$$\Pr(U_{i,\Gamma} \geq c) \geq \Pr(D^{\mathrm{SW}}_{i, \Gamma}(\lambda_0)\geq c)$$

\paragraph{Step 2: Moment bounds.}
By Lemma 1, $S^2_{\Gamma}$ provides an estimator of $\text{var}(\overline{D}_{\Gamma})$ which is conservative in expectation such that
$\mathrm{E}(S^2_{\Gamma}) \geq \text{var}(\overline{D}_{\Gamma})$ for any $\Gamma \geq 1$. 
We then bound the expectation of $\overline{D}_{\Gamma}$ by bounding the expectation of its stochastic upper bound $\overline{U}_{\Gamma}$.

\noindent For each $i$,
\begin{align*}
    \mathrm{E}(U_{i,\Gamma}^{(t_1, t_2)})&= \bar\tau_{i,\text{SW}}^{(t_1, t_2)}+\frac{\Gamma-1}{1+\Gamma}|\eta_{i,\text{SW}}^{(t_1, t_2)}|\\&- \left(\frac{\Gamma-1}{\Gamma+1}\right) 
    \{ \frac{\Gamma}{\Gamma+1}|\bar\tau_{i,\text{SW}}^{(t_1, t_2)} + |\eta_{i,\text{SW}}^{(t_1, t_2)}||+ \frac{1}{1+\Gamma}|\bar\tau_{i,\text{SW}}^{(t_1, t_2)} -|\eta_{i,\text{SW}}^{(t_1, t_2)}||\}
\end{align*}
\noindent Following Lemma 2 in \cite{fogarty2020studentized}, with $\Gamma \geq 1$, the inequality $$\frac{\Gamma}{\Gamma+1}|\bar\tau_{i,\text{SW}}^{(t_1, t_2)} + |\eta_{i,\text{SW}}^{(t_1, t_2)}||+ \frac{1}{1+\Gamma}|\bar\tau_{i,\text{SW}}^{(t_1, t_2)} -|\eta_{i,\text{SW}}^{(t_1, t_2)}|| \geq |\eta_{i,\text{SW}}^{(t_1, t_2)}| + \frac{\Gamma-1}{\Gamma+1}\bar\tau_{i,\text{SW}}^{(t_1, t_2)}$$ always holds. Hence, 
\begin{align*}
    \mathrm{E}(U_{i,\Gamma}^{(t_1, t_2)})
    &\leq \bar\tau_{i,\text{SW}}^{(t_1, t_2)}+\frac{\Gamma-1}{1+\Gamma}|\eta_{i,\text{SW}}^{(t_1, t_2)}|- \left(\frac{\Gamma-1}{\Gamma+1}\right)\{|\eta_{i,\text{SW}}^{(t_1, t_2)}| + \frac{\Gamma-1}{\Gamma+1}\bar\tau_{i,\text{SW}}^{(t_1, t_2)}\}\\
    &=\left[1- \left(\frac{\Gamma-1}{\Gamma+1}\right)^2\right]\bar\tau_{i,\text{SW}}^{(t_1, t_2)}.
\end{align*}
Averaging over the four treatment assignment cases of $(T_{i1},T_{i2})$,
\begin{align*}
\mathrm{E}[U_{i,\Gamma}] &= \frac{1}{4}\sum_{t_{1} \in \{1,-1\}} \sum_{t_{2} \in \{1,-1\}} \mathrm{E}[U_{i,\Gamma}^{(t_1, t_2)}]\\
&\leq \frac{1}{4}\left[1- \left(\frac{\Gamma-1}{\Gamma+1}\right)^2\right]\sum_{t_{1} \in \{1,-1\}} \sum_{t_{2} \in \{1,-1\}} \bar\tau_{i,\text{SW}}^{(t_1, t_2)}
\end{align*}

\noindent Summing over all strata:
\begin{align*}
    \mathrm{E}\left[\sum_{i=1}^I U_{i,\Gamma}\right] &\leq \frac{1}{4}\left[1- \left(\frac{\Gamma-1}{\Gamma+1}\right)^2\right] \sum_{i=1}^I \sum_{t_{1} \in \{1,-1\}} \sum_{t_{2} \in \{1,-1\}} \bar\tau_{i,\text{SW}}^{(t_1, t_2)} 
    \\&= \frac{1}{4}\left[1- \left(\frac{\Gamma-1}{\Gamma+1}\right)^2\right]\sum_{i = 1}^I\sum_{j = 1}^2 \sum_{k = 1}^2 (\tau^b_{ijk} - \tau^a_{ijk}),
\end{align*}
where the second equality follows from the definition of $\bar\tau_{i,\text{SW}}^{(t_1, t_2)}$ with $\tau^g_{ijk} = y_{T^gijk}-y_{C^gijk}$.  

\noindent Under $H^{SW}_0: \lambda = \lambda_0$, we have $$\sum_{i = 1}^I \sum_{j = 1}^2 \sum_{k = 1}^2 (\tau^b_{ijk} - \tau^a_{ijk})=0,$$ which implies
$\mathrm{E}\left[\sum_{i=1}^I U_{i,\Gamma}\right] \leq 0$.
Since $D^{\mathrm{SW}}_{i, \Gamma}(\lambda_0)$ is stochastically dominated by $U_{i,\Gamma}$ for each $i$,  it follows that $\mathrm{E}[\overline{D}_{\Gamma, \mathrm{SW}}(\lambda_0)] \leq 0$ under $H^{SW}_0$.


\paragraph{Step 3: Establish the asymptotic normality.}

Given the independence across strata, to prove asymptotic normality of $\sqrt{I}\overline{D}_{\Gamma, \mathrm{SW}}(\lambda_0)$, it suffices to show that Lyapunov’s condition holds for $\delta=2$.

\medskip
\noindent For the conditional variance, we compute
\begin{align*}
   &\operatorname{Var}( D^{\mathrm{SW}}_{i,\Gamma}(\lambda_0)\mid T_{i1} = t_1, T_{i2} =t_2)  
  \\=&  \operatorname{Var}(\hat\tau_{i,\text{SW}}^{(t_1, t_2)} - \left(\frac{\Gamma-1}{\Gamma+1}\right)\lvert\hat\tau_{i,\text{SW}}^{(t_1, t_2)}\lvert\mid T_{i1} = t_1, T_{i2} =t_2) \\
   =&  \operatorname{Var}(\bar\tau_{i,\text{SW}}^{(t_1, t_2)} + (2A_i - 1)\eta_{i,\text{SW}}^{(t_1, t_2)} - \left(\frac{\Gamma-1}{\Gamma+1}\right)|\bar\tau_{i,\text{SW}}^{(t_1, t_2)} + (2A_i - 1)\eta_{i,\text{SW}}^{(t_1, t_2)}| \mid T_{i1} = t_1, T_{i2} =t_2) 
\end{align*}
Since $(2A_i-1)$ takes values $\pm1$ with probability $\pi_i$ and $1-\pi_i$, respectively, we have
\begin{align*}
&\operatorname{E}\biggr[\bar\tau_{i,\text{SW}}^{(t_1, t_2)} + (2A_i - 1)\eta_{i,\text{SW}}^{(t_1, t_2)} - \frac{\Gamma-1}{\Gamma+1}|\bar\tau_{i,\text{SW}}^{(t_1, t_2)} + (2A_i - 1)\eta_{i,\text{SW}}^{(t_1, t_2)}|\biggr] \\
=& \pi_i\left(\bar\tau_{i,\text{SW}}^{(t_1, t_2)} + \eta_{i,\text{SW}}^{(t_1, t_2)} - \frac{\Gamma-1}{\Gamma+1}|\bar\tau_{i,\text{SW}}^{(t_1, t_2)} + \eta_{i,\text{SW}}^{(t_1, t_2)}|\right) 
+ (1-\pi_i)\left(\bar\tau_{i,\text{SW}}^{(t_1, t_2)} - \eta_{i,\text{SW}}^{(t_1, t_2)} - \frac{\Gamma-1}{\Gamma+1}|\bar\tau_{i,\text{SW}}^{(t_1, t_2)} - \eta_{i,\text{SW}}^{(t_1, t_2)}|\right) 
\end{align*}
and
\begin{align*}
&\operatorname{E}\biggr[\left(\bar\tau_{i,\text{SW}}^{(t_1, t_2)} + (2A_i - 1)\eta_{i,\text{SW}}^{(t_1, t_2)} - \frac{\Gamma-1}{\Gamma+1}|\bar\tau_{i,\text{SW}}^{(t_1, t_2)} + (2A_i - 1)\eta_{i,\text{SW}}^{(t_1, t_2)}|\right)^2\biggr] \\
=& \pi_i\left(\bar\tau_{i,\text{SW}}^{(t_1, t_2)} + \eta_{i,\text{SW}}^{(t_1, t_2)} - \frac{\Gamma-1}{\Gamma+1}|\bar\tau_{i,\text{SW}}^{(t_1, t_2)} + \eta_{i,\text{SW}}^{(t_1, t_2)}|\right)^2 
+ (1-\pi_i)\left(\bar\tau_{i,\text{SW}}^{(t_1, t_2)} - \eta_{i,\text{SW}}^{(t_1, t_2)} - \frac{\Gamma-1}{\Gamma+1}|\bar\tau_{i,\text{SW}}^{(t_1, t_2)} - \eta_{i,\text{SW}}^{(t_1, t_2)}|\right)^2
\end{align*}
Then by the definition of variance, we obtain
\begin{align*}
   &\operatorname{Var}( D^{\mathrm{SW}}_{i,\Gamma}(\lambda_0)\mid T_{i1} = t_1, T_{i2} =t_2)  
  \\=&  \operatorname{Var}(\bar\tau_{i,\text{SW}}^{(t_1, t_2)} + (2A_i - 1)\eta_{i,\text{SW}}^{(t_1, t_2)} - \left(\frac{\Gamma-1}{\Gamma+1}\right)|\bar\tau_{i,\text{SW}}^{(t_1, t_2)} + (2A_i - 1)\eta_{i,\text{SW}}^{(t_1, t_2)}| ) \\
  =& \operatorname{E}\biggr[\left(\bar\tau_{i,\text{SW}}^{(t_1, t_2)} + (2A_i - 1)\eta_{i,\text{SW}}^{(t_1, t_2)} - \frac{\Gamma-1}{\Gamma+1}|\bar\tau_{i,\text{SW}}^{(t_1, t_2)} + (2A_i - 1)\eta_{i,\text{SW}}^{(t_1, t_2)}|\right)^2\biggr] \\
  -&\left(\operatorname{E}\biggr[\bar\tau_{i,\text{SW}}^{(t_1, t_2)} + (2A_i - 1)\eta_{i,\text{SW}}^{(t_1, t_2)} - \frac{\Gamma-1}{\Gamma+1}|\bar\tau_{i,\text{SW}}^{(t_1, t_2)} + (2A_i - 1)\eta_{i,\text{SW}}^{(t_1, t_2)}|\biggr]\right)^2 \\
   =&  \pi_i(1-\pi_i)\left(\bar\tau_{i,\text{SW}}^{(t_1, t_2)} + \eta_{i,\text{SW}}^{(t_1, t_2)} - \frac{\Gamma-1}{\Gamma+1}|\bar\tau_{i,\text{SW}}^{(t_1, t_2)} + \eta_{i,\text{SW}}^{(t_1, t_2)}|\right)^2 \\
+& \pi_i(1-\pi_i)\left(\bar\tau_{i,\text{SW}}^{(t_1, t_2)} - \eta_{i,\text{SW}}^{(t_1, t_2)} - \frac{\Gamma-1}{\Gamma+1}|\bar\tau_{i,\text{SW}}^{(t_1, t_2)} - \eta_{i,\text{SW}}^{(t_1, t_2)}|\right)^2\\
-& 2\pi_i(1-\pi_i)\left(\bar\tau_{i,\text{SW}}^{(t_1, t_2)} + \eta_{i,\text{SW}}^{(t_1, t_2)} - \frac{\Gamma-1}{\Gamma+1}|\bar\tau_{i,\text{SW}}^{(t_1, t_2)} + \eta_{i,\text{SW}}^{(t_1, t_2)}|\right) \left(\bar\tau_{i,\text{SW}}^{(t_1, t_2)} - \eta_{i,\text{SW}}^{(t_1, t_2)} - \frac{\Gamma-1}{\Gamma+1}|\bar\tau_{i,\text{SW}}^{(t_1, t_2)} - \eta_{i,\text{SW}}^{(t_1, t_2)}|\right) \\
 =& \pi_i(1-\pi_i)\{2\eta_{i,\text{SW}}^{(t_1, t_2)}-\frac{\Gamma-1}{\Gamma+1}(|\bar\tau_{i,\text{SW}}^{(t_1, t_2)} + \eta_{i,\text{SW}}^{(t_1, t_2)} | -|\bar\tau_{i,\text{SW}}^{(t_1, t_2)} - \eta_{i,\text{SW}}^{(t_1, t_2)} |)\}^2 
\end{align*}

\noindent Under the sensitivity model $\pi_i \in [1/(1+\Gamma), \Gamma/(1+\Gamma)]$ with $\Gamma \ge 1$,  the product $\pi_i(1-\pi_i)$ is minimized at the endpoints, yielding
\[
\pi_i(1-\pi_i) \ge \frac{\Gamma}{\Gamma+1}\cdot\frac{1}{\Gamma+1}.
\]

\noindent Next, we bound the absolute-value term using the triangle inequality: for any real numbers $x$ and $y$,
\[
|x+y| - |x-y| \le |(x+y) - (x-y)| =  2|y|.
\]
Applying this with $x = \bar\tau_{i,\text{SW}}^{(t_1, t_2)}$ and $y = \eta_{i,\text{SW}}^{(t_1, t_2)}$ gives
\[
2\eta_{i,\text{SW}}^{(t_1, t_2)}-\frac{\Gamma-1}{\Gamma+1}(|\bar\tau_{i,\text{SW}}^{(t_1, t_2)} + \eta_{i,\text{SW}}^{(t_1, t_2)} | -|\bar\tau_{i,\text{SW}}^{(t_1, t_2)} - \eta_{i,\text{SW}}^{(t_1, t_2)} |)
\;\ge\; 
\begin{cases}
2\eta_{i,\text{SW}}^{(t_1, t_2)} \cdot \frac{2}{\Gamma+1}, & \eta_{i,\text{SW}}^{(t_1, t_2)} > 0,\\[2mm]
2\eta_{i,\text{SW}}^{(t_1, t_2)} \cdot \frac{2\Gamma}{\Gamma+1}, & \eta_{i,\text{SW}}^{(t_1, t_2)} < 0.
\end{cases}
\]
\noindent Since $\Gamma \ge 1$, in either case we have the conservative bound
\[
\{2\eta_{i,\text{SW}}^{(t_1, t_2)}-\frac{\Gamma-1}{\Gamma+1}(|\bar\tau_{i,\text{SW}}^{(t_1, t_2)} + \eta_{i,\text{SW}}^{(t_1, t_2)} | -|\bar\tau_{i,\text{SW}}^{(t_1, t_2)} - \eta_{i,\text{SW}}^{(t_1, t_2)} |)\}^2
\;\ge\;  \frac{16 \left(\eta_{i,\text{SW}}^{(t_1, t_2)}\right)^2}{(1+\Gamma)^2}.
\]
\noindent Combining with the bound on $\pi_i(1-\pi_i)$, we obtain
\begin{align*}
   \operatorname{Var}(D^{\mathrm{SW}}_{i,\Gamma}(\lambda_0)\mid T_{i1} = t_1, T_{i2} =t_2)  
  &\geq 16\frac{\Gamma}{\Gamma+1}\left(\frac{1}{\Gamma+1}\right)^3\left(\eta_{i,\text{SW}}^{(t_1, t_2)}\right)^2.
\end{align*}

\noindent By the law of total variance, we have
\begin{align*}
\operatorname{Var}(D^{\mathrm{SW}}_{i,\Gamma}(\lambda_0)) 
&= \mathrm{E}\!\left[ \operatorname{Var}(D^{\mathrm{SW}}_{i,\Gamma}(\lambda_0)\mid T_{i1}, T_{i2}) \right] 
  + \operatorname{Var}\!\left( \mathrm{E}[D^{\mathrm{SW}}_{i,\Gamma}(\lambda_0) \mid T_{i1}, T_{i2}] \right)\\
  &\geq \mathrm{E}\!\left[ \operatorname{Var}(D^{\mathrm{SW}}_{i,\Gamma}(\lambda_0)\mid T_{i1}, T_{i2}) \right] \\
  &\geq \mathrm{E}\!\left[ 16\frac{\Gamma}{\Gamma+1}\left(\frac{1}{\Gamma+1}\right)^3\left(\eta_{i,\text{SW}}^{(T_{i1}, T_{i2})}\right)^2 \right]\\
  &=16\frac{\Gamma}{\Gamma+1}\left(\frac{1}{\Gamma+1}\right)^3  \sum_{{t_1\in \{1,-1\}}} \sum_{{t_2\in \{1,-1\}}}\Pr\left\{(T_{i1}, T_{i2}) = (t_1, t_2)\right\}\left(\eta_{i,\text{SW}}^{(t_1, t_2)}\right)^2\\
  &=\frac{4\Gamma}{(\Gamma+1)^4} \sum_{{t_1\in \{1,-1\}}} \sum_{{t_2\in \{1,-1\}}}\left(\eta_{i,\text{SW}}^{(t_1, t_2)}\right)^2.
\end{align*}

\noindent Under the partly biased randomization scheme,
\begin{align*}
\mathrm{E}(D^{\mathrm{SW}}_{i,\Gamma}(\lambda_0))
&= \frac{\pi_i}{4}\sum_{k_1\neq k_1'}\sum_{k_2\neq k_2'}
   \Bigg[
      \big(y_{C^ai1k_1}-y_{T^ai1k_1'} + y_{T^bi2k_2}-y_{C^bi2k_2'}\big)  \\
&\hspace{4cm}
      -\frac{\Gamma-1}{\Gamma+1}
       \,\big|y_{C^ai1k_1}-y_{T^ai1k_1'} + y_{T^bi2k_2}-y_{C^bi2k_2'}\big|
   \Bigg] \\[6pt]
&\quad+ \frac{1-\pi_i}{4}\sum_{k_1\neq k_1'}\sum_{k_2\neq k_2'}
   \Bigg[
      \big(y_{C^ai2k_1}-y_{T^ai2k_1'} + y_{T^bi1k_2}-y_{C^bi1k_2'}\big) \\
&\hspace{4cm}
      -\frac{\Gamma-1}{\Gamma+1}
       \,\big|y_{C^ai2k_1}-y_{T^ai2k_1'} + y_{T^bi1k_2}-y_{C^bi1k_2'}\big|
   \Bigg].
\end{align*}
By H\"older’s inequality, for each stratum $i$ there exists $C_i<\infty$ such that
\begin{align*}
 \sum_{i=1}^I\mathrm{E}([D^{\mathrm{SW}}_{i,\Gamma}(\lambda_0)]^2)
& \leq \sum_{i=1}^I  C_{i}\sum_{j =1}^2 \sum_{k =1}^2 \left\{  y_{T^aijk}^2 + y_{C^aijk}^2 +  y_{T^bijk}^2 + y_{C^bijk}^2 \right\},
\end{align*}

\noindent Similarly,
\begin{align}\label{fourth moment bound}
\mathrm{E}([D^{\mathrm{SW}}_{i,\Gamma}(\lambda_0)]^4) &= \sum_{{t_1\in \{1,-1\}}} \sum_{{t_2\in \{1,-1\}}} \mathrm{E}[[D^{\mathrm{SW}}_{i,\Gamma}(\lambda_0)]^4\mid (T_{i1}, T_{i2}) = (t_1, t_2)] \cdot\Pr\left\{(T_{i1}, T_{i2}) = (t_1, t_2)\right\}\nonumber\\
&= \frac{1}{4} \sum_{{t_1\in \{1,-1\}}} \sum_{{t_2\in \{1,-1\}}} \mathrm{E}[[D^{\mathrm{SW}}_{i,\Gamma}(\lambda_0)]^4\mid (T_{i1}, T_{i2}) = (t_1, t_2)] \nonumber\\
&\leq \frac{1}{4} \sum_{{t_1\in \{1,-1\}}} \sum_{{t_2\in \{1,-1\}}}128(\frac{\Gamma}{1+\Gamma})^4\left( (\bar\tau_{i,\text{SW}}^{(t_1, t_2)})^4 + (\eta_{i,\text{SW}}^{(t_1, t_2)})^4\right)\\
& \leq  C'\sum_{j =1}^2 \sum_{k =1}^2 \left\{  y_{T^aijk}^4 + y_{C^aijk}^4 +  y_{T^bijk}^4 + y_{C^bijk}^4 \right\}  \nonumber
\end{align}
for some constant $C'<\infty$, where \eqref{fourth moment bound} follows from Lemma 5 of \cite{fogarty2020studentized}.

\noindent Therefore, under Conditions~\ref{con: bdd4thmoment} and \ref{con: bdd2ndmoment}, we have
$$\mathrm{E}(\overline{D}_{\Gamma, \mathrm{SW}}(\lambda_0))=O(1), \quad\lim\limits_{I\to \infty} I^{-2}\sum_{i=1}^I\mathrm{E}([D^{\mathrm{SW}}_{i,\Gamma}(\lambda_0)]^2)  =0, \quad\lim\limits_{I\to \infty} I^{-2}\sum_{i=1}^I\mathrm{E}([D^{\mathrm{SW}}_{i,\Gamma}(\lambda_0)]^4) =0, $$
and 
$$\lim\limits_{I\to \infty}I^{-1}\sum_{i=1}^I \operatorname{var}(D^{\mathrm{SW}}_{i,\Gamma}(\lambda_0)) \geq \frac{4\Gamma}{(\Gamma+1)^4} C. $$
Then Lemma \ref{lem:IS2_consistency} implies
\begin{equation}\label{SW: S2consistency}
IS^2_{\Gamma, \mathrm{SW}}(\lambda_0) -I\mathrm{E}(S^2_{\Gamma, \mathrm{SW}}(\lambda_0)) \;\xrightarrow{p}\; 0,
\end{equation}
and Lemma \ref{lemma: CLT} yields

\begin{equation*}\label{asymp_normal}
     \frac{\sqrt{I}\overline{D}_{\Gamma, \mathrm{SW}}(\lambda_0) -\mathrm{E}(\sqrt{I}\overline{D}_{\Gamma, \mathrm{SW}}(\lambda_0) )}{\sqrt{\operatorname{var}(\sqrt{I}\overline{D}_{\Gamma, \mathrm{SW}}(\lambda_0) )}}\xrightarrow{d}\mathcal{N}(0,1), \quad \text{as} ~I \rightarrow \infty.
 \end{equation*}

\noindent Along with the moment bounds from Step 2, if the treatment assignment satisfies (5) at $\Gamma$ and $H_0^{SW}$ holds, then
\begin{equation*}\label{CLT result}
    \lim\limits_{I\to \infty} \Pr\{\sqrt{I}\overline{D}_{\Gamma, \mathrm{SW}}(\lambda_0) \geq \Phi^{-1}(1-\alpha)\sqrt{I\mathrm{E}(S^2_{\Gamma, \mathrm{SW}}(\lambda_0))}\} \leq \alpha.
\end{equation*}

\medskip
\noindent This, in combination with \eqref{SW: S2consistency}, yields the  conclusion of the theorem.

\end{proof}

\subsection*{A.7 Proof of Theorem 4}
\begin{proof}

Let $\hat\tau^{(t_1, t_2)}_{i,\mathrm{ACO}}$ denote the value of $\hat\tau_{i, \mathrm{ACO}}(\kappa_0)$ given the realization $(T_{i1}, T_{i2})=(t_1, t_2)$, where $(t_1, t_2)\in \{(+1,+1), (+1,-1), (-1,+1), (-1,-1)\}$. For each realization, we decompose:
$$\hat\tau^{(t_1, t_2)}_{i,\mathrm{ACO}} = \bar\tau_{i,\mathrm{ACO}}^{(t_1, t_2)} + (2A_i - 1)\eta_{i,\mathrm{ACO}}^{(t_1, t_2)}$$
where $\eta_{i,\mathrm{ACO}}^{(t_1, t_2)}$ are defined in Section A.1, and the quantities $\hat\tau^{(t_1, t_2)}_{i,\mathrm{ACO}}$ and $\bar\tau_{i,\mathrm{ACO}}^{(t_1, t_2)}$ are specified for each case as follows:

---

\textit{Case 1:} $(t_1, t_2) = (1,-1)$
\begin{align*}
\hat\tau_{i,\mathrm{ACO}}^{(1,-1)}
&= A_i (y_{T^ai11}-y_{C^ai12})  
+ (1-A_i) (y_{T^ai22} - y_{C^ai21}), \\[6pt]
\bar\tau_{i,\mathrm{ACO}}^{(1,-1)} 
&= \tfrac{1}{2}\!\left[(y_{T^ai11} - y_{C^ai12})
+  (y_{T^ai22} - y_{C^ai21})\right]. 
\end{align*}

---

\textit{Case 2:} $(t_1, t_2) = (-1,-1)$
\begin{align*}
\hat\tau_{i,\mathrm{ACO}}^{(-1,-1)}
&= A_i(y_{T^ai12}-y_{C^ai11})
+ (1-A_i) (y_{T^ai22} - y_{C^ai21}), \\[6pt]
\bar\tau_{i,\mathrm{ACO}}^{(-1,-1)} 
&= \tfrac{1}{2}\!\left[ (y_{T^ai12} - y_{C^ai11})
+  (y_{T^ai22} - y_{C^ai21})\right]. 
\end{align*}

---

\textit{Case 3:} $(t_1, t_2) = (-1,1)$
\begin{align*}
\hat\tau_{i,\mathrm{ACO}}^{(-1,1)}
&= A_i\{ (y_{T^ai12}-y_{C^ai11})\} + (1-A_i)\{ (y_{T^ai21} - y_{C^ai22})\}, \\[6pt]
\bar\tau_{i,\mathrm{ACO}}^{(-1,1)} 
&= \tfrac{1}{2}\!\left[(y_{T^ai12} - y_{C^ai11})
+  (y_{T^ai21} - y_{C^ai22})\right]. 
\end{align*}

---

\textit{Case 4:} $(t_1, t_2) = (1,1)$
\begin{align*}
\hat\tau_{i,\mathrm{ACO}}^{(1,1)}
&= A_i (y_{T^ai11} - y_{C^ai12}) + (1-A_i) (y_{T^ai21} - y_{C^ai22}), \\[6pt]
\bar\tau_{i,\mathrm{ACO}}^{(1,1)} 
&= \tfrac{1}{2}\!\left[ (y_{T^ai11} - y_{C^ai12})
+  (y_{T^ai21} - y_{C^ai22})\right]. 
\end{align*}

\noindent Define
$$U_{i,\Gamma,\mathrm{ACO}}^{(t_1, t_2)} \equiv \bar\tau_{i,\mathrm{ACO}}^{(t_1, t_2)} + V_{i,\Gamma}|\eta_{i,\mathrm{ACO}}^{(t_1, t_2)}| - \left(\frac{\Gamma-1}{\Gamma+1}\right)|\bar\tau_{i,\mathrm{ACO}}^{(t_1, t_2)} + V_{i,\Gamma}|\eta_{i,\mathrm{ACO}}^{(t_1, t_2)}||$$
Let $U^{\mathrm{ACO}}_{i,\Gamma}$ be a random variable taking
\begin{align*}
\Pr(U^{\mathrm{ACO}}_{i,\Gamma} = U_{i,\Gamma,\mathrm{ACO}}^{(t_1, t_2)}) = \frac{1}{4}, \quad \text{for } (t_1, t_2) \in \{(+1,+1), (+1,-1), (-1,+1), (-1,-1)\}
\end{align*}

\noindent Following the same logic as in the proof of Theorem~5, $U^{\mathrm{ACO}}_{i,\Gamma}$ stochastically dominates $ D^{\mathrm{ACO}}_{i,\Gamma}(\kappa_0)$ for each $i$, i.e., for any scalar $c$:
$$\Pr(U^{\mathrm{ACO}}_{i,\Gamma} \geq c) \geq \Pr( D^{\mathrm{ACO}}_{i,\Gamma}(\kappa_0)\geq c)$$

\noindent To bound the expectation of $\overline{D}_{\Gamma, \mathrm{ACO}}(\kappa_0)$, it therefore suffices to bound the expectation of its stochastic upper bound $\overline U^{\mathrm{ACO}}_{i,\Gamma}$. 

\noindent For each $i$,
\begin{align*}
    \mathrm{E}(U_{i,\Gamma,\mathrm{ACO}}^{(t_1, t_2)})&= \bar\tau_{i,\mathrm{ACO}}^{(t_1, t_2)}+\frac{\Gamma-1}{1+\Gamma}|\eta_{i,\mathrm{ACO}}^{(t_1, t_2)}|\\&- \left(\frac{\Gamma-1}{\Gamma+1}\right) 
    \{ \frac{\Gamma}{\Gamma+1}|\bar\tau_{i,\mathrm{ACO}}^{(t_1, t_2)} + |\eta_{i,\mathrm{ACO}}^{(t_1, t_2)}||+ \frac{1}{1+\Gamma}|\bar\tau_{i,\mathrm{ACO}}^{(t_1, t_2)} -|\eta_{i,\mathrm{ACO}}^{(t_1, t_2)}||\}
\end{align*}
\noindent By Lemma~2 of \citet{fogarty2020studentized}, for $\Gamma \ge 1$, the inequality $$\frac{\Gamma}{\Gamma+1}|\bar\tau_{i,\mathrm{ACO}}^{(t_1, t_2)} + |\eta_{i,\mathrm{ACO}}^{(t_1, t_2)}||+ \frac{1}{1+\Gamma}|\bar\tau_{i,\mathrm{ACO}}^{(t_1, t_2)} -|\eta_{i,\mathrm{ACO}}^{(t_1, t_2)}|| \geq |\eta_{i,\mathrm{ACO}}^{(t_1, t_2)}| + \frac{\Gamma-1}{\Gamma+1}\bar\tau_{i,\mathrm{ACO}}^{(t_1, t_2)}$$ always holds. Hence, 
\begin{align*}
    \mathrm{E}(U_{i,\Gamma,\mathrm{ACO}}^{(t_1, t_2)})
    &\leq \bar\tau_{i,\mathrm{ACO}}^{(t_1, t_2)}+\frac{\Gamma-1}{1+\Gamma}|\eta_{i,\mathrm{ACO}}^{(t_1, t_2)}|- \left(\frac{\Gamma-1}{\Gamma+1}\right)\{|\eta_{i,\mathrm{ACO}}^{(t_1, t_2)}| + \frac{\Gamma-1}{\Gamma+1}\bar\tau_{i,\mathrm{ACO}}^{(t_1, t_2)}\}\\
    &=\left[1- \left(\frac{\Gamma-1}{\Gamma+1}\right)^2\right]\bar\tau_{i,\mathrm{ACO}}^{(t_1, t_2)}.
\end{align*}
Averaging over all four treatment assignment cases of $(T_{i1},T_{i2})$,
\begin{align*}
\mathrm{E}[U^{\mathrm{ACO}}_{i,\Gamma}] &= \frac{1}{4}\sum_{t_{1} \in \{1,-1\}} \sum_{t_{2} \in \{1,-1\}} \mathrm{E}[U_{i,\Gamma,\mathrm{ACO}}^{(t_1, t_2)}]\\
&\leq \frac{1}{4}\left[1- \left(\frac{\Gamma-1}{\Gamma+1}\right)^2\right]\sum_{t_{1} \in \{1,-1\}} \sum_{t_{2} \in \{1,-1\}} \bar\tau_{i,\mathrm{ACO}}^{(t_1, t_2)}
\end{align*}

\noindent Summing over all strata:
\begin{align*}
    \mathrm{E}\left[\sum_{i=1}^I U^{\mathrm{ACO}}_{i,\Gamma}\right] &\leq \frac{1}{4}\left[1- \left(\frac{\Gamma-1}{\Gamma+1}\right)^2\right] \sum_{i=1}^I \sum_{t_{1} \in \{1,-1\}} \sum_{t_{2} \in \{1,-1\}} \bar\tau_{i,\mathrm{ACO}}^{(t_1, t_2)} 
    \\&= \frac{1}{4}\left[1- \left(\frac{\Gamma-1}{\Gamma+1}\right)^2\right]\sum_{i = 1}^I\sum_{j = 1}^2 \sum_{k = 1}^2 (y_{T^aijk} - y_{C^aijk}),
\end{align*}
where the last equality follows from the definition of $\bar\tau_{i,\mathrm{ACO}}^{(t_1, t_2)}$.  

\noindent Under the null hypothesis $H^{\mathrm{ACO}}_0: \kappa = \kappa_0$, we have $$\sum_{i = 1}^I \sum_{j = 1}^2 \sum_{k = 1}^2 (y_{T^aijk} - y_{C^aijk})=0,$$ which implies
$\mathrm{E}\left[\sum_{i=1}^I U^{\mathrm{ACO}}_{i,\Gamma}\right] \leq 0$.
Since $D^{\mathrm{ACO}}_{i, \Gamma}(\kappa_0)$ is stochastically dominated by $U^{\mathrm{ACO}}_{i,\Gamma}$ for all $i$,  it follows that $$\mathrm{E}[\overline{D}_{\Gamma, \mathrm{ACO}}(\kappa_0) ] \leq 0 \quad \text{under} ~H^{\mathrm{ACO}}_0.$$  

\noindent The remaining steps follow directly from the argument used in the proof of Theorem~5.
\end{proof}

\newpage

\section*{Supplemental Material B: Additional simulation studies}

\subsection*{B.1 Additional simulation results for switchers}
\begin{table}[H]
\centering
\caption{Simulation results over 1000 replicates for varying strata sizes ($I$), proportions of switchers ($p$), and mean treatment effects among switchers ($\mu$), under a uniform distribution of treatment effects among switchers: $\tau^{SW}_{ijk} \sim \text{Unif}[\mu - \sqrt{3}, \mu + \sqrt{3}]$. 
}
\resizebox{\textwidth}{!}{
\begin{tabular}{ccccccccccccccc}
\midrule
 \multirow{2}{*}{I}  & \multirow{2}{*}{p} & \multirow{2}{*}{$\mu$} & \multirow{2}{*}{level} & \multirow{2}{*}{power} & \multirow{2}{*}{$\bar{\iota}_a$} & \multirow{2}{*}{$\bar{\iota}_b$} & \multirow{2}{*}{$\bar{\lambda}$} & \multicolumn{2}{c}{CI}  & 100$\times$  & 100$\times$  & 100$\times$  &  100$\times$  \\ 
  \cline{9-10} 
  &  &  &  &  &  & & & length & coverage ($\%$)&  SD[T($\lambda$)] & $S(\lambda)$ & SD[T(0)] &  $S(0)$ \\ 
 \midrule
  \multicolumn{14}{c}{\textbf{$\tau^{SW}_{ijk} \sim \text{Unif}[\mu - \sqrt{3}, \mu + \sqrt{3}]$}} \\
\midrule
  100 & 0.30 & 0.00 & 0.06 & 0.06 & 0.14 & 0.44 & 0.00 & 4.88 & 94.37 & 6.11 & 6.15 & 6.11 & 6.15 \\ 
  100 & 0.30 & 0.25 & 0.06 & 0.07 & 0.14 & 0.44 & 0.25 & 4.88 & 94.49 & 6.09 & 6.14 & 6.10 & 6.15 \\ 
  100 & 0.30 & 0.50 & 0.06 & 0.09 & 0.14 & 0.44 & 0.50 & 4.81 & 94.08 & 6.15 & 6.18 & 6.14 & 6.18 \\ 
  100 & 0.30 & 1.00 & 0.06 & 0.23 & 0.14 & 0.44 & 1.00 & 5.01 & 94.37 & 6.40 & 6.42 & 6.26 & 6.31 \\ 
  100 & 0.50 & 0.00 & 0.05 & 0.05 & 0.10 & 0.60 & -0.00 & 2.20 & 94.60 & 6.08 & 6.13 & 6.08 & 6.13 \\ 
  100 & 0.50 & 0.25 & 0.05 & 0.08 & 0.10 & 0.60 & 0.25 & 2.21 & 94.20 & 6.08 & 6.13 & 6.09 & 6.15 \\ 
  100 & 0.50 & 0.50 & 0.05 & 0.16 & 0.10 & 0.60 & 0.50 & 2.22 & 94.90 & 6.10 & 6.15 & 6.12 & 6.16 \\ 
  100 & 0.50 & 1.00 & 0.04 & 0.49 & 0.10 & 0.60 & 1.00 & 2.32 & 94.90 & 6.31 & 6.37 & 6.30 & 6.40 \\ 
  100 & 0.70 & 0.00 & 0.06 & 0.06 & 0.06 & 0.76 & -0.00 & 1.44 & 93.90 & 6.04 & 6.12 & 6.05 & 6.13 \\ 
  100 & 0.70 & 0.25 & 0.06 & 0.12 & 0.06 & 0.76 & 0.25 & 1.43 & 93.90 & 6.02 & 6.10 & 6.03 & 6.12 \\ 
  100 & 0.70 & 0.50 & 0.05 & 0.31 & 0.06 & 0.76 & 0.50 & 1.44 & 94.60 & 6.06 & 6.13 & 6.09 & 6.16 \\ 
  100 & 0.70 & 1.00 & 0.05 & 0.79 & 0.06 & 0.76 & 1.00 & 1.49 & 93.80 & 6.22 & 6.32 & 6.31 & 6.44 \\ 
  500 & 0.30 & 0.00 & 0.04 & 0.04 & 0.14 & 0.44 & 0.00 & 1.55 & 96.20 & 2.73 & 2.75 & 2.73 & 2.75 \\ 
  500 & 0.30 & 0.25 & 0.07 & 0.11 & 0.14 & 0.44 & 0.25 & 1.53 & 93.40 & 2.73 & 2.75 & 2.73 & 2.75 \\ 
  500 & 0.30 & 0.50 & 0.05 & 0.28 & 0.14 & 0.44 & 0.50 & 1.56 & 94.30 & 2.75 & 2.76 & 2.75 & 2.76 \\ 
  500 & 0.30 & 1.00 & 0.05 & 0.74 & 0.14 & 0.44 & 1.00 & 1.63 & 95.30 & 2.86 & 2.88 & 2.80 & 2.84 \\ 
  500 & 0.50 & 0.00 & 0.04 & 0.04 & 0.10 & 0.60 & -0.00 & 0.87 & 95.80 & 2.72 & 2.75 & 2.72 & 2.75 \\ 
  500 & 0.50 & 0.25 & 0.04 & 0.22 & 0.10 & 0.60 & 0.25 & 0.87 & 95.40 & 2.72 & 2.75 & 2.72 & 2.75 \\ 
  500 & 0.50 & 0.50 & 0.06 & 0.61 & 0.10 & 0.60 & 0.50 & 0.87 & 94.30 & 2.73 & 2.76 & 2.74 & 2.76 \\ 
  500 & 0.50 & 1.00 & 0.04 & 0.99 & 0.10 & 0.60 & 1.00 & 0.90 & 95.70 & 2.83 & 2.85 & 2.82 & 2.86 \\ 
  500 & 0.70 & 0.00 & 0.05 & 0.05 & 0.06 & 0.76 & -0.00 & 0.61 & 94.50 & 2.70 & 2.74 & 2.70 & 2.74 \\ 
  500 & 0.70 & 0.25 & 0.04 & 0.34 & 0.06 & 0.76 & 0.25 & 0.61 & 95.00 & 2.70 & 2.74 & 2.71 & 2.75 \\ 
  500 & 0.70 & 0.50 & 0.04 & 0.87 & 0.06 & 0.76 & 0.50 & 0.61 & 95.30 & 2.71 & 2.75 & 2.73 & 2.77 \\ 
  500 & 0.70 & 1.00 & 0.04 & 1.00 & 0.06 & 0.76 & 1.00 & 0.63 & 94.90 & 2.78 & 2.82 & 2.82 & 2.87 \\ 
  1000 & 0.30 & 0.00 & 0.05 & 0.05 & 0.14 & 0.44 & -0.00 & 1.04 & 94.40 & 1.93 & 1.94 & 1.93 & 1.94 \\ 
  1000 & 0.30 & 0.25 & 0.05 & 0.15 & 0.14 & 0.44 & 0.25 & 1.04 & 94.40 & 1.93 & 1.95 & 1.94 & 1.95 \\ 
  1000 & 0.30 & 0.50 & 0.04 & 0.49 & 0.14 & 0.44 & 0.50 & 1.05 & 96.30 & 1.95 & 1.96 & 1.94 & 1.96 \\ 
  1000 & 0.30 & 1.00 & 0.05 & 0.97 & 0.14 & 0.44 & 1.00 & 1.10 & 94.80 & 2.03 & 2.04 & 1.98 & 2.00 \\ 
  1000 & 0.50 & 0.00 & 0.05 & 0.06 & 0.10 & 0.60 & -0.00 & 0.61 & 94.10 & 1.92 & 1.94 & 1.92 & 1.94 \\ 
  1000 & 0.50 & 0.25 & 0.06 & 0.36 & 0.10 & 0.60 & 0.25 & 0.61 & 94.10 & 1.92 & 1.94 & 1.92 & 1.94 \\ 
  1000 & 0.50 & 0.50 & 0.04 & 0.89 & 0.10 & 0.60 & 0.50 & 0.61 & 94.40 & 1.93 & 1.95 & 1.94 & 1.96 \\ 
  1000 & 0.50 & 1.00 & 0.03 & 1.00 & 0.10 & 0.60 & 1.00 & 0.63 & 95.90 & 2.00 & 2.02 & 2.00 & 2.03 \\ 
  1000 & 0.70 & 0.00 & 0.04 & 0.04 & 0.06 & 0.76 & 0.00 & 0.43 & 95.30 & 1.91 & 1.94 & 1.91 & 1.94 \\ 
  1000 & 0.70 & 0.25 & 0.04 & 0.60 & 0.06 & 0.76 & 0.25 & 0.43 & 95.20 & 1.91 & 1.94 & 1.92 & 1.94 \\ 
  1000 & 0.70 & 0.50 & 0.04 & 0.99 & 0.06 & 0.76 & 0.50 & 0.43 & 95.30 & 1.92 & 1.95 & 1.93 & 1.96 \\ 
  1000 & 0.70 & 1.00 & 0.05 & 1.00 & 0.06 & 0.76 & 1.00 & 0.44 & 94.40 & 1.97 & 1.99 & 2.00 & 2.03 \\ 
   \hline
\end{tabular}
}
\end{table}

\begin{table}[H]
\centering
\caption{Simulation results over 1000 replicates for varying strata sizes ($I$), proportions of switchers ($p$), and mean treatment effects among switchers ($\mu$), under an exponential distribution of treatment effects among switchers: $\tau^{SW}_{ijk} \sim \text{Exp}(1/\mu)$. 
}
\resizebox{\textwidth}{!}{
\begin{tabular}{ccccccccccccccc}
\midrule
 \multirow{2}{*}{I}  & \multirow{2}{*}{p} & \multirow{2}{*}{$\mu$} & \multirow{2}{*}{level} & \multirow{2}{*}{power} & \multirow{2}{*}{$\bar{\iota}_a$} & \multirow{2}{*}{$\bar{\iota}_b$} & \multirow{2}{*}{$\bar{\lambda}$} & \multicolumn{2}{c}{CI}  & 100$\times$  & 100$\times$  & 100$\times$  &  100$\times$  \\ 
  \cline{9-10} 
  &  &  &  &  &  & & & length & coverage ($\%$)&  SD[T($\lambda$)] & $S(\lambda)$ & SD[T(0)] &  $S(0)$ \\ 
 \midrule
\multicolumn{14}{c}{\textbf{$\tau^{SW}_{ijk} \sim \text{Exp}(1/\mu)$}} \\
\midrule
100 & 0.30 & 0.00 & 0.06 & 0.06 & 0.14 & 0.44 & 0.00 & 4.68 & 94.68 & 5.83 & 5.84 & 5.83 & 5.84 \\ 
  100 & 0.30 & 0.25 & 0.05 & 0.08 & 0.14 & 0.44 & 0.25 & 4.78 & 94.58 & 5.85 & 5.84 & 5.86 & 5.85 \\ 
  100 & 0.30 & 0.50 & 0.05 & 0.10 & 0.14 & 0.44 & 0.50 & 4.73 & 94.88 & 5.95 & 5.93 & 5.94 & 5.93 \\ 
  100 & 0.30 & 1.00 & 0.06 & 0.22 & 0.14 & 0.44 & 1.00 & 4.95 & 94.87 & 6.43 & 6.44 & 6.29 & 6.34 \\ 
  100 & 0.50 & 0.00 & 0.05 & 0.05 & 0.10 & 0.60 & 0.00 & 2.03 & 94.60 & 5.61 & 5.60 & 5.61 & 5.60 \\ 
  100 & 0.50 & 0.25 & 0.07 & 0.10 & 0.10 & 0.60 & 0.25 & 2.04 & 92.70 & 5.64 & 5.64 & 5.65 & 5.65 \\ 
  100 & 0.50 & 0.50 & 0.06 & 0.20 & 0.10 & 0.60 & 0.50 & 2.11 & 93.60 & 5.76 & 5.76 & 5.77 & 5.78 \\ 
  100 & 0.50 & 1.00 & 0.05 & 0.52 & 0.10 & 0.60 & 1.00 & 2.30 & 95.10 & 6.30 & 6.35 & 6.28 & 6.37 \\ 
  100 & 0.70 & 0.00 & 0.07 & 0.07 & 0.06 & 0.76 & 0.00 & 1.27 & 93.50 & 5.38 & 5.37 & 5.38 & 5.37 \\ 
  100 & 0.70 & 0.25 & 0.06 & 0.12 & 0.06 & 0.76 & 0.25 & 1.28 & 93.80 & 5.41 & 5.40 & 5.43 & 5.42 \\ 
  100 & 0.70 & 0.50 & 0.06 & 0.35 & 0.06 & 0.76 & 0.50 & 1.31 & 94.10 & 5.57 & 5.58 & 5.61 & 5.62 \\ 
  100 & 0.70 & 1.00 & 0.05 & 0.78 & 0.06 & 0.76 & 1.00 & 1.48 & 95.10 & 6.22 & 6.29 & 6.31 & 6.40 \\ 
  500 & 0.30 & 0.00 & 0.05 & 0.05 & 0.14 & 0.44 & 0.00 & 1.48 & 95.40 & 2.62 & 2.62 & 2.62 & 2.62 \\ 
  500 & 0.30 & 0.25 & 0.05 & 0.12 & 0.14 & 0.44 & 0.25 & 1.47 & 94.90 & 2.61 & 2.61 & 2.62 & 2.62 \\ 
  500 & 0.30 & 0.50 & 0.03 & 0.28 & 0.14 & 0.44 & 0.50 & 1.48 & 96.50 & 2.66 & 2.66 & 2.65 & 2.66 \\ 
  500 & 0.30 & 1.00 & 0.05 & 0.76 & 0.14 & 0.44 & 1.00 & 1.62 & 93.80 & 2.86 & 2.88 & 2.81 & 2.83 \\ 
  500 & 0.50 & 0.00 & 0.06 & 0.06 & 0.10 & 0.60 & 0.00 & 0.79 & 93.90 & 2.51 & 2.51 & 2.51 & 2.51 \\ 
  500 & 0.50 & 0.25 & 0.04 & 0.23 & 0.10 & 0.60 & 0.25 & 0.80 & 95.80 & 2.52 & 2.52 & 2.53 & 2.53 \\ 
  500 & 0.50 & 0.50 & 0.05 & 0.69 & 0.10 & 0.60 & 0.50 & 0.82 & 95.10 & 2.58 & 2.58 & 2.59 & 2.59 \\ 
  500 & 0.50 & 1.00 & 0.04 & 0.99 & 0.10 & 0.60 & 1.00 & 0.91 & 95.90 & 2.83 & 2.85 & 2.82 & 2.86 \\ 
  500 & 0.70 & 0.00 & 0.04 & 0.04 & 0.06 & 0.76 & 0.00 & 0.54 & 95.90 & 2.41 & 2.40 & 2.41 & 2.40 \\ 
  500 & 0.70 & 0.25 & 0.04 & 0.45 & 0.06 & 0.76 & 0.25 & 0.54 & 94.90 & 2.42 & 2.42 & 2.43 & 2.43 \\ 
  500 & 0.70 & 0.50 & 0.05 & 0.94 & 0.06 & 0.76 & 0.50 & 0.56 & 94.90 & 2.49 & 2.50 & 2.51 & 2.52 \\ 
  500 & 0.70 & 1.00 & 0.05 & 1.00 & 0.06 & 0.76 & 1.00 & 0.63 & 94.40 & 2.78 & 2.82 & 2.82 & 2.87 \\ 
  1000 & 0.30 & 0.00 & 0.06 & 0.06 & 0.14 & 0.44 & 0.00 & 0.99 & 94.40 & 1.85 & 1.85 & 1.85 & 1.85 \\ 
  1000 & 0.30 & 0.25 & 0.06 & 0.19 & 0.14 & 0.44 & 0.25 & 0.99 & 94.20 & 1.85 & 1.85 & 1.85 & 1.85 \\ 
  1000 & 0.30 & 0.50 & 0.06 & 0.50 & 0.14 & 0.44 & 0.50 & 1.01 & 93.80 & 1.88 & 1.88 & 1.88 & 1.88 \\ 
  1000 & 0.30 & 1.00 & 0.05 & 0.97 & 0.14 & 0.44 & 1.00 & 1.09 & 94.90 & 2.03 & 2.04 & 1.98 & 2.00 \\ 
  1000 & 0.50 & 0.00 & 0.05 & 0.05 & 0.10 & 0.60 & 0.00 & 0.55 & 95.20 & 1.78 & 1.77 & 1.78 & 1.77 \\ 
  1000 & 0.50 & 0.25 & 0.06 & 0.43 & 0.10 & 0.60 & 0.25 & 0.55 & 93.90 & 1.78 & 1.78 & 1.79 & 1.79 \\ 
  1000 & 0.50 & 0.50 & 0.06 & 0.93 & 0.10 & 0.60 & 0.50 & 0.57 & 93.30 & 1.82 & 1.83 & 1.83 & 1.84 \\ 
  1000 & 0.50 & 1.00 & 0.05 & 1.00 & 0.10 & 0.60 & 1.00 & 0.63 & 94.70 & 2.00 & 2.02 & 1.99 & 2.03 \\ 
  1000 & 0.70 & 0.00 & 0.05 & 0.05 & 0.06 & 0.76 & 0.00 & 0.37 & 95.20 & 1.70 & 1.70 & 1.70 & 1.70 \\ 
  1000 & 0.70 & 0.25 & 0.05 & 0.71 & 0.06 & 0.76 & 0.25 & 0.38 & 94.60 & 1.71 & 1.72 & 1.72 & 1.72 \\ 
  1000 & 0.70 & 0.50 & 0.03 & 1.00 & 0.06 & 0.76 & 0.50 & 0.39 & 96.00 & 1.76 & 1.77 & 1.78 & 1.79 \\ 
  1000 & 0.70 & 1.00 & 0.06 & 1.00 & 0.06 & 0.76 & 1.00 & 0.44 & 93.20 & 1.96 & 1.99 & 1.99 & 2.03 \\ 
   \hline
\end{tabular}}
\end{table}

\begin{table}[H]
\centering
\caption{
Simulation results over 1,000 replicates varying stratum size ($I$) and constant treatment effect among switchers ($\mu$), with exactly two switchers per stratum.
}
\resizebox{\textwidth}{!}{
\begin{tabular}{cccccccccccccc}
\midrule
 \multirow{2}{*}{I}  & \multirow{2}{*}{$\mu$} & \multirow{2}{*}{level} & \multirow{2}{*}{power} & \multirow{2}{*}{$\bar{\iota}_a$} & \multirow{2}{*}{$\bar{\iota}_b$} & \multirow{2}{*}{$\bar{\lambda}$} & \multicolumn{2}{c}{CI}  & 100$\times$  & 100$\times$  & 100$\times$  &  100$\times$  \\ 
  \cline{8-9} 
  &  &  &  &  & & & length & coverage ($\%$)&  SD[T($\lambda$)] & $S(\lambda)$ & SD[T(0)] &  $S(0)$ \\ 
\midrule
100 & 0.00 & 0.05 & 0.05 & 0.10 & 0.60 & 0.00 & 2.07 & 94.90 & 5.61 & 5.61 & 5.61 & 5.61 \\ 
  100 & 0.25 & 0.06 & 0.10 & 0.10 & 0.60 & 0.25 & 2.05 & 94.00 & 5.59 & 5.58 & 5.61 & 5.60 \\ 
  100 & 0.50 & 0.06 & 0.20 & 0.10 & 0.60 & 0.50 & 2.09 & 93.70 & 5.66 & 5.66 & 5.69 & 5.69 \\ 
  100 & 1.00 & 0.07 & 0.57 & 0.10 & 0.60 & 1.00 & 2.13 & 93.10 & 5.90 & 5.89 & 5.93 & 5.91 \\ 
  500 & 0.00 & 0.05 & 0.05 & 0.10 & 0.60 & 0.00 & 0.80 & 94.80 & 2.51 & 2.52 & 2.51 & 2.52 \\ 
  500 & 0.25 & 0.05 & 0.26 & 0.10 & 0.60 & 0.25 & 0.80 & 94.90 & 2.51 & 2.51 & 2.52 & 2.51 \\ 
  500 & 0.50 & 0.04 & 0.68 & 0.10 & 0.60 & 0.50 & 0.81 & 95.60 & 2.53 & 2.53 & 2.54 & 2.54 \\ 
  500 & 1.00 & 0.04 & 1.00 & 0.10 & 0.60 & 1.00 & 0.84 & 95.80 & 2.64 & 2.64 & 2.66 & 2.66 \\ 
  1000 & 0.00 & 0.04 & 0.04 & 0.10 & 0.60 & 0.00 & 0.56 & 96.00 & 1.78 & 1.78 & 1.78 & 1.78 \\ 
  1000 & 0.25 & 0.05 & 0.42 & 0.10 & 0.60 & 0.25 & 0.55 & 94.80 & 1.77 & 1.77 & 1.78 & 1.78 \\ 
  1000 & 0.50 & 0.05 & 0.94 & 0.10 & 0.60 & 0.50 & 0.56 & 94.80 & 1.79 & 1.79 & 1.80 & 1.80 \\ 
  1000 & 1.00 & 0.05 & 1.00 & 0.10 & 0.60 & 1.00 & 0.58 & 94.60 & 1.87 & 1.87 & 1.88 & 1.88 \\
  \midrule
\end{tabular}
}
\end{table}

\subsection*{B.2  Simulation details for always-compliers}

\subsubsection*{B.2.1 Data-generating process}
\label{subsubsec: simu rand DGP ACO}
We generate data for a PoP-NIV design consisting of $I$ strata. As in the switcher-focused DGP, we vary the following factors:

\begin{description}
    \item[Factor 1:] Number of matched strata $I$: $100$, $500$, and $1000$.
\end{description}

For each unit $k$ in matched pair $j$ within stratum $i \in [I]$, its principal stratum membership $S_{ijk}$ is sampled from a multinomial distribution with the following probabilities:
\begin{equation*}
\begin{split}
    &P(S_{ijk} = \text{ACO}) = p; \\
    &P(S_{ijk} \in \{\text{SW, AT-NT, AAT, NT-AT, ANT}\}) = (1-p)/5.
\end{split}
\end{equation*}

\begin{description}
    \item[Factor 2:] Proportion of always-compliers $p:$ $0.3$, $0.5$, and $0.7$.
\end{description}

We generate potential outcomes $(r_{d=0, ijk}, r_{d=1, ijk})$ for each unit as:
$$r_{d=0, ijk} \sim \mathcal{N}(0,1), \quad r_{d=1, ijk} = r_{d=0, ijk} + \tau_{ijk},$$
where the treatment effect $\tau_{ijk}$ depends on the unit’s principal stratum. Specifically, 
\begin{itemize}
\item $\tau_{ijk} \sim \mathcal{N}(0.5,1)$ for switcher strata,
    \item $\tau_{ijk} \sim \mathcal{N}(0.1,1)$ for othrer non-ACO strata (AT-NT, AAT, NT-AT, ANT),
    \item $\tau_{ijk}$ is drawn from one of the following distributions for always-compliers:
\end{itemize}

\begin{description}
    \item[Factor 3:] Treatment effect among always-compliers, $\tau^{ACO}_{ijk}$ follows one of the three distributions:
    \begin{itemize}
        \item[(i)] $\tau^{ACO}_{ijk} \sim \text{Unif}~[\mu - \sqrt{3}, \mu + \sqrt{3}]$,
        \item[(ii)] $\tau^{ACO}_{ijk} \sim \mathcal{N}(\mu, 1)$,
        \item[(iii)] $\tau^{ACO}_{ijk} \sim \text{Exp}(1/\mu)$.
    \end{itemize}
\end{description}

Here, $\mu$ controls the magnitude of the treatment effect among always-compliers, varied as:
\begin{description}
    \item[Factor 4:] Effect size among always-compliers $\mu$: $0$, $0.25$, $0.5$, and $1$.
\end{description}

After generating all potential outcomes, including treatment assignments $(d_{T^a ijk}, d_{C^a ijk}, d_{T^b ijk}, d_{C^b ijk})$ based on $S_{ijk}$, we generate the IV assignment vector $\boldsymbol{Z_i}$ for each stratum independently, according to the randomization scheme described in Assumption~1. Finally, we determine observed treatment and outcomes:
\begin{equation*}
\begin{split}
    &D_{ijk} = d_{C^aijk}\mathbbm{I}\{{Z}_{ijk}=0_a\} + d_{T^aijk}\mathbbm{I}\{{Z}_{ijk}=1_a\}+d_{C^bijk}\mathbbm{I}\{{Z}_{ijk}=0_b\} + d_{T^bijk}\mathbbm{I}\{{Z}_{ijk}=1_b\}, \\
    &R_{ijk} = r_{d=1, ijk}\mathbbm{I}\{D_{ijk}=1\} + r_{d=0, ijk}\mathbbm{I}\{D_{ijk}=0\}.
\end{split}
\end{equation*}

In total, Factors 1--4 define $108$ data-generating processes. For each DGP, we generated $1000$ datasets and recorded the true sample average treatment effect $\kappa_{\text{true}}$ among always-compliers in each dataset. We then evaluated the level and power of the randomization-based test from Theorem~1 by testing: (i) $H_0: \kappa = \kappa_{\text{true}}$ and (ii) $H_0: \kappa = 0$. Confidence intervals for $\kappa_{\text{true}}$ were obtained by inverting these tests.

\subsubsection*{B.2.2 Simulation results}
\begin{table}[H]
\centering
\caption{Simulation results over 1000 replicates for varying strata sizes ($I$), proportions of always-compliers ($p$), and mean treatment effects among always-compliers ($\mu$), under a normal distribution of treatment effects among always-compliers: $\tau^{ACO}_{ijk} \sim \mathcal{N}(\mu, 1)$. 
}
\begin{tabular}{ccccccccccccccc}
\midrule
\multirow{2}{*}{I}  & \multirow{2}{*}{p} & \multirow{2}{*}{$\mu$} & \multirow{2}{*}{level} & \multirow{2}{*}{power} & \multirow{2}{*}{$\bar{\iota}_a$} & \multirow{2}{*}{$\bar{\iota}_b$} & \multirow{2}{*}{$\bar{\kappa}$} & \multicolumn{2}{c}{CI}   \\ 
  &  &  &  &  &  & & & length & coverage ($\%$)\\ 
 \midrule
  \multicolumn{10}{c}{\textbf{$\tau^{ACO}_{ijk} \sim \mathcal{N}(\mu, 1)$}} \\
\midrule
100 & 0.30 & 0.00 & 0.04 & 0.04 & 0.30 & 0.44 & -0.00 & 2.94 & 95.40 \\ 
  100 & 0.30 & 0.25 & 0.05 & 0.08 & 0.30 & 0.44 & 0.25 & 2.84 & 94.60 \\ 
  100 & 0.30 & 0.50 & 0.05 & 0.15 & 0.30 & 0.44 & 0.50 & 2.90 & 94.20 \\ 
  100 & 0.30 & 1.00 & 0.05 & 0.38 & 0.30 & 0.44 & 1.00 & 3.06 & 94.80 \\ 
  100 & 0.50 & 0.00 & 0.05 & 0.05 & 0.50 & 0.60 & 0.00 & 1.43 & 94.60 \\ 
  100 & 0.50 & 0.25 & 0.04 & 0.12 & 0.50 & 0.60 & 0.25 & 1.43 & 96.20 \\ 
  100 & 0.50 & 0.50 & 0.06 & 0.29 & 0.50 & 0.60 & 0.50 & 1.45 & 94.00 \\ 
  100 & 0.50 & 1.00 & 0.05 & 0.79 & 0.50 & 0.60 & 1.00 & 1.49 & 94.90 \\ 
  100 & 0.70 & 0.00 & 0.04 & 0.04 & 0.70 & 0.76 & 0.00 & 0.98 & 95.70 \\ 
  100 & 0.70 & 0.25 & 0.04 & 0.18 & 0.70 & 0.76 & 0.25 & 0.98 & 95.00 \\ 
  100 & 0.70 & 0.50 & 0.05 & 0.53 & 0.70 & 0.76 & 0.50 & 0.98 & 94.60 \\ 
  100 & 0.70 & 1.00 & 0.05 & 0.98 & 0.70 & 0.76 & 1.00 & 1.01 & 95.40 \\ 
  500 & 0.30 & 0.00 & 0.04 & 0.05 & 0.30 & 0.44 & 0.00 & 1.05 & 95.20 \\ 
  500 & 0.30 & 0.25 & 0.05 & 0.18 & 0.30 & 0.44 & 0.25 & 1.04 & 93.70 \\ 
  500 & 0.30 & 0.50 & 0.05 & 0.48 & 0.30 & 0.44 & 0.50 & 1.05 & 94.50 \\ 
  500 & 0.30 & 1.00 & 0.05 & 0.96 & 0.30 & 0.44 & 1.00 & 1.10 & 94.60 \\ 
  500 & 0.50 & 0.00 & 0.04 & 0.04 & 0.50 & 0.60 & -0.00 & 0.61 & 95.40 \\ 
  500 & 0.50 & 0.25 & 0.03 & 0.40 & 0.50 & 0.60 & 0.25 & 0.60 & 95.60 \\ 
  500 & 0.50 & 0.50 & 0.04 & 0.88 & 0.50 & 0.60 & 0.50 & 0.61 & 94.90 \\ 
  500 & 0.50 & 1.00 & 0.05 & 1.00 & 0.50 & 0.60 & 1.00 & 0.63 & 94.00 \\ 
  500 & 0.70 & 0.00 & 0.04 & 0.04 & 0.70 & 0.76 & 0.00 & 0.43 & 95.30 \\ 
  500 & 0.70 & 0.25 & 0.05 & 0.62 & 0.70 & 0.76 & 0.25 & 0.43 & 94.10 \\ 
  500 & 0.70 & 0.50 & 0.05 & 1.00 & 0.70 & 0.76 & 0.50 & 0.43 & 93.60 \\ 
  500 & 0.70 & 1.00 & 0.05 & 1.00 & 0.70 & 0.76 & 1.00 & 0.44 & 93.40 \\ 
  1000 & 0.30 & 0.00 & 0.05 & 0.05 & 0.30 & 0.44 & 0.00 & 0.72 & 94.30 \\ 
  1000 & 0.30 & 0.25 & 0.05 & 0.28 & 0.30 & 0.44 & 0.25 & 0.72 & 94.00 \\ 
  1000 & 0.30 & 0.50 & 0.05 & 0.77 & 0.30 & 0.44 & 0.50 & 0.73 & 94.90 \\ 
  1000 & 0.30 & 1.00 & 0.05 & 1.00 & 0.30 & 0.44 & 1.00 & 0.76 & 94.30 \\ 
  1000 & 0.50 & 0.00 & 0.05 & 0.06 & 0.50 & 0.60 & -0.00 & 0.42 & 93.50 \\ 
  1000 & 0.50 & 0.25 & 0.04 & 0.63 & 0.50 & 0.60 & 0.25 & 0.42 & 94.00 \\ 
  1000 & 0.50 & 0.50 & 0.04 & 0.99 & 0.50 & 0.60 & 0.50 & 0.42 & 94.60 \\ 
  1000 & 0.50 & 1.00 & 0.05 & 1.00 & 0.50 & 0.60 & 1.00 & 0.44 & 94.90 \\ 
  1000 & 0.70 & 0.00 & 0.03 & 0.04 & 0.70 & 0.76 & 0.00 & 0.30 & 95.60 \\ 
  1000 & 0.70 & 0.25 & 0.05 & 0.87 & 0.70 & 0.76 & 0.25 & 0.30 & 94.30 \\ 
  1000 & 0.70 & 0.50 & 0.04 & 1.00 & 0.70 & 0.76 & 0.50 & 0.30 & 95.40 \\ 
  1000 & 0.70 & 1.00 & 0.04 & 1.00 & 0.70 & 0.76 & 1.00 & 0.31 & 94.90 \\ 
 \hline
\end{tabular}
\end{table}

\begin{table}[H]
\centering
\caption{Simulation results over 1000 replicates for varying strata sizes ($I$), proportions of always-compliers ($p$), and mean treatment effects among always-compliers ($\mu$), under a uniform distribution of treatment effects among always-compliers: $\tau^{ACO}_{ijk} \sim \text{Unif}[\mu - \sqrt{3}, \mu + \sqrt{3}]$. 
}
\begin{tabular}{ccccccccccccccc}
\midrule
\multirow{2}{*}{I}  & \multirow{2}{*}{p} & \multirow{2}{*}{$\mu$} & \multirow{2}{*}{level} & \multirow{2}{*}{power} & \multirow{2}{*}{$\bar{\iota}_a$} & \multirow{2}{*}{$\bar{\iota}_b$} & \multirow{2}{*}{$\bar{\kappa}$} & \multicolumn{2}{c}{CI}   \\ 
  &  &  &  &  &  & & & length & coverage ($\%$)\\ 
 \midrule
  \multicolumn{10}{c}{\textbf{$\tau^{ACO}_{ijk} \sim \text{Unif}[\mu - \sqrt{3}, \mu + \sqrt{3}]$}} \\
\midrule
100 & 0.30 & 0.00 & 0.06 & 0.06 & 0.30 & 0.44 & -0.00 & 3.00 & 93.50 \\ 
  100 & 0.30 & 0.25 & 0.05 & 0.07 & 0.30 & 0.44 & 0.25 & 2.98 & 94.10 \\ 
  100 & 0.30 & 0.50 & 0.05 & 0.13 & 0.30 & 0.44 & 0.50 & 2.89 & 94.70 \\ 
  100 & 0.30 & 1.00 & 0.05 & 0.40 & 0.30 & 0.44 & 1.00 & 3.07 & 94.50 \\ 
  100 & 0.50 & 0.00 & 0.05 & 0.05 & 0.50 & 0.60 & 0.00 & 1.43 & 94.70 \\ 
  100 & 0.50 & 0.25 & 0.05 & 0.12 & 0.50 & 0.60 & 0.25 & 1.43 & 94.50 \\ 
  100 & 0.50 & 0.50 & 0.06 & 0.33 & 0.50 & 0.60 & 0.50 & 1.44 & 94.80 \\ 
  100 & 0.50 & 1.00 & 0.04 & 0.79 & 0.50 & 0.60 & 1.00 & 1.50 & 93.90 \\ 
  100 & 0.70 & 0.00 & 0.04 & 0.05 & 0.70 & 0.76 & -0.00 & 0.98 & 95.10 \\ 
  100 & 0.70 & 0.25 & 0.05 & 0.15 & 0.70 & 0.76 & 0.25 & 0.98 & 94.30 \\ 
  100 & 0.70 & 0.50 & 0.04 & 0.55 & 0.70 & 0.76 & 0.50 & 0.98 & 93.70 \\ 
  100 & 0.70 & 1.00 & 0.05 & 0.97 & 0.70 & 0.76 & 1.00 & 1.01 & 95.00 \\ 
  500 & 0.30 & 0.00 & 0.06 & 0.05 & 0.30 & 0.44 & -0.00 & 1.04 & 94.50 \\ 
  500 & 0.30 & 0.25 & 0.05 & 0.15 & 0.30 & 0.44 & 0.25 & 1.04 & 94.10 \\ 
  500 & 0.30 & 0.50 & 0.05 & 0.48 & 0.30 & 0.44 & 0.50 & 1.06 & 94.80 \\ 
  500 & 0.30 & 1.00 & 0.06 & 0.97 & 0.30 & 0.44 & 1.00 & 1.10 & 94.40 \\ 
  500 & 0.50 & 0.00 & 0.04 & 0.04 & 0.50 & 0.60 & -0.00 & 0.60 & 95.20 \\ 
  500 & 0.50 & 0.25 & 0.04 & 0.37 & 0.50 & 0.60 & 0.25 & 0.60 & 94.60 \\ 
  500 & 0.50 & 0.50 & 0.03 & 0.91 & 0.50 & 0.60 & 0.50 & 0.61 & 95.60 \\ 
  500 & 0.50 & 1.00 & 0.05 & 1.00 & 0.50 & 0.60 & 1.00 & 0.63 & 93.20 \\ 
  500 & 0.70 & 0.00 & 0.04 & 0.06 & 0.70 & 0.76 & 0.00 & 0.43 & 94.00 \\ 
  500 & 0.70 & 0.25 & 0.04 & 0.60 & 0.70 & 0.76 & 0.25 & 0.43 & 95.20 \\ 
  500 & 0.70 & 0.50 & 0.05 & 0.99 & 0.70 & 0.76 & 0.50 & 0.43 & 94.00 \\ 
  500 & 0.70 & 1.00 & 0.05 & 1.00 & 0.70 & 0.76 & 1.00 & 0.44 & 93.60 \\ 
  1000 & 0.30 & 0.00 & 0.06 & 0.06 & 0.30 & 0.44 & -0.00 & 0.72 & 93.90 \\ 
  1000 & 0.30 & 0.25 & 0.04 & 0.26 & 0.30 & 0.44 & 0.25 & 0.72 & 95.10 \\ 
  1000 & 0.30 & 0.50 & 0.05 & 0.79 & 0.30 & 0.44 & 0.50 & 0.73 & 94.30 \\ 
  1000 & 0.30 & 1.00 & 0.06 & 1.00 & 0.30 & 0.44 & 1.00 & 0.76 & 94.10 \\ 
  1000 & 0.50 & 0.00 & 0.05 & 0.06 & 0.50 & 0.60 & -0.00 & 0.42 & 94.10 \\ 
  1000 & 0.50 & 0.25 & 0.04 & 0.62 & 0.50 & 0.60 & 0.25 & 0.42 & 94.50 \\ 
  1000 & 0.50 & 0.50 & 0.04 & 0.99 & 0.50 & 0.60 & 0.50 & 0.42 & 93.90 \\ 
  1000 & 0.50 & 1.00 & 0.05 & 1.00 & 0.50 & 0.60 & 1.00 & 0.44 & 93.80 \\ 
  1000 & 0.70 & 0.00 & 0.04 & 0.05 & 0.70 & 0.76 & 0.00 & 0.30 & 94.30 \\ 
  1000 & 0.70 & 0.25 & 0.04 & 0.89 & 0.70 & 0.76 & 0.25 & 0.30 & 94.30 \\ 
  1000 & 0.70 & 0.50 & 0.04 & 1.00 & 0.70 & 0.76 & 0.50 & 0.30 & 94.50 \\ 
  1000 & 0.70 & 1.00 & 0.05 & 1.00 & 0.70 & 0.76 & 1.00 & 0.31 & 94.10 \\ 
 \hline
\end{tabular}
\end{table}

\begin{table}[H]
\centering
\caption{Simulation results over 1000 replicates for varying strata sizes ($I$), proportions of always-compliers ($p$), and mean treatment effects among always-compliers ($\mu$), under an exponential distribution of treatment effects among always-compliers: $\tau^{ACO}_{ijk} \sim \text{Exp}(1/\mu)$.
}
\begin{tabular}{ccccccccccccccc}
\midrule
\multirow{2}{*}{I}  & \multirow{2}{*}{p} & \multirow{2}{*}{$\mu$} & \multirow{2}{*}{level} & \multirow{2}{*}{power} & \multirow{2}{*}{$\bar{\iota}_a$} & \multirow{2}{*}{$\bar{\iota}_b$} & \multirow{2}{*}{$\bar{\kappa}$} & \multicolumn{2}{c}{CI}   \\ 
  &  &  &  &  &  & & & length & coverage ($\%$)\\ 
 \midrule
  \multicolumn{10}{c}{\textbf{$\tau^{ACO}_{ijk} \sim \text{Exp}(1/\mu)$}} \\
\midrule
100 & 0.30 & 0.00 & 0.06 & 0.06 & 0.30 & 0.44 & 0.00 & 2.78 & 94.10 \\ 
  100 & 0.30 & 0.25 & 0.05 & 0.08 & 0.30 & 0.44 & 0.25 & 2.84 & 94.69 \\ 
  100 & 0.30 & 0.50 & 0.06 & 0.16 & 0.30 & 0.44 & 0.50 & 2.97 & 94.10 \\ 
  100 & 0.30 & 1.00 & 0.05 & 0.40 & 0.30 & 0.44 & 1.00 & 3.08 & 94.80 \\ 
  100 & 0.50 & 0.00 & 0.05 & 0.05 & 0.50 & 0.60 & 0.00 & 1.31 & 95.20 \\ 
  100 & 0.50 & 0.25 & 0.05 & 0.12 & 0.50 & 0.60 & 0.25 & 1.32 & 94.80 \\ 
  100 & 0.50 & 0.50 & 0.06 & 0.34 & 0.50 & 0.60 & 0.50 & 1.34 & 94.10 \\ 
  100 & 0.50 & 1.00 & 0.05 & 0.81 & 0.50 & 0.60 & 1.00 & 1.48 & 93.80 \\ 
  100 & 0.70 & 0.00 & 0.04 & 0.04 & 0.70 & 0.76 & 0.00 & 0.86 & 96.00 \\ 
  100 & 0.70 & 0.25 & 0.05 & 0.20 & 0.70 & 0.76 & 0.25 & 0.86 & 95.20 \\ 
  100 & 0.70 & 0.50 & 0.05 & 0.62 & 0.70 & 0.76 & 0.50 & 0.89 & 94.10 \\ 
  100 & 0.70 & 1.00 & 0.05 & 0.98 & 0.70 & 0.76 & 1.00 & 1.00 & 94.20 \\ 
  500 & 0.30 & 0.00 & 0.04 & 0.04 & 0.30 & 0.44 & 0.00 & 0.99 & 96.50 \\ 
  500 & 0.30 & 0.25 & 0.04 & 0.17 & 0.30 & 0.44 & 0.25 & 0.99 & 95.40 \\ 
  500 & 0.30 & 0.50 & 0.05 & 0.51 & 0.30 & 0.44 & 0.50 & 1.01 & 94.80 \\ 
  500 & 0.30 & 1.00 & 0.04 & 0.97 & 0.30 & 0.44 & 1.00 & 1.09 & 94.70 \\ 
  500 & 0.50 & 0.00 & 0.05 & 0.05 & 0.50 & 0.60 & 0.00 & 0.55 & 94.80 \\ 
  500 & 0.50 & 0.25 & 0.05 & 0.43 & 0.50 & 0.60 & 0.25 & 0.56 & 94.30 \\ 
  500 & 0.50 & 0.50 & 0.05 & 0.93 & 0.50 & 0.60 & 0.50 & 0.57 & 95.00 \\ 
  500 & 0.50 & 1.00 & 0.05 & 1.00 & 0.50 & 0.60 & 1.00 & 0.63 & 95.00 \\ 
  500 & 0.70 & 0.00 & 0.06 & 0.06 & 0.70 & 0.76 & 0.00 & 0.37 & 94.20 \\ 
  500 & 0.70 & 0.25 & 0.05 & 0.74 & 0.70 & 0.76 & 0.25 & 0.37 & 94.90 \\ 
  500 & 0.70 & 0.50 & 0.05 & 1.00 & 0.70 & 0.76 & 0.50 & 0.39 & 94.80 \\ 
  500 & 0.70 & 1.00 & 0.05 & 1.00 & 0.70 & 0.76 & 1.00 & 0.44 & 94.40 \\ 
  1000 & 0.30 & 0.00 & 0.04 & 0.04 & 0.30 & 0.44 & 0.00 & 0.69 & 95.80 \\ 
  1000 & 0.30 & 0.25 & 0.05 & 0.28 & 0.30 & 0.44 & 0.25 & 0.69 & 94.00 \\ 
  1000 & 0.30 & 0.50 & 0.05 & 0.80 & 0.30 & 0.44 & 0.50 & 0.70 & 95.20 \\ 
  1000 & 0.30 & 1.00 & 0.04 & 1.00 & 0.30 & 0.44 & 1.00 & 0.76 & 95.80 \\ 
  1000 & 0.50 & 0.00 & 0.04 & 0.04 & 0.50 & 0.60 & 0.00 & 0.39 & 95.50 \\ 
  1000 & 0.50 & 0.25 & 0.06 & 0.69 & 0.50 & 0.60 & 0.25 & 0.39 & 93.80 \\ 
  1000 & 0.50 & 0.50 & 0.05 & 1.00 & 0.50 & 0.60 & 0.50 & 0.40 & 94.10 \\ 
  1000 & 0.50 & 1.00 & 0.04 & 1.00 & 0.50 & 0.60 & 1.00 & 0.44 & 95.30 \\ 
  1000 & 0.70 & 0.00 & 0.04 & 0.04 & 0.70 & 0.76 & 0.00 & 0.26 & 95.70 \\ 
  1000 & 0.70 & 0.25 & 0.06 & 0.94 & 0.70 & 0.76 & 0.25 & 0.26 & 93.40 \\ 
  1000 & 0.70 & 0.50 & 0.04 & 1.00 & 0.70 & 0.76 & 0.50 & 0.27 & 95.50 \\ 
  1000 & 0.70 & 1.00 & 0.05 & 1.00 & 0.70 & 0.76 & 1.00 & 0.31 & 93.50 \\ 
 \hline
\end{tabular}
\end{table}

\clearpage
\subsection*{B.3 Simulation results for biased randomization inference}

\begin{table}[H]
\centering
\begin{tabular}{cccccccc}
  \toprule
& &   \multicolumn{4}{c}{Panel A} & \multicolumn{2}{c}{Panel B}\\ \cmidrule(lr){3-6}
$I$ & $\Gamma$ & \multicolumn{2}{c}{$\pi_{i} 
\sim \text{Unif} \left[\frac{1}{2}, \frac{\Gamma}{\Gamma + 1} \right]$} & \multicolumn{2}{c}{$\pi_i = \frac{\Gamma}{\Gamma + 1}$} & \multicolumn{2}{c}{$\pi_i = \frac{\Gamma}{\Gamma + 1}$} \\  \cmidrule(lr){3-4} \cmidrule(lr){5-6} \cmidrule(lr){7-8} 
    &          & Theorem 5 & Theorem 2 & Theorem 5 & Theorem 2 & Theorem 5 & Theorem 2\\ 
  \hline
100 & 1.1 & 0.016 & 0.048 & 0.020 & 0.072 & 0.021 & 0.080 \\ 
  100 & 1.2 & 0.011 & 0.043 & 0.024 & 0.118 & 0.027 & 0.142 \\ 
  100 & 1.3 & 0.005 & 0.104 & 0.012 & 0.181 & 0.033 & 0.247 \\ 
  100 & 1.4 & 0.002 & 0.106 & 0.007 & 0.261 & 0.024 & 0.348 \\ 
  100 & 1.5 & 0.003 & 0.107 & 0.004 & 0.356 & 0.028 & 0.479 \\ 
  100 & 1.6 & 0.001 & 0.148 & 0.004 & 0.437 & 0.038 & 0.584 \\ 
  100 & 1.7 & 0.000 & 0.159 & 0.002 & 0.529 & 0.032 & 0.672 \\ 
  100 & 1.8 & 0.000 & 0.185 & 0.003 & 0.605 & 0.023 & 0.748 \\ 
  100 & 1.9 & 0.000 & 0.206 & 0.000 & 0.686 & 0.030 & 0.860 \\ 
  100 & 2.0 & 0.000 & 0.257 & 0.001 & 0.753 & 0.030 & 0.903 \\ 
  500 & 1.1 & 0.002 & 0.065 & 0.004 & 0.142 & 0.025 & 0.148 \\ 
  500 & 1.2 & 0.000 & 0.129 & 0.000 & 0.340 & 0.022 & 0.465 \\ 
  500 & 1.3 & 0.001 & 0.214 & 0.001 & 0.620 & 0.020 & 0.794 \\ 
  500 & 1.4 & 0.000 & 0.316 & 0.001 & 0.818 & 0.029 & 0.934 \\ 
  500 & 1.5 & 0.000 & 0.419 & 0.000 & 0.934 & 0.026 & 0.988 \\ 
  500 & 1.6 & 0.000 & 0.523 & 0.000 & 0.975 & 0.022 & 0.999 \\ 
  500 & 1.7 & 0.000 & 0.647 & 0.000 & 0.994 & 0.032 & 1.000 \\ 
  500 & 1.8 & 0.000 & 0.677 & 0.000 & 0.997 & 0.020 & 1.000 \\ 
  500 & 1.9 & 0.000 & 0.761 & 0.000 & 0.999 & 0.025 & 1.000 \\ 
  500 & 2.0 & 0.000 & 0.827 & 0.000 & 1.000 & 0.030 & 1.000 \\ 
  1000 & 1.1 & 0.001 & 0.088 & 0.001 & 0.224 & 0.020 & 0.290 \\ 
  1000 & 1.2 & 0.000 & 0.219 & 0.000 & 0.598 & 0.021 & 0.751 \\ 
  1000 & 1.3 & 0.000 & 0.343 & 0.000 & 0.890 & 0.021 & 0.975 \\ 
  1000 & 1.4 & 0.000 & 0.527 & 0.000 & 0.971 & 0.033 & 0.998 \\ 
  1000 & 1.5 & 0.000 & 0.705 & 0.000 & 0.998 & 0.023 & 1.000 \\ 
  1000 & 1.6 & 0.000 & 0.808 & 0.000 & 1.000 & 0.020 & 1.000 \\ 
  1000 & 1.7 & 0.000 & 0.891 & 0.000 & 1.000 & 0.018 & 1.000 \\ 
  1000 & 1.8 & 0.000 & 0.939 & 0.000 & 1.000 & 0.025 & 1.000 \\ 
  1000 & 1.9 & 0.000 & 0.977 & 0.000 & 1.000 & 0.019 & 1.000 \\ 
  1000 & 2.0 & 0.000 & 0.974 & 0.000 & 1.000 & 0.022 & 1.000 \\ 
   \toprule
\end{tabular}
\caption{Simulation results for the case where IV dose assignments are generated under a partly biased randomization scheme $\mathcal{M}_{\Gamma}$: (I) $\pi_{i} 
\sim \text{Unif} \left[\frac{1}{2}, \frac{\Gamma}{\Gamma + 1} \right]$; and (II)  $\pi_i = \frac{\Gamma}{\Gamma + 1}$.
We report the test level under Theorem 2 and 5 across $1,000$ simulations.} 
\end{table}

\clearpage
\section*{Supplemental Material C: Additional details on the case study}

\begin{table}[ht]
\centering
\caption{Covariate balance before matching. Mean (SD) are reported for continuous variables. Count (percentage) are reported for categorical variables.}
\label{tb: PLCO before match}
\resizebox{\textwidth}{!}{
\begin{tabular}{lllllll}
\toprule
& \multicolumn{2}{c}{Prior to 1997 ($G= a$)} & \multicolumn{2}{c}{After 1997 ($G= b$)}& &\\
 \toprule
 & Control arm & Treatment arm & Control arm & Treatment arm & $p$-value &  \\ 
 \toprule
Sample size &  4210 &  4204 &  4970 &  4978 &  &  \\ 
  Treatment uptake (=Screening) (\%) &     0 ( 0.0)  &  2141 (50.9)  &     0 ( 0.0)  &  3989 (80.1)  & $<$0.001 &  \\ 
  Age (\%) &   &   &   &   & $<$0.001 &  \\ 
  \quad $\leq 60$ &  1031 (24.5)  &  1049 (25.0)  &  2930 (59.0)  &  2915 (58.6)  &  &  \\ 
   \quad $(60,65]$ &  1308 (31.1)  &  1291 (30.7)  &  1092 (22.0)  &  1096 (22.0)  &  &  \\ 
   \quad $(65, 70]$ &  1190 (28.3)  &  1188 (28.3)  &   650 (13.1)  &   663 (13.3)  &  &  \\ 
 \quad  $>70$ &   681 (16.2)  &   676 (16.1)  &   298 ( 6.0)  &   304 ( 6.1)  &  &  \\ 
  Age & 64.75 (5.09) & 64.69 (5.08) & 60.40 (5.26) & 60.42 (5.31) & $<$0.001 &  \\ 
  Sex (=Male) (\%) &  1568 (37.2)  &  1566 (37.3)  &  2160 (43.5)  &  2180 (43.8)  & $<$0.001 &  \\ 
  Race (=Minority) (\%) &   864 (20.5)  &   915 (21.8)  &   704 (14.2)  &   700 (14.1)  & $<$0.001 &  \\ 
  Education (\%)&   &   &   &   & $<$0.001 &  \\ 
 \quad No high school &   585 (13.9)  &   616 (14.7)  &   351 ( 7.1)  &   369 ( 7.4)  &  &  \\ 
 \quad  High school &  1775 (42.2)  &  1639 (39.0)  &  1586 (31.9)  &  1589 (31.9)  &  &  \\ 
 \quad  College or above  &  1850 (43.9)  &  1949 (46.4)  &  3033 (61.0)  &  3020 (60.7)  &  &  \\ 
 Smoking status (\%)&   &   &   &   &  0.663 &  \\ 
 \quad  Non-smoker &  1803 (42.8)  &  1798 (42.8)  &  2101 (42.3)  &  2076 (41.7)  &  &  \\ 
 \quad  Current smoker &   562 (13.3)  &   573 (13.6)  &   700 (14.1)  &   729 (14.6)  &  &  \\ 
 \quad  Former smoker &  1845 (43.8)  &  1833 (43.6)  &  2169 (43.6)  &  2173 (43.7)  &  &  \\ 
  BMI  ($>25$) (\%) &  2771 (65.8)  &  2758 (65.6)  &  3532 (71.1)  &  3492 (70.1)  & $<$0.001 &  \\ 
  Colorectal cancer (=Confirmed cancer) (\%)&    82 ( 1.9)  &    65 ( 1.5)  &    65 ( 1.3)  &    58 ( 1.2)  &  0.012 &  \\ 
 \toprule
\end{tabular}}
\end{table}

\end{document}